\newtheorem{theorem}{Theorem}[section]
\newtheorem{lemma}[theorem]{Lemma}
\newtheorem{claim}[theorem]{Claim}
\newtheorem{corollary}[theorem]{Corollary}
\newtheorem{remark}[theorem]{Remark}
\tikzstyle{v}=[circle,inner sep=0, minimum size =6 pt, line width = 1pt, draw=black, fill=black, text= white]
\tikzstyle{e}=[draw, line width = 1.2pt, draw=black]
\tikzstyle{V}=[circle,inner sep=0, minimum size =30 pt, line width = 1pt, draw=black, fill=white, text= black]
\tikzstyle{VC}=[circle,inner sep=0, minimum size =50 pt, line width = 1pt, draw=black, fill=white, text= black]
\newcommand{\qedclaim}{\hfill $\diamond$ \medskip}
\newenvironment{proofclaim}{\noindent{\em Proof of the claim.}}{\qedclaim}
\newcommand{\unifpos}{{\sc Uniform~POS~CNF~6}}
\newcommand{\keywords}[1]{\textbf{Keywords:} #1}
\title{Complexity of Maker-Breaker Games on Edge Sets of Graphs\thanks{This work was partially supported by the National Research Agency (ANR) project P-GASE (ANR-21-CE48-0001). Declarations of interest: none.}}
\date{}
\author[1]{Eric Duch\^{e}ne}
\author[2]{Valentin Gledel \thanks{Supported by the Kempe Foundation Grant No.~JCK-2022.1 (Sweden).}}
\author[3]{Fionn {Mc Inerney}
\thanks{{\bf Corresponding author}. Email address: fmcinern@gmail.com. Postal address: Technische Universit\"{a}t Wien, Institute of Logic and Computation, Favoritenstra{\ss}e 9-11, E192-01, 1040 Wien, Austria. Supported by the Austrian Science Fund (FWF, project Y1329).}}
\author[4]{Nicolas Nisse}
\author[1]{Nacim Oijid}
\author[1]{Aline Parreau}
\author[5]{Milo\v{s} Stojakovi\'c \thanks{Partly supported by Ministry of Science, Technological Development and Innovation of Republic of Serbia
(Grants 451-03-66/2024-03/200125 \& 451-03-65/2024-03/200125). Partly supported by Provincial Secretariat for Higher Education and Scientific Research, Province of Vojvodina (Grant No.~142-451-2686/2021).}}
\affil[1]{Univ Lyon, CNRS, INSA Lyon, UCBL, Centrale Lyon, Univ Lyon 2, LIRIS, UMR5205, F-69622 Villeurbanne, France}
\affil[2]{Universit\'{e} Savoie Mont Blanc, CNRS, LAMA, F-73000, Chamb\'{e}ry, France}
\affil[3]{Algorithms and Complexity Group, TU Wien, Austria}
\affil[4]{Universit\'{e} C\^{o}te d'Azur, Inria, CNRS, I3S, France}
\affil[5]{Department of Mathematics and Informatics, Faculty of Sciences, University of Novi~Sad, Novi Sad, Serbia}
\begin{document}
\maketitle

\begin{abstract}
We study the algorithmic complexity of Maker-Breaker games played on the edge sets of general graphs. We mainly consider the perfect matching game and the $H$-game. Maker wins if she claims the edges of a perfect matching in the first, and a copy of a fixed graph~$H$ in the second. We prove that deciding who wins the perfect matching game and the $H$-game is \PSPACE-complete, even for the latter in small-diameter graphs if $H$ is a tree. Toward finding the smallest graph $H$ for which the $H$-game is \PSPACE-complete, we also prove that such an $H$ of order~51 and size~57 exists. 

We then give several positive results for the $H$-game. As the $H$-game is already \PSPACE-complete when $H$ is a tree, we mainly consider the case where $H$ belongs to a subclass of trees.
In particular, we design two linear-time algorithms, both based on structural characterizations, to decide the winners of the $P_4$-game in general graphs and the $K_{1,\ell}$-game in trees.
Then, we prove that the $K_{1,\ell}$-game in any graph, and the $H$-game in trees are both \FPT\ parameterized by the length of the game, notably adding to the short list of games with this property, which is of independent interest. 

Another natural direction to take is to consider the $H$-game when $H$ is a cycle. While we were unable to resolve this case, we prove that the related arboricity-$k$ game is polynomial-time solvable. In particular, when $k=2$, Maker wins this game if she claims the edges of any cycle. 
\end{abstract}

\keywords{Maker-Breaker Games, $H$-game, Perfect Matching Game, \PSPACE-hard, \FPT, Computational Complexity}

\section{Introduction}

Positional games are a class of combinatorial games that received a lot of attention recently, with the books~\cite{beck2008combinatorial} and~\cite{hefetz2014positional} giving an overview of the field. Given a finite set $X$ and a family $\mathcal F$ of subsets of $X$, two players alternate claiming unclaimed elements of $X$ until all the elements are claimed. The set $X$ is referred to as the \emph{board}, and the elements of $\mathcal F$ as the \emph{winning sets}. The pair $(X,\mathcal F)$ is often called the \emph{game hypergraph}. 

As for the rules for determining the winner of the game, there are several conventions -- Maker-Breaker games, Avoider-Enforcer games, strong making games, strong avoiding games, {\it etc.} The most studied of them are Maker-Breaker games, first introduced by Erd\H{o}s and Selfridge~\cite{Erdos-selfridge}, where the players are Maker and Breaker. Maker wins the game if she claims all the elements of a set from $\mathcal F$, and Breaker wins otherwise. Unless stated otherwise, it is assumed that Maker plays first and that both players play optimally, {\it i.e.}, they play to win.

There is a number of results on Maker-Breaker games where the board of the game is the edge set $E(G)$ of a graph $G$. For the majority of the games that appear in the literature, the winning sets are all representatives of some predetermined graph structure. The four most prominent examples of the well-studied games of this kind are the perfect matching game, the $H$-game, the connectivity game, and the Hamiltonicity game, where the winning sets are all, respectively, perfect matchings, copies of a fixed graph $H$, spanning trees, and Hamiltonian cycles.

Maker-Breaker games played specifically on graphs were first studied by Chv\'atal and Erd\H{o}s~\cite{chvatal1978biased}, who looked at the $H$-game, the connectivity game, and the Hamiltonicity game played on the edges of the complete graph, and also in the biased setting, where Breaker may claim more than one edge per round, but Maker still only claims one edge per round. These games, along with the perfect matching game, were further studied in a series of papers, with major breakthroughs in~\cite{beck1982remarks}, \cite{bednarska2000biased}, \cite{gebauer2009asymptotic}, and \cite{krivelevich2011critical}, resulting in the finding of the leading term of the threshold bias (the maximum number of edges Breaker may claim each round such that Maker still wins when claiming one edge per round) for all four games more than three decades later.

In fast Maker-Breaker games, Maker's goal is to win as fast as possible, that is, in the fewest number of moves. The length (or duration) of a game in this setting was first studied by Hefetz {\it et al.}~\cite{hefetz2009fast} for the games of connectivity, perfect matching, and Hamiltonicity, with further improvements and generalizations in~\cite{hefetz2009two, mikalavcki2018fast}. These results also have direct implications on the outcomes of strong making games for this same set of games, see, {\it e.g.},~\cite{ferber2011winning}.

Another line of research was initiated by Stojakovi{\'c} and Szab{\'o} in~\cite{stojakovic2005positional}, in which they studied Maker-Breaker games played on the edge set of the Erd\H{o}s-R\'enyi random graph model, where the main question raised and partially answered was to determine the threshold probabilities for Maker's win for the games of connectivity, perfect matching, Hamiltonicity, and the $H$-game. This question was further pursued in a number of papers~\cite{hefetz2009sharp, muller2014threshold, nenadov2016threshold}.

A general overview of results on these and other games on graphs can be found in~\cite{hefetz2014positional}.

\subsection{Complexity of positional games} 

Given a positional game, we want to study the complexity of determining the winner. Schaefer was the first to show that determining the winner of a Maker-Breaker game (appearing under the name POS~CNF) with winning sets of size at most 11 is \PSPACE-complete~\cite{schaefer1978complexity}. Recently, Rahman and Watson reduced the required winning set size to six~\cite{rahman2021}. Shortly thereafter, Galliot {\it et al.} surprisingly proved that the outcome of any Maker-Breaker game with winning sets of size at most three can be determined in polynomial time~\cite{Galliot22}. This, for example, implies that the outcome of the $H$-game, for any $H$ such that $|E(H)|\leq 3$ ({\it e.g.},~the triangle game), can be decided in polynomial time. Another game that can be resolved in polynomial time using known results is the connectivity game. Indeed, Maker wins the connectivity game (playing second) in a graph $G$ if and only if $G$ contains two edge-disjoint spanning trees~\cite{lehman1964solution}, and this graph property can be checked in polynomial time~\cite{tarjan1976edge}. Regarding other conventions, Maker-Maker (a.k.a.~``the strong games'') and Avoider-Enforcer are \PSPACE-complete~\cite{byskov2004maker,Gledel2023}, and Chooser-Picker is \NP-hard~\cite{Csernenszky2012}. Lastly, the complexity of games played on the edge sets of general graphs under some of the conventions above has been considered in~\cite{Slany00}, but only in the case where some of the edges have already been claimed as part of the input. It is worth noting that the complexity of the $H$-game without pre-claimed edges is left as an open problem in~\cite{Slany00}.
 
\subsection{Complexity of short games}
 
From a parameterized complexity perspective, a natural parameter for games is the length (or duration) $k$ of a game, {\it i.e.}, the game ends after $k$ rounds. For example, in Maker-Breaker games, that effectively means that the game ends after Maker's $k^{th}$ move. This topic is of general interest, with Downey and Fellows dedicating an entire section to it, that they refer to as ``short'' or $k$-move games, in their famous textbook on parameterized complexity~\cite{downeyfellows}, where they suggested that games in general should be \AW[*]-complete parameterized by $k$. This claim is supported by the fact that chess~\cite{scott08}, geography~\cite{ADF95}, and particular pursuit-evasion games~\cite{ST10} are \AW[*]-complete parameterized by $k$. With respect to positional games, Bonnet {\it et al.}~\cite{Bonnet2017} recently proved that, when parameterized by $k$, Maker-Breaker games are \W[1]-complete, Maker-Maker games are \AW[*]-complete, and Avoider-Enforcer games are \co-\W[1]-complete. However, when the winning sets are more structured, some Maker-Breaker games are fixed-parameter tractable (\FPT) parameterized by $k$, such as Hex played on a hexagonal grid~\cite{BJS16}, and a generalization of Tic-Tac-Toe called $k$-Connect~\cite{Bonnet2017}. Although, in general, there are very few games known to be \FPT\ parameterized by $k$.

\subsection{Our contributions}

The algorithmic complexity of Maker-Breaker games played on vertex sets of graphs has been studied in general graphs (see, {\it e.g.},~\cite{BFM23,Bonnet2017,BJS16,DGPR18,ET76}), and, as seen above, there is a large literature on those played on edge sets of graphs. Surprisingly, the algorithmic complexity of the latter has only been considered in restricted graph classes like complete graphs and random graphs, or in the case where some of the edges of the graph have already been claimed~\cite{Slany00}. We study the algorithmic complexity of Maker-Breaker games played on the edge sets of general graphs. We mainly consider two of the more prominent such games: the perfect matching game and the $H$-game. In Section~\ref{sec:hardness}, we prove that it is \PSPACE-complete to decide the outcomes of the perfect matching game and the $H$-game, even for the latter in graphs of diameter at most~$6$ when $H$ is a tree (see Figure~\ref{f-H-tree} for an illustration of $H$ in this case). Toward finding the smallest graph $H$ for which the $H$-game is \PSPACE-complete, we also prove that there exists such a graph $H$ of order 51 and size 57 (see Figure~\ref{f-H} for a depiction of $H$ in this case).

We then give several positive results for the $H$-game. Since the $H$-game is already \PSPACE-complete when $H$ is a tree, we first mainly focus on the natural direction of considering the case where $H$ belongs to a subclass of trees. Specifically, in Section~\ref{sec:poly}, we design two linear-time algorithms that are both based on structural characterizations, to decide the winners of the $P_4$-game in general graphs and the $K_{1,\ell}$-game in trees. Our algorithm for the $P_4$-game improves on the algorithm of Galliot {\it et al.}~\cite{Galliot22} for this game\footnote{Recall that Galliot {\it et al.}~\cite{Galliot22} proved that the outcome of any Maker-Breaker game with winning sets of size at most $3$ can be determined in polynomial time. This is the case for the $P_4$-game since $|E(P_4)|=3$.} since their algorithm does not give any structural insight about the graphs in which Maker (Breaker, resp.) wins, and it runs in time $O(|V(\mathcal{H})|^5|E(\mathcal{H})|^2+|V(\mathcal{H})|^6\Delta(\mathcal{H}))$, where $\mathcal{H}$ is the game hypergraph. 

While we were unable to extend our result for the $P_4$-game to even the restricted case of the $P_5$-game in trees, we manage to extend our positive result for the $K_{1,\ell}$-game by considering parameterizations by the length of the game. That is, in Section~\ref{sec:FPT}, we prove that the $K_{1,\ell}$-game in any graph, and the $H$-game in trees are both \FPT\ parameterized by the length of the game. The first result relies on the fact that Maker trivially wins the $K_{1,\ell}$-game if the maximum degree of the graph is at least $2\ell-1$ (Lemma~\ref{lem:highdegree}), and thus, the $K_{1,\ell}$-game needs only to be considered in graphs of bounded maximum degree, while the second result is dependent on the structure of trees. However, most importantly, both results crucially rely on Theorem~\ref{thm:balls}, which states that the $H$-game needs only to be considered in the ball (of edges) of bounded diameter in the length of the game centered at the first edge claimed by Maker, which is interesting on its own since it could lead to other positive results for the $H$-game, and this technique could be generalized to other games on graphs. This is in stark contrast with our result that the $H$-game is \PSPACE-complete in graphs of diameter at most~$6$, even when $H$ is a tree. Furthermore, our result for the $K_{1,\ell}$-game is notably of independent interest since we add to the short list of games that are \FPT\ parameterized by the length of the game, as discussed above. 

We conclude with directions for further work in Section~\ref{sec:conclusion}. In particular, due to our hardness result for the $H$-game, another logical direction for future work is to consider the case where $H$ is a cycle. While we were not able to resolve this case, we prove that the related arboricity-$k$ game is polynomial-time solvable. In particular, when $k=2$, Maker wins this game if she claims the edges of any cycle.

\section{Preliminaries}\label{sec:preliminaries}

Unless stated otherwise, the graphs considered in this paper are finite, simple, and undirected. The vertex and edge sets of a graph $G$ will be denoted by $V(G)$ and $E(G)$, respectively. For any graph $G$ and any two vertices $x,y\in V(G)$, $dist_G(x,y)$ is the length of a shortest path between $x$ and $y$ in $G$. A graph $H$ is a {\em subgraph} of $G$ if $V(H)\subseteq V(G)$ and $E(H)\subseteq E(G)$. Let $V'\subseteq V(G)$. The subgraph {\em induced by $V'$}, denoted by $G[V']$, is the graph with vertex set $V'$ and whose edge set consists of all the edges of $G$ between two vertices of $V'$: $E(G[V']) = \{uv\in E(G) \mid u,v \in V'\}$. If $E'$ is a set of edges, the {\em edge-induced subgraph} $G[E']$ is the subgraph of $G$ whose edge set is $E'$ and whose
vertex set consists of all ends of edges of $E'$. For any integer $\ell\geq 1$, $P_{\ell}$ is the path with $\ell-1$ edges, and $K_{1,\ell}$ is the star with $\ell$ edges.

In this paper, we consider Maker-Breaker games played on the edge set of a graph $G$. In other words, the game hypergraph is a pair  $(E(G), \mathcal F)$, where $\mathcal F\subseteq 2^{E(G)}$. For all of the games we consider, we say that Maker {\em wins} in $G$ if she has a {\em winning strategy} in $G$, that is, regardless of how Breaker plays, Maker can ensure claiming the edges of a winning set $F\in \mathcal{F}$. Similarly, Breaker wins in $G$ if he has a winning strategy in $G$, that is, regardless of how Maker plays, Breaker can ensure claiming at least one edge in each winning set $F\in \mathcal{F}$. Note that, by definition, only one of the two players may have a winning strategy for a given pair  $(E(G), \mathcal F)$.  
We mainly consider the {\em perfect matching game} and the {\em $H$-game}. In the perfect matching game, $\mathcal F$ is the set of all perfect matchings of $G$. Thus, Maker wins if she manages to claim $|V(G)|/2$ vertex-disjoint edges that cover all the vertices of the graph. Note that, during the game, Maker may claim edges incident to other edges she already claimed.
In the $H$-game, $H$ refers to another (fixed) graph and the winning sets are all copies of $H$, that is, a winning set $F\in \mathcal{F}$ is a set of edges such that $G[F]$ is isomorphic to $H$. Note that $G[F]$ is not  necessarily a vertex-induced subgraph of $G$ (there may be more edges in $G$), but its vertices must be in a one-to-one correspondence with the vertex set of $H$.

Most of our results are stated for Maker playing first, but can be extended to Breaker starting.
For example, assume we can check in polynomial time whether Maker wins playing first in a family of graphs closed under edge removal. 
If Breaker starts, then we can scan through all the possible first moves of Breaker, and treat the resulting graph with one edge removed as the base graph for a new game in which Maker starts. 
This gives a polynomial-time algorithm when Breaker starts.

When it is clear what the set of winning sets is, we will use only the graph $G$ to describe a game. We will also need to consider a {\em position} of the game where some edges have already been claimed. We will denote a position by a triplet $(G,E_M,E_B)$, where $E_M$ and $E_B$ are two disjoint sets of edges that correspond to the edges already claimed by Maker and by Breaker, respectively. Note that, in general, $|E_B|\leq |E_M| \leq |E_B|+1$.

We now state some basic definitions and facts about Maker-Breaker games that will be used throughout the paper. The next lemma expresses the fact that the addition of edges can only help Maker, whereas, the deletion of edges is favorable for Breaker.

\begin{lemma}[Folklore]\label{subgraph}
Let $G'$ be a subgraph of $G$, ${\cal F} \subseteq 2^{E(G)}$ a set of winning sets, and ${\cal F}' \subseteq {\cal F} \cap 2^{E(G')}$. If Maker wins in $(E(G'),{\cal F}')$, then she wins in $(E(G),{\cal F})$. 
\end{lemma}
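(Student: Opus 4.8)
The plan is to exhibit an explicit strategy for Maker in the larger game $(E(G),\mathcal F)$ by having her \emph{imitate} her winning strategy in the smaller game $(E(G'),\mathcal F')$. The key observation is that the strategy in $(E(G'),\mathcal F')$ is defined only in terms of edges of $G'$, and that the winning sets of $\mathcal F'$ are all contained in $E(G')$, so Maker never needs to reason about edges outside $G'$ to achieve her goal. First I would fix a winning strategy $\sigma'$ for Maker in $(E(G'),\mathcal F')$ and describe how Maker plays $\sigma$ in $(E(G),\mathcal F)$: she maintains an auxiliary ``imaginary'' copy of the game on $G'$ and keeps the two games synchronized.

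The synchronization is the heart of the argument, and handling Breaker's moves that lie \emph{outside} $E(G')$ is the main point to get right. The plan is as follows. Whenever Breaker claims an edge $e$ in the real game on $G$, if $e\in E(G')$, Maker records it as Breaker's move in the imaginary game on $G'$; if $e\notin E(G')$, Maker simply ignores it, since such an edge belongs to no set of $\mathcal F'$ and is therefore irrelevant to $\sigma'$. In the imaginary game, Maker then lets $\sigma'$ produce a response, which is necessarily an edge of $G'$, and she claims that same edge in the real game. A small bookkeeping point to address is the turn order: because Maker may occasionally need to ``skip'' a response in the imaginary game (for instance, if $\sigma'$ would claim an edge that, due to ignored out-of-$G'$ Breaker moves, has already been handled), she can claim an arbitrary free edge of $G$ as a filler move without harm. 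I would note that having extra claimed edges never hurts Maker, which is precisely the monotonicity intuition that Lemma~\ref{subgraph} itself formalizes, so a clean way to phrase this is that Maker can always stay at least as far ahead in the real game as $\sigma'$ dictates in the imaginary one.

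The verification step is then to check the invariant that, at every point, every edge $\sigma'$ has claimed in the imaginary game has also been claimed by Maker in the real game, and every edge of $E(G')$ claimed by Breaker in the real game has been registered against $\sigma'$ in the imaginary game. Maintaining this invariant shows that the imaginary game is a legitimate play of $(E(G'),\mathcal F')$ in which Maker follows the winning strategy $\sigma'$; hence Maker eventually occupies all edges of some $F\in\mathcal F'$ in the imaginary game. By the invariant, Maker has claimed all edges of $F$ in the real game as well, and since $\mathcal F'\subseteq\mathcal F$, the set $F$ is a winning set in $(E(G),\mathcal F)$, so Maker wins there.

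I expect the main obstacle to be purely a matter of careful formalization rather than mathematical difficulty: one must argue that the imaginary game is always a \emph{valid} Maker-Breaker play (no edge claimed twice, the alternation of moves respected) despite Maker ignoring Breaker's out-of-$G'$ moves and inserting filler moves. The cleanest way to dispatch this is to observe that ignoring an out-of-$G'$ Breaker edge only means the imaginary game sees one fewer Breaker move than the real game, which can only help Maker, and that any edge of $G'$ already claimed (by either player) when $\sigma'$ wishes to use it cannot have been claimed by Breaker in the imaginary game by the invariant, so it was claimed by Maker, and $\sigma'$ would not request an already-owned edge. Thus the imitation is always well-defined, completing the proof.
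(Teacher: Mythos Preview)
The paper does not give a proof of this lemma at all: it is stated as ``Folklore'' and left without argument. Your proposal supplies the standard strategy-imitation proof, which is the right approach and is correct in substance.

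One small point of presentation: your description of the case $e\notin E(G')$ (``Maker simply ignores it\ldots then lets $\sigma'$ produce a response'') is slightly inconsistent, since after ignoring such a move it is still Breaker's turn in the imaginary game and $\sigma'$ has nothing to respond to. You patch this with filler moves and acknowledge the turn-order bookkeeping as the main formalization issue, which is fair. A cleaner way to set it up is: whenever Breaker plays $e\notin E(G')$, Maker pretends in the imaginary game that Breaker claimed some arbitrary still-unclaimed edge of $G'$. This keeps the imaginary game a legal alternating play, $\sigma'$ always has a well-defined response $f\in E(G')$, and your invariant (Maker's imaginary edges $\subseteq$ Maker's real edges; Breaker's real edges in $E(G')$ $\subseteq$ Breaker's imaginary edges) guarantees $f$ is unclaimed in the real game. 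This is just a tidier packaging of the same idea you describe, not a different argument.
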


The {\em union} of two games $(E(G_1),\mathcal F_1)$ and $(E(G_2),\mathcal F_2)$ is simply the game $(E(G_1)\cup E(G_2),\mathcal F_1\cup \mathcal F_2)$. Note that $E(G_1)$ and $E(G_2)$ can intersect. It is not difficult to prove the following.

\begin{lemma}[Folklore]\label{lem:union}
    Let $(E(G_1),\mathcal F_1)$ and $(E(G_2),\mathcal F_2)$ be Maker-Breaker games. If Maker wins in one of the two games, then she wins in their union $(E(G_1)\cup E(G_2),\mathcal F_1\cup \mathcal F_2)$. If $E(G_1) \cap E(G_2) = \emptyset$, then the converse is also true.
\end{lemma}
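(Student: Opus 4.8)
The plan is to prove the two implications separately, handling the ``Maker wins one game $\Rightarrow$ Maker wins the union'' direction as an instance of Lemma~\ref{subgraph}, and the converse via a parallel (pairing) strategy for Breaker.

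For the forward direction, assume without loss of generality that Maker wins $(E(G_1),\mathcal F_1)$. Let $G$ be the graph with edge set $E(G_1)\cup E(G_2)$, so that $G_1$ is a subgraph of $G$, and put $\mathcal F=\mathcal F_1\cup\mathcal F_2$. Since $\mathcal F_1\subseteq 2^{E(G_1)}$ (being the winning sets of a game on board $E(G_1)$) and $\mathcal F_1\subseteq\mathcal F$, we have $\mathcal F_1\subseteq\mathcal F\cap 2^{E(G_1)}$, so taking $G'=G_1$ and $\mathcal F'=\mathcal F_1$ satisfies the hypotheses of Lemma~\ref{subgraph}. That lemma then immediately yields that Maker wins $(E(G_1)\cup E(G_2),\mathcal F_1\cup\mathcal F_2)$; the case where Maker wins $(E(G_2),\mathcal F_2)$ is symmetric. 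Note that this direction requires no disjointness assumption.

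For the converse, suppose $E(G_1)\cap E(G_2)=\emptyset$ and, for contradiction, that Maker wins the union but wins neither $(E(G_1),\mathcal F_1)$ nor $(E(G_2),\mathcal F_2)$. Since Maker--Breaker games are determined (a completed play either does or does not contain a winning set, so no draw is possible), Breaker has a winning strategy $\tau_i$ in $(E(G_i),\mathcal F_i)$ for each $i\in\{1,2\}$. I would then have Breaker play the two games in parallel: whenever Maker claims an edge $e$, disjointness guarantees that $e$ lies in exactly one board, say $E(G_i)$, and Breaker answers on that same board according to $\tau_i$, viewing the restriction of the play to $E(G_i)$ as a play of game $i$. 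Because Breaker claims an edge of $E(G_i)$ precisely in response to each of Maker's claims in $E(G_i)$, this restricted play is a legal Maker-first play of game $i$ in which Breaker follows $\tau_i$, and the move prescribed by $\tau_i$ is always unclaimed in the union (the claimed edges of $E(G_i)$ are exactly the moves of this restricted play). At the end, every winning set $F\in\mathcal F_1\cup\mathcal F_2$ lies entirely in a single board $E(G_i)$; since Breaker followed the winning strategy $\tau_i$ on that board, he owns at least one edge of $F$. Hence Breaker blocks every winning set and wins the union, contradicting our assumption.

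The one delicate point, and the main obstacle, is the bookkeeping in the situation where $\tau_i$ has no legal response because Maker's move has just filled board $i$. In that case game $i$ is already over and, as $\tau_i$ is winning, all of $\mathcal F_i$ is already blocked, so Breaker may safely make an arbitrary move on the other board. What then needs checking is that such surplus moves never disturb the other simulated game---equivalently, that extra claimed edges only ever help Breaker, the Breaker-side analogue of Lemma~\ref{subgraph}. This is routine: one can always have Breaker reinterpret an already-claimed edge as a ``free'' claim and play an arbitrary unclaimed edge in its place whenever $\tau_j$ later prescribes it, so that the edges $\tau_j$ dictates remain among Breaker's claims throughout.
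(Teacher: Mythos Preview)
The paper does not actually supply a proof of Lemma~\ref{lem:union}: it is stated as folklore, preceded only by the remark that it ``is not difficult to prove.'' Your argument is correct and is exactly the standard one: the forward direction is an instance of Lemma~\ref{subgraph}, and the converse is the usual parallel-play strategy for Breaker on disjoint boards, with the well-known ``extra moves never hurt Breaker'' fix for the parity/overflow case. Nothing to add.
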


A \emph{pairing strategy} is a frequently used type of strategy in positional games. Suppose that a subset of edges is partitioned into disjoint pairs. Then, the pairing strategy dictates that, during the game, whenever their opponent claims one edge of a pair, the player responds by claiming the other one. We will use pairing strategies in several places, sometimes as ingredients in more elaborate strategies.

Erd\H{o}s and Selfridge~\cite{Erdos-selfridge} gave a criterion for Breaker to win:

\begin{theorem}[Erd\H{o}s-Selfridge Criterion \cite{Erdos-selfridge}]\label{ES-criterion}
For any Maker-Breaker game with winning sets $\mathcal{F}$, if $\sum_{F\in \mathcal F} 2^{-|F|}<\frac{1}{2}$, then Breaker wins playing second.
\end{theorem}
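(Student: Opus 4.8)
The plan is to prove this classical statement via a potential-function argument driven by a greedy strategy for Breaker. First I would attach a weight to each winning set relative to the current position: call a set $F$ \emph{alive} if Breaker has claimed none of its elements, and give each alive $F$ the weight $2^{-a(F)}$, where $a(F)$ is the number of elements of $F$ not yet claimed by Maker (equivalently, the free elements of $F$, since an alive set contains no Breaker element); dead sets get weight $0$. The potential $\Phi$ of a position is the sum of these weights over all sets. At the start, every set is alive with $a(F)=|F|$, so $\Phi_{\text{start}}=\sum_{F\in\mathcal F}2^{-|F|}<\tfrac12$ by hypothesis.

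Next I would fix Breaker's strategy: on each of his turns he claims a free element $b$ maximizing its \emph{danger} $w(b):=\sum 2^{-a(F)}$, summed over alive sets $F$ containing $b$, with weights taken in the current position. Two elementary effects govern the analysis. When Maker claims a free element $m$, every alive set through $m$ has $a(F)$ drop by one, so its weight doubles and $\Phi$ increases by exactly $w(m)$ (the danger of $m$ just before that move). When Breaker claims $b$, every alive set through $b$ becomes dead, so $\Phi$ decreases by exactly $w(b)$, while the weight of every set not containing $b$ is untouched.

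The core step is to show that Breaker's greedy play makes $\Phi$ non-increasing across the checkpoints taken immediately after each Maker move. Pairing Breaker's $i$-th move with Maker's $(i+1)$-th move, the net change of $\Phi$ between consecutive checkpoints is $\Delta_{i+1}-w(b_i)$, where $\Delta_{i+1}$ is the increase caused by Maker and $w(b_i)$ is the decrease caused by Breaker. The key inequality is $\Delta_{i+1}\le w(b_i)$. Indeed, the element $m_{i+1}$ that Maker is about to claim was free when Breaker chose $b_i$, so $w(m_{i+1})\le w(b_i)$ by greediness; moreover, since every set surviving Breaker's move avoids $b_i$, their weights are unchanged by it, hence the increase $\Delta_{i+1}$ produced by claiming $m_{i+1}$ equals the post-$b_i$ danger of $m_{i+1}$, which is at most its pre-$b_i$ danger $w(m_{i+1})$. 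Chaining these bounds gives $\Delta_{i+1}\le w(b_i)$, so the potential at Maker-move checkpoints never exceeds its value after Maker's first move; and for that first move $\Delta_1\le\Phi_{\text{start}}$, so the potential right after Maker's first move is at most $2\Phi_{\text{start}}<1$.

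I would then close by contradiction. If Maker ever completed a winning set $F$, then at that instant, which is necessarily just after one of her moves, $F$ would be alive with $a(F)=0$, contributing weight $2^{0}=1$ and forcing $\Phi\ge 1$, contradicting $\Phi<1$ at every Maker-move checkpoint. Therefore Breaker's greedy strategy prevents Maker from ever claiming a full winning set, so Breaker wins playing second. The main obstacle is establishing the central inequality $\Delta_{i+1}\le w(b_i)$, and in particular the careful bookkeeping that aligns the positions in which the two dangers are measured; the crucial enabling observation is that Breaker's move leaves the weights of all surviving sets unchanged, which is exactly what validates comparing the dangers via the greedy choice.
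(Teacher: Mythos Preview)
Your argument is the standard potential-function proof of the Erd\H{o}s--Selfridge criterion and is correct: the bookkeeping showing that Breaker's greedy move offsets Maker's subsequent move is handled properly, and the bound $\Phi\le 2\Phi_{\text{start}}<1$ after Maker's first move is exactly what is needed. Note that the paper does not supply its own proof of this theorem; it is quoted as a classical result from~\cite{Erdos-selfridge}, so there is nothing to compare against beyond observing that your proof is the usual one.
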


We end this preliminary section with a lemma that states that if two edges are ``similar'', we can assume that both players will each claim one.

\begin{lemma}\label{super lemma}
Let $G$ be a (not necessarily simple) graph and $\mathcal{F}$ the set of the winning sets of a Maker-Breaker game played in $G$.
Let $(G,E_M,E_B)$ be a position of the game and $e_1, e_2$ two unclaimed edges of $G$ such that the two sets $\{F \setminus (E_M \cup \{e_1,e_2\}) \mid e_i \in F \in \mathcal{F} \text{ and } F\cap E_B = \emptyset \}$, for $i\in \{1,2\}$, (the new winning sets for Maker after claiming $e_i$) are the same. Then, $(G, E_M, E_B)$ and $(G, E_M \cup \{e_1\}, E_B \cup \{e_2\})$ have the same outcome, regardless of who plays next.
\end{lemma}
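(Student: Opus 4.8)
The plan is to prove that the two positions have the same winner by a \emph{pairing argument} built on the idea that the hypothesis makes $e_1$ and $e_2$ interchangeable for Maker: claiming either of them leaves her with exactly the same collection of winning sets still to complete. Write $A = (G,E_M,E_B)$ and $B = (G, E_M \cup \{e_1\}, E_B \cup \{e_2\})$, and introduce the \emph{twin} position $B' = (G, E_M \cup \{e_2\}, E_B \cup \{e_1\})$, in which the roles of $e_1$ and $e_2$ are swapped. I would establish that the outcome of $A$ equals that of $B$ in two stages: first that $B$ and $B'$ have the same outcome, and then that the winner of $B$ (equivalently, of $B'$) also wins $A$.

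For the first stage, I would compute, for each of $B$ and $B'$, the family of \emph{residual winning sets}, i.e.\ for every $F \in \mathcal F$ not yet blocked by Breaker, the set of edges Maker still has to claim. In $B$, only the winning sets $F$ with $e_2 \notin F$ survive, and the residual family is $\{F \setminus (E_M \cup \{e_1\}) \mid F \cap E_B = \emptyset,\ e_2 \notin F\}$; in $B'$ it is the analogous family with $e_1$ and $e_2$ exchanged. Since $e_2 \notin F$ in the first family and $e_1 \notin F$ in the second, both can be rewritten with $E_M \cup \{e_1,e_2\}$ removed, and the hypothesis that the two sets $\{F \setminus (E_M \cup \{e_1,e_2\}) \mid e_i \in F \in \mathcal F,\ F \cap E_B = \emptyset\}$ coincide is exactly what is needed to match these two residual families set for set. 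Thus $B$ and $B'$ are the \emph{same} Maker-Breaker game on the same board of free edges, and so they have the same outcome regardless of who moves next.

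For the second stage, let $W$ be the winner of $B$, hence also of $B'$. In position $A$, $W$ adopts the pairing $\{e_1,e_2\}$ (whenever the opponent claims one of $e_1,e_2$, $W$ immediately claims the other) while following a winning strategy for the residual game of $B$ on the board of free edges other than $e_1,e_2$. The pairing guarantees that $W$ ends up owning exactly one of $e_1,e_2$ and the opponent the other, landing the play in either $B$ or $B'$, both won by $W$. If $W$ is Breaker, owning one of $e_1,e_2$ already blocks every winning set containing that edge, while the remaining live winning sets have their residual targets blocked by his residual $B$-strategy on the free board; if $W$ is Maker, she completes a residual target on the free board and finishes an actual set of $\mathcal F$ using the single paired edge she secured, such a set existing by the matching established in the first stage.

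The main obstacle is the first stage: converting the set-equality hypothesis into the assertion that $B$ and $B'$ are the same game. The delicate point is the treatment of winning sets that contain \emph{both} $e_1$ and $e_2$, interacting with the already-claimed edges $E_M$, since such sets can a priori make the residual families of $B$ and $B'$ differ even when the stated hypothesis holds; controlling them is precisely where the structure of $\mathcal F$ enters (for the perfect matching game and the $H$-game all winning sets have the same size, so none contains another, which rules out these pathologies). The secondary, more routine, difficulty is the bookkeeping in the second stage: checking that inserting a pair-exchange preserves the alternation of moves between the real game $A$ and the simulated residual game, that $W$ indeed secures exactly one of $e_1,e_2$, and that the endgame in which only $e_1,e_2$ remain causes no trouble.
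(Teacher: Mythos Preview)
Your overall approach coincides with the paper's: both let the winner of $B$ play in $A$ by pairing $\{e_1,e_2\}$ and otherwise simulating the $B$-strategy on the remaining edges. Your Stage~1 (showing that $B$ and its twin $B'$ have identical residual games) simply makes explicit what the paper uses implicitly when, at the end of Maker's argument, it replaces a completed winning set $F\ni e_1$ by some $F'\ni e_2$.

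You are right to single out winning sets containing \emph{both} $e_1$ and $e_2$ as the real obstacle, and in fact the lemma is false in full generality precisely because of them. Take $E(G)=\{e_1,e_2,e_3\}$, $E_M=E_B=\emptyset$, and $\mathcal F=\{\{e_1,e_2,e_3\},\{e_1,e_3\}\}$: the hypothesis holds (both sides equal $\{\{e_3\}\}$), yet Maker wins $B=(G,\{e_1\},\{e_2\})$ by claiming $e_3$, while Breaker wins $A=(G,\emptyset,\emptyset)$ regardless of Maker's first move. The paper's proof contains exactly the same gap: when it asserts ``there exists $F'\in\mathcal F$ such that $F\setminus(E_M\cup\{e_1\})=F'\setminus(E_M\cup\{e_2\})$'', it is tacitly assuming $e_1\notin F'$, which the stated hypothesis does not provide. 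So your diagnosis is actually sharper than the paper's own treatment. In the two places the paper applies the lemma the obstacle evaporates for structural reasons: for parallel edges in the perfect matching reduction no matching contains both, and for the two pendent leaves at $w$ in the $K_{1,\ell}$-game the residuals of sets using both leaves have strictly smaller cardinality than those using just one, so no collision is possible. Your proposed remedy (``all winning sets have the same size, so none contains another'') is in the right spirit but is not by itself the correct extra hypothesis; what is really needed is that every residual coming from a set through $e_i$ is also realized by a set through $e_i$ that avoids $e_{3-i}$.
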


\begin{proof}
Let $G$ be a (not necessarily simple) graph, $\mathcal{F}$ the winning sets of a Maker-Breaker game played in $G$, and $(G,E_M, E_B)$ some position.
Let $e_1, e_2$ be two unclaimed edges such that the sets $\{F \setminus (E_M \cup \{e_1,e_2\}) \mid e_i \in F\in \mathcal{F} \text{ and } F\cap E_B = \emptyset \}$ are equal for $i\in \{1,2\}$. Note that a player plays next in $(G, E_M , E_B)$ if and only if they play next in $(G, E_M \cup \{e_1\}, E_B \cup \{e_2\})$.

Suppose first that Maker wins in $(G, E_M \cup \{e_1\}, E_B \cup \{e_2\})$ and let $\mathcal{S}$ be a corresponding winning strategy for her. We define her winning strategy $\mathcal{S}'$ in $(G, E_M, E_B)$ as follows:

\begin{itemize}
    \item If it is Maker's turn, then she claims the edge $e_0$ she would have claimed by $\mathcal{S}$ in $(G, E_M \cup \{e_1\}, E_B \cup \{e_2\})$.
    
    \item If Breaker claims an edge in $\{e_1,e_2\}$, then she claims the second edge in $\{e_1,e_2\}$.
    
    \item If Breaker claims any other edge $e$, then she claims the edge $e'$ she would have claimed by $\mathcal{S}$ if Breaker had claimed $e$ in $(G, E_M \cup \{e_1\}, E_B \cup \{e_2\})$.

    \item If $e_1$ and $e_2$ are the only unclaimed edges and it is Maker's turn, then she claims $e_1$.
\end{itemize}

Since $\mathcal{S}$ is a winning strategy, there exists a set of edges $F\in \mathcal{F}$ for which Maker will claim all the edges following $\mathcal{S}$ in $(G, E_M\cup \{e_1\}, E_B\cup\{e_2\})$. If $F \cap \{e_1,e_2\} = \emptyset$, then the same set of edges $F$ are claimed following $\mathcal{S}'$ in $(G, E_M, E_B)$, and so, she wins in $(G, E_M, E_B)$. Otherwise, $F \cap \{e_1,e_2\} = \{e_1\}$, as $e_2$ is attributed to Breaker. By the construction of $\mathcal{S}'$, Maker either claimed $e_1$ or $e_2$. If she claimed $e_1$, then she claimed all the edges of $F$, and therefore, she wins in $(G, E_M, E_B)$. Otherwise, she claimed $e_2$, and by the definitions of $e_1$ and $e_2$, there exists a set $F' \in \mathcal{F}$ such that $F\setminus (E_M \cup \{e_1\})$ = $F' \setminus (E_M \cup \{e_2\})$. Thus, she claimed all the edges of $F'$ and she wins in $(G, E_M, E_B)$.

Reciprocally, suppose now that Breaker wins in $(G, E_M \cup \{e_1\}, E_B \cup \{e_2\})$ and let $\mathcal{S}$ be a corresponding winning strategy for him. We define his winning strategy $\mathcal{S}'$ in $(G, E_M, E_B)$ as follows:

\begin{itemize}
    \item If it is Breaker's turn, then he claims the edge $e_0$ he would have claimed by $\mathcal{S}$ in  $(G, E_M \cup \{e_1\}, E_B \cup \{e_2\})$.
    
    \item If Maker claims an edge in $\{e_1,e_2\}$, then he claims the second edge in $\{e_1,e_2\}$.
    
    \item If Maker claims any other edge $e$, then he claims the edge $e'$ he would have claimed by $\mathcal{S}$ if Maker had claimed $e$ in $(G, E_M \cup \{e_1\}, E_B \cup \{e_2\})$.

    \item If $e_1$ and $e_2$ are the only unclaimed edges and it is Breaker's turn, then he claims $e_1$.
\end{itemize}

Since $\mathcal{S}$ is a winning strategy, for all $F\in \mathcal{F}$, there exists an edge $e_B \in F$ that Breaker will claim following $S$ in $(G, E_M \cup \{e_1\}, E_B \cup \{e_2\})$. Since, by definition, the two sets $\{F \setminus (E_M \cup \{e_1,e_2\})  | e_i \in F\in \mathcal{F} \text{ and } F\cap E_B = \emptyset \}$ are equal for $i\in\{1,2\}$, we can suppose that $e_B\neq e_2$, as otherwise $F \cup\{e_1\} \setminus \{e_2\}$ would be a winning set claimed by Maker. Therefore, as any edge that would have been claimed by Breaker by $\mathcal{S}$ has been claimed by him by $\mathcal{S}'$, he has claimed at least one edge in each winning set, and Breaker wins in $(G, E_M, E_B)$.
\end{proof}

\section{Hardness Results}\label{sec:hardness}

In this section, we prove that it is \PSPACE-complete to decide who wins the perfect matching game and the $H$-game via reductions from \unifpos\ that is known to be \PSPACE-complete~\cite{rahman2021}. Specifically, \unifpos\ is a game played on a CNF formula $\mathcal{F}$ over a set of variables $X=\{X_1,\ldots,X_n\}$, where all the clauses contain exactly six variables, each variable appears in its positive form, and each variable appears in at least one clause. Two players, named Satisfier and Falsifier, respectively, alternate claiming a variable that has not been claimed yet, with Satisfier playing first. Variables claimed by Satisfier are set to true, and those claimed by Falsifier are set to false. Satisfier wins if and only if $\mathcal{F}$ is true once every variable has been claimed. The associated decision problem, {\it i.e.}, deciding whether Satisfier wins or not, is denoted by \unifpos. Note that, in~\cite{rahman2021}, it is also proven that \unifpos\ remains \PSPACE-complete, even if Falsifier goes first and/or the number of variables $n$ is odd, which we make use of in our reductions.

\subsection{Perfect-matching game}

We first prove that it is \PSPACE-complete to determine the outcome of the perfect matching game. To simplify the reduction, the proof will be done allowing for parallel edges, that the following lemma enables to remove, see Figure~\ref{fig:double edge}.

\begin{lemma}\label{remove double edge}
Let $G$ be any graph containing two parallel edges $e_1$ and $e_2$ connecting two of its vertices $u$ and $v$. Then, there exists a complete bipartite graph $H$, of order $O(1)$, with bipartition $(A,B)$ such that, if $e_1$ and $e_2$ are removed from $G$, and a copy of $H$ is added instead, where $u$ ($v$, respectively) is made adjacent to two vertices of $A$ ($B$, respectively), then the outcome of the perfect matching game in the resulting graph $G'$ is the same as that in $G$.
\end{lemma}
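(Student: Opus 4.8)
The plan is to reduce both $G$ and $G'$ to the same residual game, namely the perfect matching game $\Gamma_R$ played on $R := G - \{u,v\}$ with Maker to move first, and then conclude that $G$ and $G'$ have the same outcome. First I would apply Lemma~\ref{super lemma} to the parallel edges $e_1,e_2$ themselves. Since $e_1$ and $e_2$ have the same endpoints $u,v$, replacing one by the other in any perfect matching yields another perfect matching with exactly the same remaining edges; hence the two residual families $\{F\setminus(E_M\cup\{e_1,e_2\})\mid e_i\in F\in\mathcal F,\ F\cap E_B=\emptyset\}$, for $i\in\{1,2\}$, coincide, so the hypothesis of Lemma~\ref{super lemma} holds. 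Therefore the game in $G$ has the same outcome as the position $(G,\{e_1\},\{e_2\})$, in which Maker has already matched $u$ to $v$. As every perfect matching of $G$ containing $e_1$ is $\{e_1\}$ together with a perfect matching of $R$, this position is exactly $\Gamma_R$. Thus $G$ and $\Gamma_R$ have the same outcome.

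It then remains to fix the bipartition sizes $|A|=|B|$ of the complete bipartite gadget $H$ (a constant) and to show that $G'$ also has the same outcome as $\Gamma_R$. The key structural observation is that in $G'$ the gadget vertices $A\cup B$ have neighbours only inside $\{u,v\}\cup A\cup B$; hence in every perfect matching of $G'$ the number of the two vertices $u,v$ that are matched outside the gadget is even, so either both of $u,v$ are matched inside the gadget, or both are matched into $R$ while $A\cup B$ is internally matched. I would then prove the two directions separately. If Maker wins $\Gamma_R$, she plays her winning $R$-strategy on $R$ and, in parallel, a pairing-based forcing strategy on the constant-size gadget that guarantees she claims a perfect matching of $\{u,v\}\cup A\cup B$ with $u$ and $v$ matched inside; the two parts use disjoint edges, so by Lemma~\ref{lem:union} she claims a perfect matching of all of $G'$. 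Conversely, if Breaker wins $\Gamma_R$, he runs his winning $R$-strategy while also maintaining a pairing on the edges of the complete bipartite core that prevents Maker from ever owning a self-matching of $A\cup B$, thereby ruling out the ``both outside'' configuration. Every perfect matching Maker could complete must then match $u,v$ inside the gadget, so the restriction of her claimed edges to $R$ would have to be a perfect matching of $R$, which Breaker's $R$-strategy forbids; Maker's moves on the gadget and on the boundary edges incident to $u$ and $v$ are wasted against this strategy, and Breaker answers them with free moves, which never hurt him by a standard extra-move argument.

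The main obstacle is making the gadget strategically transparent in both directions at once with a single constant size. On the one hand, $|A|=|B|$ must be large enough that Maker wins, as first player, the internal perfect matching sub-game on $\{u,v\}\cup A\cup B$ (in which $u,v$ are forced inside) -- a finite game whose outcome I would pin down explicitly; the $2\times 2$ core is already too small, since there Breaker has a pairing that blocks every internal perfect matching. On the other hand, Breaker must be able to simultaneously deny the ``both outside'' escape through the boundary edges at $u$ and $v$ and execute his $R$-strategy, which requires a core pairing compatible with his responses to Maker's boundary moves. Verifying that one choice of the constant $|A|=|B|$ achieves both -- Maker can force the internal matching, yet cannot exploit the outside edges of $u$ and $v$ to obtain any winning route absent from $\Gamma_R$ -- is the crux; once the explicit pairings are exhibited, equality of outcomes with $\Gamma_R$, and hence with $G$, follows.
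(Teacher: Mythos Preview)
Your reduction of $G$ to the residual game $\Gamma_R$ on $R=G\setminus\{u,v\}$ is incorrect. After applying Lemma~\ref{super lemma}, the position $(G,\{e_1\},\{e_2\})$ is \emph{not} the perfect matching game on $R$: Maker is under no obligation to use $e_1$ in her matching. The winning sets in that position are all perfect matchings of $G$ avoiding $e_2$; those that contain $e_1$ restrict to perfect matchings of $R$, but those that do not are perfect matchings of $G-\{e_1,e_2\}$ covering $u$ and $v$ through other incident edges, and the board still contains every such edge. For a concrete counterexample, take $V(G)=\{u,v,a,b,c,d\}$ with parallel pairs between each of $uv$, $ua$, $vb$, and single edges $ab$, $cd$. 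Here $R$ has the unique perfect matching $\{ab,cd\}$, which Breaker destroys in one move, so Maker loses $\Gamma_R$; yet Maker wins in $G$, since after all three parallel pairs are split via Lemma~\ref{super lemma} she only needs to claim $cd$ to complete the matching $\{ua,vb,cd\}$.

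This is not a cosmetic slip, because it renders your two gadget requirements incompatible with the lemma itself. You ask that Breaker be able to block the ``both outside'' configuration in $G'$ by a pairing that prevents any internal matching of $A\cup B$; but that configuration is precisely the analogue of the option Maker genuinely has in $G$ (matching $u,v$ outside the parallel pair), so suppressing it makes $G'$ strictly harder for Maker than $G$ --- in the counterexample above your $G'$ would be a Breaker win while $G$ is a Maker win. The paper takes the opposite route: rather than collapsing both games to $\Gamma_R$, it makes the gadget large enough (via the Erd\H{o}s--Selfridge criterion applied to Hall-type cuts) that Maker, playing \emph{second} on it, simultaneously secures a perfect matching of $H$ \emph{and} of $H\setminus\{u_i,v_j\}$ for every $i,j\in\{1,2\}$. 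Together with the boundary pairings $\{uu_1,uu_2\}$ and $\{vv_1,vv_2\}$, this preserves for Maker in $G'$ exactly the two options she had in $G$: either absorb $u,v$ into the gadget (so a perfect matching of $G\setminus\{u,v\}$ suffices) or match the gadget internally and leave $u,v$ to be covered in $G-\{e_1,e_2\}$.
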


\begin{figure}
    \centering
    \includegraphics[scale=0.8]{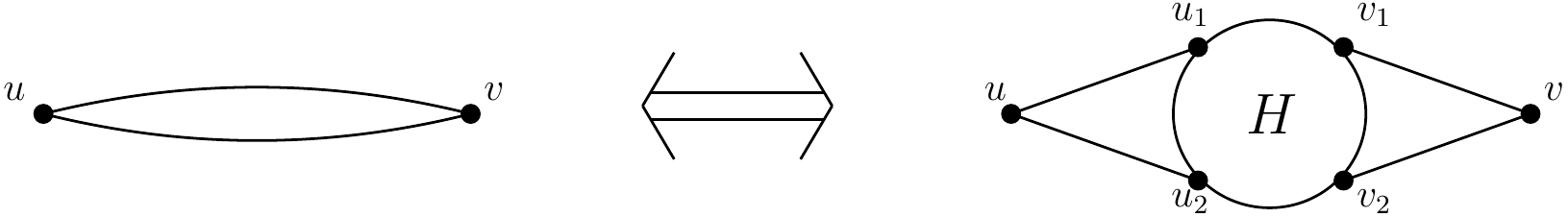}
    \caption{How parallel edges are removed in the perfect matching game in Lemma~\ref{remove double edge}.}
    \label{fig:double edge}
\end{figure}

\begin{proof}
With the parallel edges $e_1$ and $e_2$ in $G$, Maker can clearly claim the edge $uv$ ``for free'', independent of the play in the rest of the graph. That means that, playing in $G$, she has two possibilities to claim a perfect matching: 
\begin{itemize}
    \item[(i)] claim a perfect matching in $G-\{e_1, e_2 \}$, or
    \item[(ii)] claim a perfect matching in $G\setminus\{u,v\}$.
\end{itemize}

Let $H$ be a balanced complete bipartite graph with parts $A$ and $B$, and let $u_1, u_2\in A$ and $v_1, v_2\in B$. We will show that when $|A|=|B|=k$, for a suitable constant $k$ to be chosen later, Maker, playing second, has a strategy in $H$ that satisfies the following five conditions:
\begin{itemize}
    \item Maker has a perfect matching in $H$, and
    \item for every $i,j \in \{1,2\}$, Maker has a perfect matching in $H \setminus \{u_i, v_j\}$.
\end{itemize}

Once this is shown, forming $G'$ by replacing the parallel edges $e_1$ and $e_2$ in $G$ by $H$, and adding the two pairs of edges connecting $u$ to $u_1$ and $u_2$, and $v$ to $v_1$ and $v_2$, as depicted in Figure~\ref{fig:double edge}, finishes the proof. Indeed, assume Maker, playing second, has a strategy in $H$ to satisfy the above-mentioned five conditions, and, on top of that, in $G'$, she pairs $uu_1$ with $uu_2$, and $vv_1$ with $vv_2$. Then, she can claim a perfect matching in $G'$ if and only if she can satisfy (i) or (ii) in $G$, which corresponds to her being able to claim a perfect matching in $G$.

To show that Maker can satisfy the five conditions playing second in $H$, we define an auxiliary positional game in $H$. Let 
\[
{\mathcal F}_H := \left\{ E(X,Y) \mid X\subseteq A,\, Y\subseteq B,\, |X|+|Y| = k+1 \right\},
\]
and, for every $i,j \in \{1,2\}$, let
\[
{\mathcal F}_{i,j} := \left\{ E(X,Y) \mid X\subseteq A\setminus\{u_i\},\, Y\subseteq B\setminus \{v_j\},\, |X|+|Y| = k \right\}.
\]
In the auxiliary game, Maker will assume the role of Breaker (to avoid confusion, we will call this player Auxiliary Breaker), trying to claim one edge in every winning set in $\mathcal F := {\mathcal F}_H \cup \left( \cup_{i,j \in \{1,2\}} {\mathcal F}_{i,j} \right)$. If she achieves that, then Hall's condition~\cite{Hall35} for the existence of a perfect matching in a bipartite graph implies that Maker will have a claimed perfect matching in $H$, as well as in $H \setminus \{u_i, v_j\}$, for every  $i,j \in \{1,2\}$. 

To show that Auxiliary Breaker can win the auxiliary game in $H$, we apply the Erd\H{o}s-Selfridge Criterion (see Theorem \ref{ES-criterion}):
\begin{align*}
\sum_{E'\in \mathcal F} 2^{-|X|} &\leq \sum_{\ell=1}^k \binom{k}{\ell} \binom{k}{k-\ell+1} 2^{-\ell(k-\ell+1)} + 4\sum_{\ell=1}^{k-1} \binom{k-1}{\ell} \binom{k-1}{k-\ell} 2^{-\ell(k-\ell)}\\
&\leq \sum_{\ell=1}^k \binom{k}{\ell} \binom{k}{k-\ell+1} 2^{-\ell(k-\ell)} + 4\sum_{\ell=1}^{k} \binom{k}{\ell} \binom{k}{k-\ell} 2^{-\ell(k-\ell)}\\
&= \sum_{\ell=1}^k \binom{k}{\ell} \binom{k}{\ell-1} 2^{-\ell(k-\ell)} + 4\sum_{\ell=1}^{k} \binom{k}{\ell}^2 2^{-\ell(k-\ell)}\\
&\leq 2\sum_{\ell=1}^{\lceil k/2 \rceil} \binom{k}{\ell}^2 2^{-\ell(k-\ell)} + 8\sum_{\ell=1}^{\lceil k/2 \rceil} \binom{k}{\ell}^2 2^{-\ell(k-\ell)}\\
%&\leq 10 \sum_{\ell=1}^{\lceil k/2\rceil} k^{2\ell} 2^{-\ell(k-\ell)} \\
&\leq 10 \sum_{\ell=1}^{\lceil k/2\rceil} k^{2\ell} 2^{-\ell(k-\lceil k/2\rceil)}
= 10 \sum_{\ell=1}^{\lceil k/2\rceil} \left( 2^{2\log_2 k-\lfloor k/2\rfloor} \right) ^\ell.
\end{align*}

This expression tends to zero when $k$ grows, and hence, for $k$ large enough (say, $k=100$) it is less than $1/2$, and so, the Erd\H{o}s-Selfridge criterion implies Auxiliary Breaker's win.
\end{proof}

\begin{theorem}\label{thm:perfect-matching}
Deciding whether Maker wins the perfect matching game in a given graph $G$ is \PSPACE-complete.
\end{theorem}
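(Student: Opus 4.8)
The plan is to prove \PSPACE-completeness of the perfect matching game via a reduction from \unifpos, the game proven \PSPACE-complete in~\cite{rahman2021}. Membership in \PSPACE\ is routine: the perfect matching game lasts at most $|E(G)|$ rounds, and a standard alternating-quantifier / minimax search over the game tree (the canonical argument that positional games lie in \PSPACE) suffices, so I would dispatch this in a sentence. The substance is \PSPACE-hardness, and the whole point of Lemma~\ref{remove double edge} is to let me design the reduction freely using parallel edges and then eliminate them at the end; so I would build a multigraph $G$ from the input formula $\mathcal{F}$ (with variables $X_1,\dots,X_n$ and six-variable clauses), argue Satisfier wins the \unifpos\ instance if and only if Maker wins the perfect matching game in $G$, and finally invoke Lemma~\ref{remove double edge} once per parallel-edge pair to obtain a genuine simple graph $G'$ with the same outcome.

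For the gadget construction, the key idea I would pursue is to make Maker and Breaker correspond to Satisfier and Falsifier, with each variable $X_i$ represented by a small gadget whose ``state'' after a move encodes whether $X_i$ was set true (claimed by Satisfier/Maker) or false (claimed by Falsifier/Breaker). Because \unifpos\ remains hard even when the number of variables is odd and even when Falsifier moves first (both noted explicitly in the excerpt), I have the flexibility to align the move parities correctly. The natural design is: for each variable, a pair of parallel edges (or a short gadget forcing a binary choice) such that claiming one ``side'' corresponds to setting the variable, and then a clause gadget for each clause that can be completed into a perfect matching precisely when at least one of its six variables is set to true by Maker. The global graph should be engineered so that a perfect matching exists for Maker if and only if every clause gadget is ``satisfied,'' i.e.\ the final truth assignment satisfies $\mathcal{F}$. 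I would use Lemma~\ref{super lemma} to argue that the variable-encoding edges behave as clean binary choices (so that Maker's and Breaker's forced responses pair up), and I expect to combine this with pairing strategies to lock down the parts of the graph that must be matched ``for free'' regardless of play, isolating the genuine strategic interaction to the clause-satisfaction condition.

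The two directions of the correctness proof I anticipate as follows. For the forward direction, a winning strategy for Satisfier in \unifpos\ translates into a Maker strategy: Maker mimics Satisfier's variable choices on the corresponding gadget edges, uses pairing responses on Breaker's ``noise'' moves, and at the end the satisfied formula guarantees each clause gadget can be closed into a perfect matching. For the reverse direction, a Maker winning strategy must, by the structure of the gadgets, force a perfect matching, which forces each clause gadget to be completed, which in turn certifies that the induced truth assignment satisfies every clause, yielding a Satisfier win. The delicate accounting is parity and ``free'' moves: I must ensure that the extra edges Maker can grab for free (as in the $uv$ edge of Lemma~\ref{remove double edge}) do not give her unintended matchings that break the equivalence, and that Breaker cannot sabotage the perfect matching outside the clause gadgets.

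The main obstacle I expect is designing the clause gadget so that its completability into a perfect matching is governed \emph{exactly} by the Boolean OR of its six variables, while simultaneously guaranteeing that (a) the rest of the graph is forced to match independently of the game's outcome, via pairing strategies, and (b) the move parities line up so that Maker's turns correspond precisely to Satisfier's and Breaker's to Falsifier's. The perfect matching condition is global and rigid, so local gadget choices must propagate correctly through Hall-type constraints; getting a clause gadget that is both tight (no accidental matchings when all six variables are false) and robust (guaranteed matchable when at least one is true, irrespective of Breaker's interference) is where the care lies, and it is exactly here that allowing parallel edges in the intermediate construction, later removed by Lemma~\ref{remove double edge}, should substantially simplify the gadgetry.
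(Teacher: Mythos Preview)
Your high-level plan matches the paper's approach exactly: reduce from \unifpos, build a multigraph with variable and clause gadgets, use Lemma~\ref{super lemma} and pairing to control the ``free'' parts, and clean up parallel edges via Lemma~\ref{remove double edge}. The correspondence Maker $\leftrightarrow$ Satisfier and the target equivalence (every clause gadget matchable iff the formula is satisfied) are also the paper's. So the skeleton is right.

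However, two points in your sketch would not survive contact with an actual construction. First, ``a pair of parallel edges'' for a variable does \emph{not} encode a binary choice: by Lemma~\ref{super lemma} both players get one copy, so Maker always has the edge and nothing is decided. The paper instead uses a \emph{single} edge $e_i$ per variable; the real work is arranging that when Breaker claims $e_i$, its endpoint can still be matched (via an auxiliary pool of $n$ absorbing vertices, tuned to the fact that exactly $n$ of the $2n{+}1$ variables go to Breaker). Second, and more seriously, your reverse direction is too passive. It is not the case that the gadget structure alone forces an unsatisfied clause to be unmatchable; Breaker must \emph{actively} exploit a falsified clause after the variable phase. The paper achieves this with a propagation chain for each (variable, clause) incidence: once Breaker owns $e_i$, he can spend further moves forcing the matching along the chain toward the clause vertex, so that if all six variables of some clause $C_z$ are his, all six neighbours of $C_z$ are forced to match elsewhere and $C_z$ is isolated. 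This propagation mechanism, together with the parity bookkeeping that lets Breaker finish the variable phase first and still have the moves to propagate, is the entire content of the reduction, and your proposal leaves it as a black box labelled ``the main obstacle.'' Without it, there is no proof.
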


\begin{proof}
The problem is clearly in \PSPACE\ since both the number of turns and the number of possible moves at each turn are bounded from above by $n^2$~\cite[Lemma~2.2]{schaefer1978complexity}. To prove it is \PSPACE-hard, we give a reduction from \unifpos\ where there are an odd number of variables, which, as mentioned before, is \PSPACE-hard~\cite{rahman2021}.

Let $\phi$ be an instance of \unifpos\ where there are an odd number of variables. Denote the variables in $\phi$ by $x_1, \dots, x_{2n+1}$, and the clauses in $\phi$ by $C_1,\dots,C_m$. From $\phi$, we construct the graph $G$ as follows, and recall that, by Lemma~\ref{remove double edge}, we can allow pairs of parallel edges.

\begin{itemize}
    \item For all $1\le i \le 2n+1$, introduce two vertices $v^i_{i_0}$ and $\overline{v}^i_{i_0}$, and the edge $e_i=v^i_{i_0}\overline{v}^i_{i_0}$.

    \item Then, add $n$ new vertices $a_1, \dots, a_n$, and, for all $1 \le i \le 2n+1$ and $1 \leq \ell \leq n$, add two parallel edges between $v^i_{i_0}$ and $a_{\ell}$. See Figure~\ref{fig:variables} for an illustration.
    
    \item For each clause $C_j$ in $\phi$, add a vertex $C^j$ in $G$. 
    
    \item  For each variable $x_i$ in $\phi$, let $C_{i_1}, \dots, C_{i_{k_i}}$ be the clauses containing $x_i$ in $\phi$. For all $1\leq i \leq 2n+1$ and for all $1 \le j \le k_i$, add the vertices $u^i_{i_j}, \overline{u}^i_{i_j}, v^i_{i_j}, \overline{v}^i_{i_j}$, $x^i_{i_j}$, and $y^i_{i_j}$. Also, for all $1\leq i \leq 2n+1$, add the vertices $y^i_{i_{k_i}+1}$. Then, connect them as follows for all $1 \leq i \leq 2n+1$ and $1 \le j \le k_i$ (see Figures~\ref{fig:variables in clauses}~and~\ref{fig:clauses}):
    
    \begin{itemize}
        \item Add the two edges $\overline{v}^i_{i_{j-1}}u^i_{i_j}$ and $\overline{v}^i_{i_{j-1}} \overline{u}^i_{i_j}$.
        
        \item Add two parallel edges between $u^i_{i_j}$ and $x^i_{i_j}$.
        
        \item Add two parallel edges between $x^i_{i_j}$ and $C^{i_j}$.
        
        \item Add two parallel edges between $x^i_{i_j}$ and $y^i_{i_j}$.
        
        \item Add two parallel edges between $u^i_{i_j}$ and $\overline{u}^i_{i_j}$.
        
        \item Add two parallel edges between $\overline{u}^i_{i_j}$ and $v^i_{i_j}$.
        
        \item Add two parallel edges between $v^i_{i_j}$ and $\overline{v}^i_{i_j}$.
    \end{itemize}

    Also, for all $1 \leq i \leq 2n+1$, add two parallel edges between $\overline{v}^i_{i_{k_i}}$ and $y^i_{i_{k_i}+1}$.

    \item If $|V(G)|$ is currently an odd number, then add the vertex $y^0_{0_0}$ in $G$.
        
    \item For each pair of vertices among the $y^i_{i_j}$'s (including $y^0_{0_0}$ if it exists), add two parallel edges between them (see Figure~\ref{fig:yvertices}).
\end{itemize}

\begin{figure}[ht]
    \centering
    \includegraphics{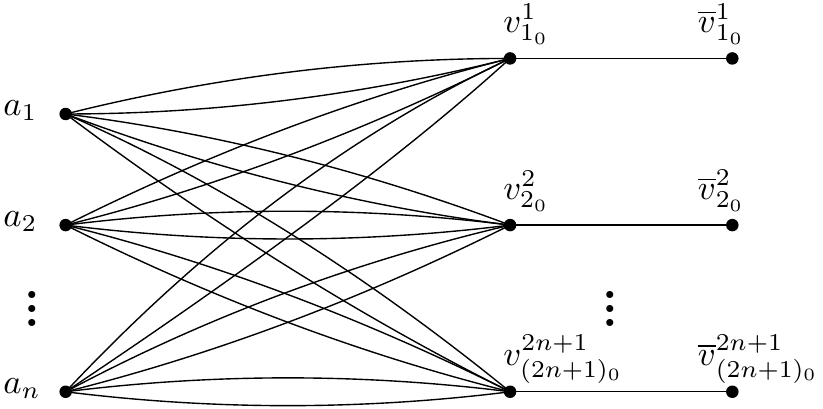}
    \caption{The variable gadget in the graph $G$ constructed in the proof of Theorem~\ref{thm:perfect-matching}.
    }
    \label{fig:variables}
\end{figure}

\begin{figure}[ht]
    \centering
    \includegraphics[scale=.575]{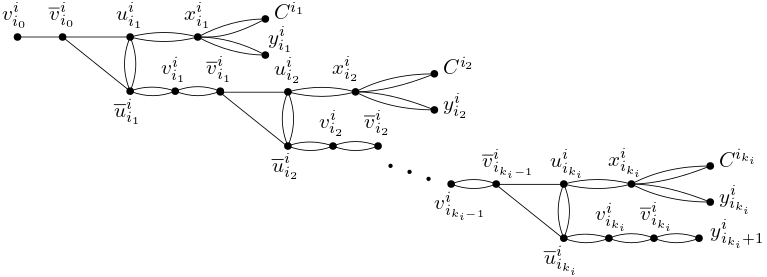}
    \caption{Construction for a variable $x_i$ in clauses $C^{i_1}, \dots, C^{i_{k_i}}$ in $\phi$ in the graph $G$ constructed in the proof of Theorem~\ref{thm:perfect-matching}.  }
    \label{fig:variables in clauses}
\end{figure}

\begin{figure}[ht]
    \centering
    \includegraphics{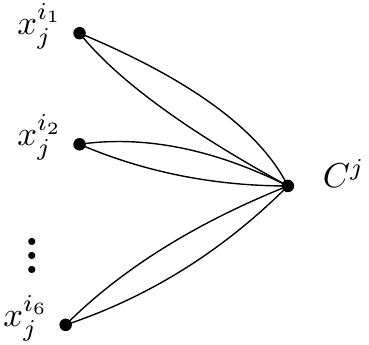}
    \caption{Construction for a clause $C_j = (x_{i_1} \vee \dots \vee x_{i_6})$ in $\phi$ in the graph $G$ constructed in the proof of Theorem~\ref{thm:perfect-matching}.}
    \label{fig:clauses}
\end{figure}

\begin{figure}[ht]
    \centering
    \includegraphics{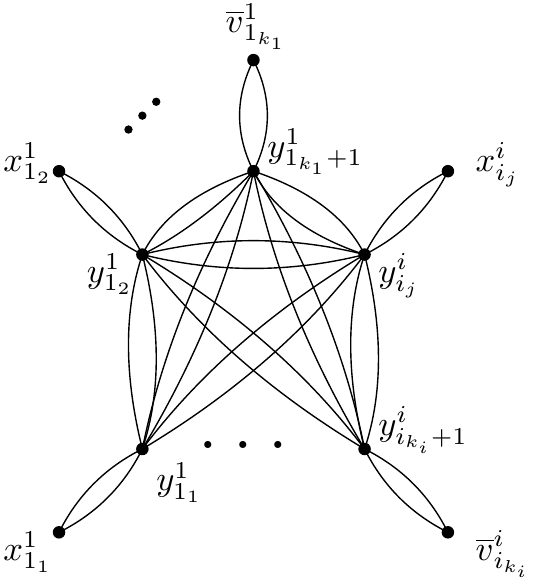}
    \caption{The $y^i_{i_j}$ vertices in the graph $G$ constructed in the proof of Theorem~\ref{thm:perfect-matching}.
    }
    \label{fig:yvertices}
\end{figure}

Note that $G$ is clearly constructed in polynomial time. We prove that Satisfier wins in $\phi$ if and only if Maker wins the perfect matching game in $G$. By Lemma~\ref{super lemma}, the outcome of the perfect matching game in $G$ is the same as the outcome in $G$, where, for each pair of parallel edges, both Maker and Breaker have claimed one of the two edges, and so, we can assume they have done so in what follows.

First, we prove the simpler of the two directions, that is, if Falsifier wins in $\phi$, then Breaker wins the perfect matching game in $G$. Assume that Falsifier has a winning strategy $\mathcal{S}$ in $\phi$. Consider the following strategy for Breaker in $G$:

\begin{itemize}
    \item If Maker claims an edge $e_i = v^i_{i_0}\overline{v}^i_{i_0}$, then Breaker answers by claiming an edge $e_j= v^j_{j_0}\overline{v}^j_{j_0}$, where $x_j$ is the variable that would have been claimed by Falsifier in $\phi$ according to $\mathcal{S}$ if Satisfier claimed $x_i$ in $\phi$.
    
    \item If Maker claims any edge other than an $e_i$ before all the $e_i$'s have been claimed, then Breaker claims an arbitrary $e_i$. Then, as there is an odd number of $e_i$'s, by pairing them, Breaker can ensure claiming at least $n+1$ of them. Thus, by construction, Maker will not be able to have a perfect matching containing all the $v^i_{i_0}$'s. Indeed, after this step, at least $n+1$ of the $e_i$'s are claimed by Breaker, and so, their respective $n+1$ $v^i_{i_0}$'s must be matched with their only remaining neighbors, the $a_{\ell}$'s, of which there are only $n$, and thus, this is not possible.
\end{itemize}

Hence, we can assume that all the edges $e_i$ have been claimed during the first $2n+1$ moves. Consider the valuation obtained in $\phi$ if all the $x_i$ variables associated to the $e_i$ edges claimed by Maker are the variables set to true, and all the $x_i$ variables associated to the $e_i$ edges claimed by Breaker are the variables set to false. By the hypothesis that $\mathcal{S}$ is a winning strategy for Falsifier in $\phi$, there exists a clause $C_z$ that is not satisfied by this valuation in~$\phi$. Let $x_{f_1}, \dots, x_{f_6}$ be the variables in $C_z$ in $\phi$. Since Breaker claimed all the edges $e_i$ corresponding to the variables $x_i$ in $\phi$ that Falsifier would have claimed according to $\mathcal{S}$, the edges $e_{f_1},\dots,e_{f_6}$ are claimed by Breaker in $G$. Recall that, as the number of variables is odd, it is Breaker's turn. Breaker plays as follows for all $\ell\in \{f_1,\dots,f_6\}$ and for $j=1$ to $k_{\ell}$ while $\ell_j\leq z$:
    \begin{itemize}
        \item  
         If $\ell_j<z$, then Breaker claims $\overline{v}^{\ell}_{\ell_{j-1}}\overline{u}^{\ell}_{\ell_j}$. As the only remaining edge available to match $\overline{v}^{\ell}_{\ell_{j-1}}$ is $\overline{v}^{\ell}_{\ell_{j-1}}u^{\ell}_{\ell_j}$, Maker has to claim it. Now, all the edges adjacent to $\overline{u}^{\ell}_{\ell_j}$ have been claimed, and Maker has claimed only two of them, of which only one does not interfere with the edges already forced in the matching: $\overline{u}^{\ell}_{\ell_j}v^{\ell}_{\ell_j}$. Thus, $\overline{u}^{\ell}_{\ell_j}v^{\ell}_{\ell_j}$ has to be in any perfect matching claimed by Maker, which forces the edge $v^{\ell}_{\ell_j}\overline{v}^{\ell}_{\ell_j}$ claimed by Maker to not be in the matching.
        
        \item If $\ell_j=z$, then Breaker claims $\overline{v}^{\ell}_{\ell_j-1}u^{\ell}_z$, forcing Maker to claim $\overline{v}^{\ell}_{\ell_j-1}\overline{u}^{\ell}_z$ to match $\overline{v}^{\ell}_{\ell_j-1}$. Now, the only edge that Maker can use to match $u^{\ell}_z$ is the edge $u^{\ell}_zx^{\ell}_z$.
    \end{itemize}
    
By the above strategy for Breaker, any perfect matching contained in the edges claimed by Maker has to contain the edges $u^{\ell}_zx^{\ell}_z$ for all $\ell\in \{f_1,\dots,f_6\}$. Therefore, all the vertices adjacent to $C^z$ are already matched, and thus, it cannot be matched, and Breaker wins.

Now, we prove that if Satisfier wins in $\phi$, then Maker wins the perfect matching game in $G$. Assume that Satisfier has a winning strategy $\mathcal{S}$ in $\phi$. We construct a strategy for Maker in $G$ as follows:

First, Maker claims the edge $e_i = v^i_{i_0}\overline{v}^i_{i_0}$ corresponding to the variable $x_i$ that Satisfier would have claimed first in $\phi$ according to $\mathcal{S}$. Then, 

\begin{itemize}
    \item If Breaker claims an edge $e_j= v^j_{j_0}\overline{v}^j_{j_0}$, then Maker claims the edge $e_i= v^i_{i_0}\overline{v}^i_{i_0}$ corresponding to the variable $x_i$ that Satisfier would have claimed according to $\mathcal{S}$ if Falsifier had claimed $x_j$ in $\phi$.
    
    \item For any variable $x_i$ in a clause $C_j$ in $\phi$, Maker pairs the edges $\overline{v}^i_{i_{j-1}}u^i_{i_j}$ and $\overline{v}^i_{i_{j-1}}\overline{u}^i_{i_j}$ in $G$, and so, if Breaker claims one of them, she claims the other one.
\end{itemize}

Note that, by construction, one of these moves is always available, as after the first move of Maker, any set where moves are considered has an even number of unclaimed edges remaining.

Suppose that Maker employs this strategy until the end of the game. Then, she will have claimed the $e_i$ edges corresponding to the $x_i$ variables claimed by Satisfier in $\phi$ according to $\mathcal{S}$. We extract a perfect matching from the edges she claimed as follows:

\begin{itemize}
    \item Add in the matching, the $n+1$ $e_i$'s claimed by Maker. The $n$ $v^i_{i_0}$'s that are not in these edges are paired with the $n$ $a_{\ell}$'s.
    
    \item Let $i_1, \dots, i_{n+1}$ be the indices of the $e_i$ edges claimed by Maker. For $\ell=i_1$ to $i_{n+1}$, add in the matching all the edges $u^{\ell}_j\overline{u}^{\ell}_j$ and $v^{\ell}_j\overline{v}^{\ell}_j$ such that $x_{\ell}$ is in $C_j$ in $\phi$.
    
    \item For each clause $C_j$, as at least one variable $x_i$ in $C_j$ in $\phi$ corresponding to an edge $x_i$ claimed by Maker is set to true, consider such an $i$, and add the edge $x^i_jC^j$ in the matching. For each other variable $x_{\ell}$ in $C_j$ in $\phi$ corresponding to an edge $x_{\ell}$ claimed by Maker ($\ell\neq i$), add the edge $x^{\ell}_jy^{\ell}_j$ in the matching.
    
    \item For each variable $x_i$ with $i \notin \{i_1, \dots, i_{n+1}\}$ being in clauses $C_{i_1}, \dots, C_{i_{k_i}}$ in $\phi$:
    
    \begin{itemize}
        \item For $j=1$ to $k_i$, while Maker has claimed $\overline{v}^i_{i_{j-1}}u^i_{i_j}$, add in the matching the edges, $\overline{v}^i_{i_{j-1}}u^i_{i_j}$, $x^i_{i_j}y^i_{i_j}$, and $\overline{u}^i_{i_j}v^i_{i_j}$. If and once Maker has claimed an edge $\overline{v}^i_{i_{z-1}}\overline{u}^i_{i_z}$, add in the matching $\overline{v}^i_{i_{z-1}}\overline{u}^i_{i_z}$, $u^i_{i_z}x^i_{i_z}$, and $v^i_{i_z}\overline{v}^i_{i_z}$, and exit the for loop. Then, for $j = z+1$ to $k_i$, add in the matching the edges $u^i_{i_j}\overline{u}^i_{i_j}$, $x^i_{i_j}y^i_{i_j}$, and $v^i_{i_j}\overline{v}^i_{i_j}$.
        
        \item If Maker has claimed no edge $\overline{v}^i_{i_{z-1}}\overline{u}^i_{i_z}$ at the end, add in the matching $\overline{v}^i_{i_{k_i}}y^i_{i_{k_i}+1}$.
    \end{itemize}
    
    \item Note that this strategy matches all the vertices except some of the $y^i_{i_j}$'s (including $y^0_{0_0}$ if it exists). As $G$ contains an even number of vertices, and every pair of remaining vertices is connected by two parallel edges, by considering any matching among the remaining $y^i_{i_j}$'s, Maker has claimed a perfect matching and wins. \qedhere
\end{itemize}
\end{proof}

\subsection{$H$-game}

We begin by proving that the $H$-game is \PSPACE-complete in graphs of small diameter when $H$ is a tree. Hence, even when $H$ is a relatively basic graph, determining the outcome of the $H$-game is hard. Furthermore, the $H$-game being hard in graphs of small diameter contrasts with a later result (Corollary~\ref{FPT-star-game}) that shows that when $H$ is a star, the $H$-game is \FPT\ parameterized by the length of the game $k$ since the diameter of the graph (after Maker's first move) can be bounded by a function of $k$. 

\begin{theorem}\label{thm:t-game}
There exists a tree $H$ such that deciding whether Maker wins the $H$-game in a given graph $G$ is \PSPACE-complete, even if $G$ has diameter at most~$6$.
\end{theorem}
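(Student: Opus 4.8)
The plan is to reduce from \unifpos, using the variant in which Falsifier plays first, which is stated to be \PSPACE-complete; since a completed \unifpos\ instance is either satisfied or falsified (there are no draws), deciding whether \emph{Falsifier} wins is also \PSPACE-complete, as \PSPACE\ is closed under complement. Membership of the $H$-game in \PSPACE\ is immediate: the number of rounds and the number of available moves per round are each bounded by $|E(G)|$, so the alternating-search argument of \cite{schaefer1978complexity} applies. For hardness I would construct, from an instance of \unifpos, a graph $G$ of diameter at most $6$ in which Maker plays the role of Falsifier, and prove that Maker wins the $H$-game in $G$ if and only if the final assignment leaves some clause unsatisfied. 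The key idea — which explains why the roles must be swapped relative to Theorem~\ref{thm:perfect-matching} — is that, for a \emph{fixed} tree $H$, a copy of $H$ is a bounded, \emph{local} object, so it cannot encode the conjunction over all clauses; instead each copy should encode a \emph{single} clause. Since all clauses are positive of size six, a clause $C=(x_{i_1}\vee\dots\vee x_{i_6})$ is falsified exactly when all six of its variables are false, and this is precisely the event I want to make equivalent to Maker completing a copy of $H$ placed at $C$. Thus Maker (Falsifier) wins iff some clause ends all-false.

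Concretely, I would introduce for each variable $x_i$ a single contested \emph{variable edge} $e_i$, with the convention that $e_i$ claimed by Maker encodes $x_i=\text{false}$ and $e_i$ claimed by Breaker encodes $x_i=\text{true}$. For each clause $C_j$ I would attach a \emph{clause gadget} with six ``slots'', one per literal, so that a copy of $H$ can be formed at $C_j$ precisely by combining the six variable edges $e_{i_1},\dots,e_{i_6}$ with a fixed ``backbone'' of auxiliary edges internal to the gadget. The tree $H$ would be a spider-like tree whose centre can map only to clause vertices and whose six distinguished branches each force the use of one variable edge; additional pendant decorations of prescribed lengths, hung on $H$ and mirrored in $G$, would make every embedding of $H$ \emph{rigid}, so that the only subgraphs isomorphic to $H$ that can ever be completed are the intended clause copies — in particular, none are created by the short ``hub'' connections used to compress the diameter to $6$. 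To make the backbone obtainable by Maker, each auxiliary slot would be given two redundant routes forming a pair that is \emph{similar} in the sense of Lemma~\ref{super lemma}; all decoration edges would likewise be grouped into similar pairs. The only genuinely contested resource then remains the set of variable edges.

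With this set-up, Lemma~\ref{super lemma} lets me assume that each player claims exactly one edge of every similar pair, so the residual game is literally the bare game on the variable edges with Maker moving first, i.e.\ exactly \unifpos\ with Falsifier first. The two directions are then routine. If Falsifier wins \unifpos, Maker follows Falsifier's strategy on the variable edges (answering on paired edges otherwise); at the end she owns $e_{i_1},\dots,e_{i_6}$ for some all-false clause $C_j$, and since she also holds one route of every backbone slot she completes a copy of $H$ at $C_j$. Conversely, if Satisfier wins, Breaker follows Satisfier's strategy, guaranteeing that every clause contains a true variable, i.e.\ that in each clause gadget at least one variable edge is Breaker's; this destroys every potential clause copy, and rigidity guarantees there are no other copies, so Breaker meets every set of $\mathcal F$ and wins. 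Finally I would check that routing all gadgets through a small number of central vertices keeps all pairwise distances at most $6$.

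I expect the main obstacle to be the joint design of $H$ and the gadgets so that two competing requirements hold at once: the embedding of $H$ must be \emph{rigid} — the only completable copies are the clause copies, each using exactly its six variable edges — yet $G$ must have diameter at most $6$. These pull against each other, since forcing small diameter introduces a central hub and many short paths that threaten to spawn stray copies of $H$ or shortcuts between gadgets; choosing the decorations and the degree sequence of $H$ so that no unintended isomorphic copy can be assembled, while keeping every auxiliary edge neutralizable by Lemma~\ref{super lemma}, is the delicate technical core. A secondary point to verify carefully is the parity and turn bookkeeping — using the odd-number-of-variables, Falsifier-first variant as needed — so that Maker-first lines up exactly with Falsifier-first and the reduction to the variable-edge game is faithful.
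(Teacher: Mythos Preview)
Your plan is essentially the paper's approach: reduce from \unifpos\ with Falsifier first, let Maker play Falsifier on a set of single contested variable edges, attach a six-slot clause gadget per clause so that a copy of $H$ at $C_j$ exists iff all six variable edges are Maker's, and add a hub vertex to bound the diameter. The paper carries this out with one concrete difference worth noting: it does \emph{not} go through Lemma~\ref{super lemma} to neutralize the auxiliary edges, but instead gives explicit pairing strategies for each player separately (the pairs being $x_i^1C_j$ with $x_i^2C_j$, the pendants at each $x_i^t$ in twos, and the hub edges in twos), and it achieves rigidity purely by degree---$H$ has a unique vertex of degree~$6$ adjacent to six pairs of adjacent degree-$8$ vertices, and two short claims show these must map to a clause vertex and to the endpoints of variable edges, respectively---rather than by ``decorations of prescribed lengths''.
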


\begin{proof}
We show that the statement holds for the tree $H$ in Figure~\ref{f-H-tree}.

\begin{figure}[htp]
\centering \includegraphics[width=\textwidth]{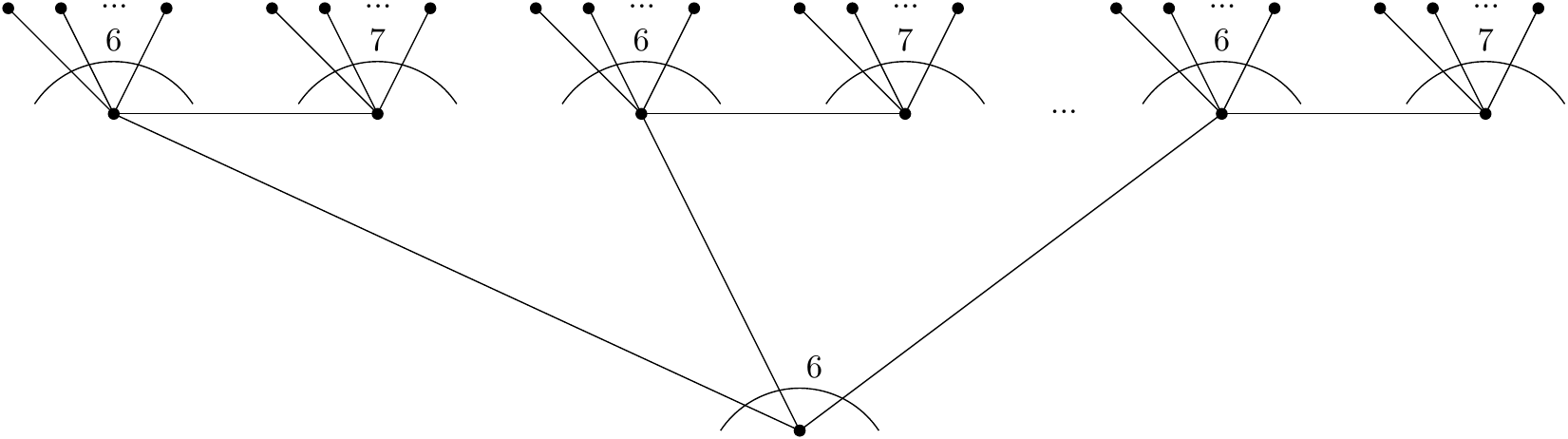}
\caption{The tree $H$ in the proof of Theorem~\ref{thm:t-game}.} \label{f-H-tree}
\end{figure}

The problem is clearly in \PSPACE\ since both the number of turns and the number of possible moves at each turn are bounded from above by $n^2$~\cite[Lemma~2.2]{schaefer1978complexity}. To prove it is \PSPACE-hard, we give a reduction from \unifpos\ where Falsifier plays first, which as mentioned before, is known to be \PSPACE-hard~\cite{rahman2021}.

\begin{figure}[htp]
\centering \includegraphics[width=\textwidth]{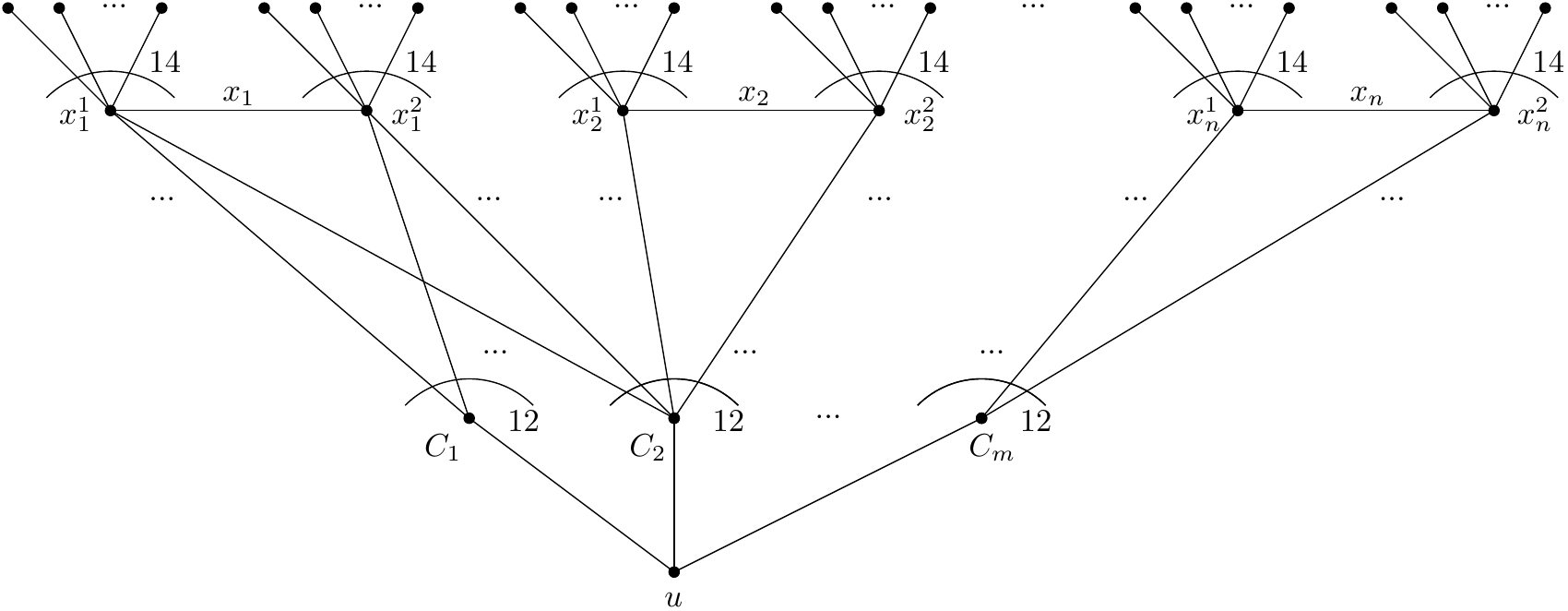}
\caption{The graph $G$ constructed in the proof of Theorem~\ref{thm:t-game}. In this example, the variable $x_1$ appears in the clauses $C_1$ and $C_2$, the variable $x_2$ appears in the clause $C_2$, and the variable $x_n$ appears in the clause $C_m$.} \label{f-construction-tree}
\end{figure}

Let $\phi$ be an instance of \unifpos\ in which Falsifier plays first. From $\phi$, we construct the graph $G$ as follows. For each clause $C_j$ in $\phi$, introduce a new clause vertex $C_j$ in $G$, and, for each variable $x_i$ in $\phi$, introduce a new variable edge $x_i=x^1_ix^2_i$ in $G$, all pairwise vertex-disjoint from each other. For all $1\leq i \leq n$ and $1\leq j \leq m$, if the variable $x_i$ is contained in the clause $C_j$ in $\phi$, then add the edges $x^1_iC_j$ and $x^2_iC_j$ in $G$. For each $1\leq i \leq n$, add 28 vertices, and make 14 of them adjacent to $x^1_i$, and the other 14 adjacent to $x^2_i$. Lastly, to ensure $G$ has diameter at most~$6$, add a vertex $u$ and, for all $1\leq j \leq m$, make it adjacent to $C_j$. Note that $G$ is constructed in polynomial time. See Figure~\ref{f-construction-tree} for an illustration of $G$.

We prove that Satisfier wins in $\phi$ (recall that Falsifier plays first) if and only if Breaker wins the $H$-game in $G$. First, we prove the following useful claims.

\begin{claim}\label{clm:oneway1}
Suppose that, for all $1\leq i\leq n$ and $1\leq j\leq m$ such that the edges $x^1_iC_j$ and $x^2_iC_j$ exist, Breaker claims at least one of $x^1_iC_j$ and $x^2_iC_j$. In that case, if Maker is to claim a copy of $H$ in $G$, then the unique vertex of degree 6 in $H$ must be a clause vertex in $G$.
\end{claim}

\begin{proofclaim}
Every other vertex in $G$ either has degree 1 or cannot be adjacent to 6 vertices that have at least 8 edges incident to each of them that Maker can claim, due to Breaker's strategy.
\end{proofclaim}

\begin{claim}\label{clm:oneway2}
Suppose that, for all $1\leq i\leq n$ and $1\leq j\leq m$ such that the edges $x^1_iC_j$ and $x^2_iC_j$ exist, Breaker claims at least one of $x^1_iC_j$ and $x^2_iC_j$. In that case, if Maker is to claim a copy of $H$ in $G$, then each of the pairs of adjacent degree-$8$ vertices in $H$ must be the two vertices of a variable edge in $G$.
\end{claim}

\begin{proofclaim}
By Claim~\ref{clm:oneway1}, the unique vertex of degree 6 in $H$ is a clause vertex. Thus, the 6 vertices of degree 8 adjacent to this vertex of degree 6 in $H$ must each be a vertex of 6 different variable edges in $G$. Indeed, $u$ cannot be one of these vertices since all of $u$'s neighbors can have at most $7$ incident edges claimed by Maker due to Breaker's strategy. Due to Breaker's strategy, for each of these vertices of degree 8, the only vertex adjacent to them that has at least 8 edges incident to it that Maker can claim, is the other vertex in each of the same variable edges.
\end{proofclaim}

First, we prove that if Satisfier wins in $\phi$, then Breaker wins the $H$-game in $G$. Assume that Satisfier wins in $\phi$. Breaker employs the following pairing strategy. If Maker claims a variable edge $x_i$, then Breaker follows his winning strategy as Satisfier in $\phi$ by claiming the variable edge in $G$ corresponding to the variable he wants to set to true in $\phi$ assuming that Maker just set the variable $x_i$ to false in $\phi$. If Maker claims an edge $x^1_iC_j$ ($x^2_iC_j$, resp.), then Breaker claims $x^2_jC_j$ ($x^1_jC_j$, resp.). If Maker claims an edge incident to a degree-1 vertex, then Breaker claims an edge incident to the same vertex of that variable edge and a degree-1 vertex (there are an even number of these edges). Lastly, if Maker claims an edge incident to $u$, then Breaker claims another edge incident to $u$. Whenever Breaker cannot employ his strategy, he claims an arbitrary edge, and then goes back to following his strategy. For a contradiction, assume that, at the end of the game, Maker claimed a copy of $H$. Then, by Claims~\ref{clm:oneway1} and~\ref{clm:oneway2}, there exists a clause such that all of the variable edges corresponding to the variables it contains in $\phi$ have been claimed by Maker. This contradicts the fact that Satisfier wins in $\phi$ since Breaker followed Satisfier's winning strategy in $\phi$ on the variable edges of $G$.

Now, we prove that if Falsifier wins in $\phi$, then Maker wins the $H$-game in $G$. Assume that Falsifier wins in $\phi$. Maker first claims a variable edge in $G$ that corresponds to the variable she wants to set to false in $\phi$ according to her winning strategy as Falsifier in $\phi$. Then, Maker employs the following pairing strategy. If Breaker claims a variable edge $x_i$, then Maker follows her winning strategy as Falsifier in $\phi$ by claiming the variable edge in $G$ corresponding to the variable she wants to set to false in $\phi$ assuming that Breaker just set the variable $x_i$ to true in $\phi$. If Breaker claims an edge $x^1_iC_j$ ($x^2_iC_j$, resp.), then Maker claims $x^2_jC_j$ ($x^1_jC_j$, resp.). If Breaker claims an edge incident to a degree-1 vertex, then Maker claims an edge incident to the same vertex of that variable edge and a degree-1 vertex (there are an even number of these edges). Lastly, if Breaker claims an edge incident to $u$, then Maker claims another edge incident to $u$. Whenever Maker cannot employ her strategy, she claims an arbitrary edge, and then goes back to following her strategy. Since Maker followed Falsifier's winning strategy in $\phi$ on the variable edges of $G$, for at least one clause, she will have claimed all of the variable edges corresponding to the variables contained in that clause in $\phi$. We can easily locate a Maker's copy of $H$ containing that clause's vertex as the unique vertex of degree 6 in $H$.
\end{proof}

As can be seen in Figure~\ref{f-H-tree}, the tree $H$ from the proof of Theorem~\ref{thm:t-game} has order~$91$. It would be interesting to know the order and/or size of the smallest graph $H$ for which the $H$-game remains \PSPACE-complete. As a step in this direction, in a more involved proof, we now show that the $H$-game is \PSPACE-complete for a graph $H$ of order~$51$ and size~$57$. It is worth noting that the orders and sizes of our graphs $H$ in our reductions are dependent on the fact that \unifpos\ is \PSPACE-hard, but it is not known whether the analogously defined {\sc Uniform~POS~CNF~5} or {\sc Uniform~POS~CNF~4} are \PSPACE-hard, which would allow for smaller $H$'s to be constructed.

\begin{theorem} \label{t:H-hard}
There exists a graph $H$ of order~$51$ and size~$57$, such that deciding whether Maker wins the $H$-game in a given graph $G$ is \PSPACE-complete.
\end{theorem}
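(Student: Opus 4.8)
The plan is to follow the same template as the proof of Theorem~\ref{thm:t-game}, again reducing from \unifpos\ (with Falsifier moving first), but to replace the bulky pendant-star gadgets that inflated the tree to order~$91$ by smaller, cycle-based gadgets that pin down the embedding of $H$ using far fewer vertices. Membership in \PSPACE\ is immediate as before, since both the number of rounds and the number of available moves per round are at most $n^2$~\cite[Lemma~2.2]{schaefer1978complexity}. For hardness, given an instance $\phi$ of \unifpos, I would build $G$ with one variable edge $x_i = x^1_ix^2_i$ per variable, one clause vertex $C_j$ per clause, and the edges $x^1_iC_j, x^2_iC_j$ whenever $x_i$ occurs in $C_j$, exactly as in Theorem~\ref{thm:t-game}. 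The only change is the local gadget hung on each variable-edge endpoint: instead of $14$ pendant vertices, I would attach a small rigid gadget built from short cycles (together with a few pendants), whose sole purpose is to guarantee both that Maker can, via a pairing, locally realise the degree-and-cycle profile that $H$ demands at a variable edge, and that no other vertex of $G$ can fill that role. The universal vertex $u$ of Theorem~\ref{thm:t-game}, which was only there to bound the diameter, can be dropped here since diameter is not claimed, which also removes one case from the forcing argument below.

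The heart of the reduction is again a pair of forcing claims, the analogues of Claims~\ref{clm:oneway1} and~\ref{clm:oneway2}. Assuming Breaker has claimed at least one of $x^1_iC_j, x^2_iC_j$ for every incidence (which his pairing strategy guarantees), I would show: (i) the distinguished ``centre'' of $H$ can only be realised at a clause vertex of $G$; and (ii) the gadget-bearing vertices of $H$ adjacent to this centre must map onto variable edges, one per variable of the clause. Once these hold, a copy of $H$ in Maker's edges certifies that Maker owns all six variable edges of some clause, and the two implications go through as before. Namely, if Satisfier wins $\phi$ then Breaker mirrors Satisfier's strategy on the variable edges while pairing all gadget edges and the two edges at each incidence $x^1_iC_j, x^2_iC_j$, so that Maker can never complete a clause; and symmetrically, if Falsifier wins $\phi$ then Maker mirrors Falsifier's strategy while pairing, eventually owning every variable edge of some clause and locating a copy of $H$ centred there. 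These mirroring/pairing arguments are essentially identical to those of Theorem~\ref{thm:t-game}.

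The main obstacle is designing $H$ (the explicit graph of order~$51$ and size~$57$ depicted in Figure~\ref{f-H}) and its matching gadgets so that the forcing claims survive the drastic shrinkage. With only $51$ vertices and $57$ edges there is much less degree slack than the clean degree-$8$/degree-$6$ separation exploited in Theorem~\ref{thm:t-game}, so high degree alone no longer distinguishes clause vertices from variable endpoints. Here the cycles must do the work the pendants did before: a short cycle incident to a variable-edge endpoint creates a locally unique pattern (a prescribed girth-and-degree profile around each endpoint) that Breaker's pairing prevents Maker from reproducing anywhere except at an intended variable edge, and that simultaneously cannot collapse onto the clause-vertex structure. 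The delicate part is the case analysis verifying that every embedding of $H$ into Maker's claimed subgraph is ruled out except the intended family --- in particular that the seven surplus edges of $H$ (it has seven more edges than a spanning tree, hence is far from a tree) cannot be routed along spurious cycles of $G$. That case analysis, together with the concrete gadget realisations making the pairing feasible, is where the bulk of the technical effort will lie; the strategy-mirroring part is routine given the forcing claims.
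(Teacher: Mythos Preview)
Your high-level framework (membership in \PSPACE, reduction from \unifpos\ with Falsifier first, variable edges, a pairing component, and mirroring Satisfier/Falsifier on the variable edges) matches the paper, but the actual construction and the forcing mechanism you propose diverge substantially from what the paper does.

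The paper's $H$ is a triangle together with six vertex-disjoint $5$-cycles, each joined to one fixed triangle vertex by a path of length~$4$ (this is what gives $|V(H)|=3+6\cdot 5+6\cdot 3=51$ and $|E(H)|=3+6\cdot 5+6\cdot 4=57$). Crucially, the forcing argument is \emph{not} a degree-and-pendant analysis in the style of Claims~\ref{clm:oneway1} and~\ref{clm:oneway2}; it is a global parity-of-odd-cycles argument. In $G$, each clause gets a whole $K_6$ (not a single clause vertex), each variable edge gets a triangle-free gadget $Y$ that contains $5$-cycles, and each incidence is realised by a bipartite gadget $Z$ (not by a pair of parallel incidence edges). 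The rigidity comes from: (i) the triangle of $H$ must land inside some clause's $K_6$ because every other piece of $G$ is triangle-free; (ii) the six $C_5$'s of $H$ must then land in the six $Y$-copies attached to that clause's variable edges, since those are the only nearby odd cycles; hence all six variable edges of that clause must be Maker's. On Maker's side, the paper proves separate claims that, playing \emph{second}, Maker can build a $C_5$ through the attachment vertex of $Y$, a length-$3$ path across $Z$, and a triangle through the designated vertex of $K_6$; these are what let Maker assemble $H$ once she owns all six variable edges of a clause.

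Relative to this, your plan keeps single clause vertices with paired incidence edges $x_i^1C_j,x_i^2C_j$ and hangs cycle gadgets on the variable endpoints, then hopes to recover analogues of Claims~\ref{clm:oneway1} and~\ref{clm:oneway2} via a ``degree-and-girth profile''. That is a different architecture, and the proposal leaves the hard part entirely open: you never specify $H$ (beyond pointing to Figure~\ref{f-H}), never describe the gadgets, and never carry out the embedding case analysis you acknowledge is the crux. In particular, with a single clause vertex and no clause-side cycle gadget, it is unclear where the odd cycles of your $H$ are supposed to embed or why Maker can realise them by a pairing strategy; the paper needed the $K_6$ precisely to host the triangle and the nontrivial $Y$ and $Z$ gadgets to guarantee Maker's second-player local wins. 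So the proposal is at best a plausible outline of an alternative construction, but as written it is not a proof, and it misses the odd-cycle localisation idea that makes the paper's argument go through cleanly.
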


\begin{proof}
The problem is in \PSPACE\ by the proof of Theorem~\ref{thm:t-game}. Let $H$ be the graph consisting of a triangle and six 5-cycles, pairwise vertex-disjoint, where one vertex of each 5-cycle is connected to the same vertex of the triangle by a path of length~$4$, as depicted in Figure~\ref{f-H}.

\begin{figure}[htp]
\centering \includegraphics{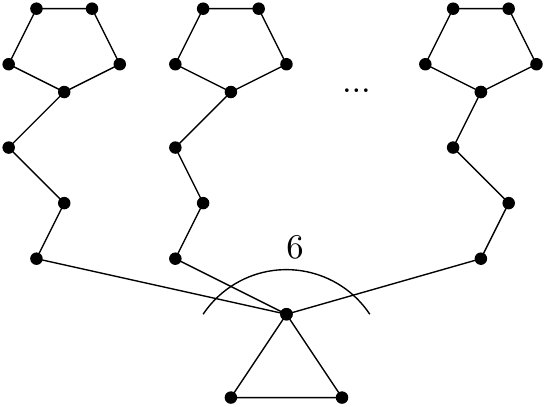}
\caption{The graph $H$ in the proof of Theorem~\ref{t:H-hard}.} \label{f-H}
\end{figure}

Before describing the reduction to prove it is \PSPACE-hard, we need to introduce some auxiliary graphs, starting with the graph $X$ (see Figure~\ref{f-X}), and prove several intermediate results.

\begin{claim} \label{c:X}
Playing on the edges of $X$ as the second player, Maker can claim a path of length~$3$ connecting $t$ to $u_1$ or $u_2$.
\end{claim}

\begin{figure}[htp]
\centering \includegraphics{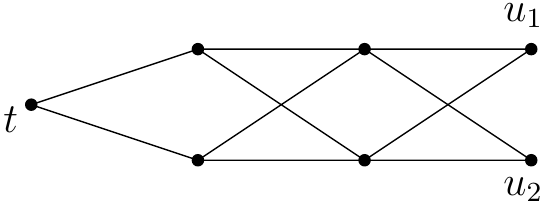}
\caption{The graph $X$ in the proof of Theorem~\ref{t:H-hard}.} \label{f-X}
\end{figure}

\begin{proofclaim}
Maker as the second player employs the following pairing strategy that clearly accomplishes her goal. For each vertex $w \in V(X)\setminus \{u_1,u_2\}$, she pairs the two edges between $w$ and its neighbors on the right.
\end{proofclaim}

The next auxiliary graph is $Y$, whose structure is shown in Figure~\ref{f-Y}.
\begin{figure}[htp]
\centering \includegraphics{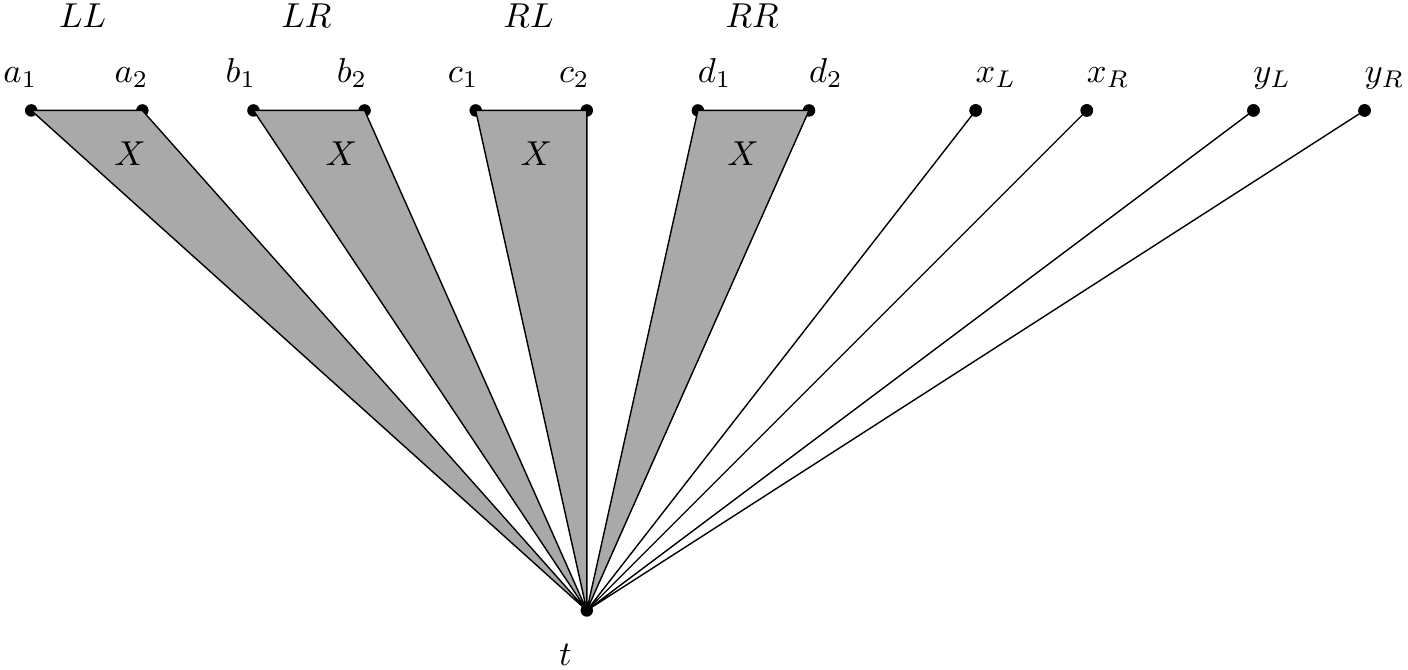}
\caption{The structure of the graph $Y$ (some edges are omitted) in the proof of Theorem~\ref{t:H-hard}.}  \label{f-Y}
\end{figure}
Note that $Y$ contains four copies of $X$ shown in gray (where the vertex $t$ in $Y$ corresponds to the identification of the vertex $t$ of each copy of $X$), as well as the following \emph{additional edges} that are not shown in the figure. Each of the vertices $a_1$ and $a_2$ (that correspond to the vertices $u_1$ and $u_2$ of the corresponding copy of $X$) is connected by two edges to the vertices $x_L$ and $y_L$ (this is indicated by the letters ``LL'' above the first copy of $X$). Similarly, each of the vertices $b_1$ and $b_2$ is connected by two edges to the vertices $x_L$ and $y_R$ (indicated by ``LR''). The vertices $c_1$ and $c_2$, and $d_1$ and $d_2$ are connected accordingly as in Figure~\ref{f-Y}.

\begin{claim} \label{c:Y}
Playing on the edges of $Y$ as the second player, Maker can claim a $C_5$ containing the vertex $t$.
\end{claim} 
\begin{proofclaim}
Maker partitions the edge set of $Y$ into several parts, and plays on each of them separately, responding in each of them as the second player. 

On each copy of $X$, Maker plays to claim a path connecting $t$ and one of the two top vertices (the two vertices corresponding to $u_1$ and $u_2$ in that copy of $X$), which is possible by Claim~\ref{c:X}. For each of the vertices $a_1,a_2,b_1,b_2,c_1,c_2,d_1,d_2$, Maker pairs the two edges going toward (two out of four of) the vertices $x_L, x_R, y_L, y_R$. Finally, she pairs the two edges between $t$ and $x_L$ and $x_R$, and the two edges between $t$ and $y_L$ and $y_R$.

Following this strategy, after all the edges are claimed, Maker will claim, without loss of generality, the edges $tx_L$ and $ty_L$. Also, in the leftmost copy of $X$, she will claim, without loss of generality, a path from $t$ to $a_1$. Now, due to pairing, one of the two edges $a_1x_L$ and $a_1y_L$ is claimed by Maker, completing a $C_5$ through $t$ that is fully claimed by her.
\end{proofclaim}

\begin{claim}
Playing on the edges of the clique $K_6$ as the second player, and given any vertex $p \in V(K_6)$, Maker can claim a triangle containing the vertex $p$.
\end{claim}

\begin{proofclaim} \label{c:K6}
For her first two moves, Maker claims two edges incident to $p$, denote them by $pv_1$ and $pv_2$. If, by Maker's next move, Breaker has not claimed $v_1v_2$, then Maker does it and completes her goal.

Otherwise, there must be another unclaimed edge incident to $p$, denote it by $pv_3$, and Maker claims it. After the following move of Breaker, the edges $v_1v_3$ and $v_2v_3$ must be claimed by him, as otherwise Maker is done. But, then there is another unclaimed edge incident to $p$, denote it by $pv_4$, and Maker claims it. By Maker's next move, Breaker cannot have claimed all the edges $v_iv_j$, $i,j\in\{1,2,3,4\}$, so Maker claims one of them in the following move completing her triangle at $p$.
\end{proofclaim}

The last auxiliary graph is the graph $Z$ as depicted in Figure~\ref{f-Z}.

\begin{claim} \label{c:Z}
Playing on the edges of $Z$ as the second player, Maker can claim a path of length~$3$ from the vertex $\ell$ to the vertex $r$.
\end{claim}
\begin{figure}[htp]
\centering \includegraphics{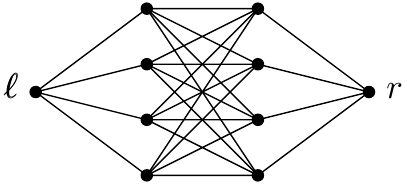}
\caption{The graph $Z$ in the proof of Theorem~\ref{t:H-hard}.} \label{f-Z}
\end{figure}

\begin{proofclaim}
In her first four moves, Maker claims two edges incident to $\ell$, and two edges incident to $r$. If, at that point, one of the four edges closing a Maker's path of length~$3$ between $\ell$ and $r$ is unclaimed, she claims it and she is done.

Otherwise, all these edges are claimed by Breaker, and so, at most one edge claimed by Breaker is incident to either $\ell$ or $r$. In her next two moves, Maker claims one more edge at both $\ell$ and $r$, and so, there are nine edges that complete Maker's path of length~$3$ from $\ell$ to $r$. As Breaker claimed only seven edges before Maker's next move, at least two of those edges are unclaimed, and Maker completes her path from $\ell$ to $r$.
\end{proofclaim}

We are now ready to describe the reduction. To prove it is \PSPACE-hard, we give a reduction from \unifpos\ where Falsifier plays first, as in the proof of Theorem~\ref{thm:t-game}.

\begin{figure}[htp]
\centering \includegraphics{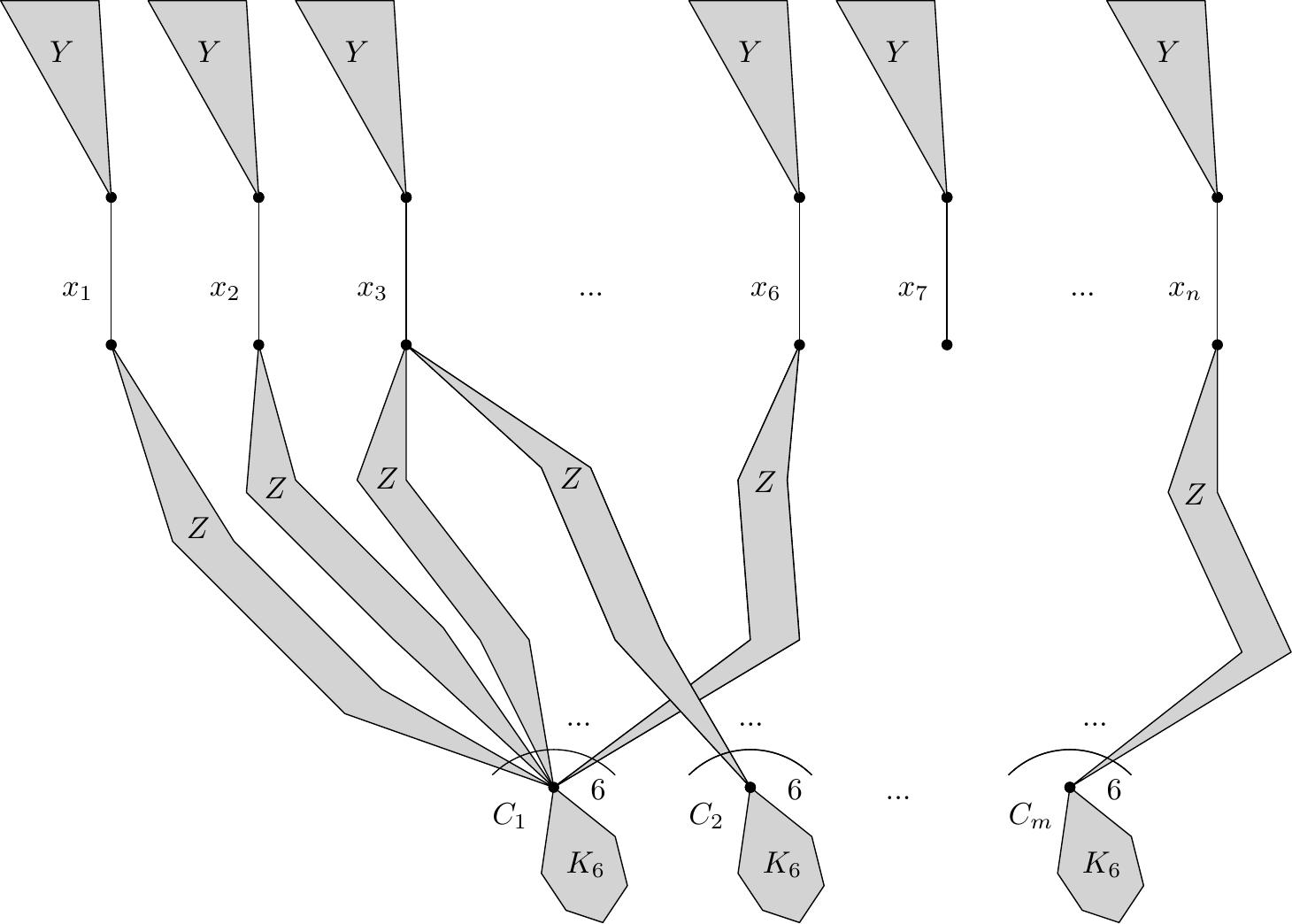}
\caption{The graph $G$ constructed in the proof of Theorem~\ref{t:H-hard}.} \label{f-construction}
\end{figure}

Let $\phi$ be an instance of \unifpos\ in which Falsifier plays first. From $\phi$, we construct the graph $G$ in the following way. For each clause $C_j$ in $\phi$, $j\in\{1,\dots, m\}$, we add a new copy of the clique $K_6$ and denote one of its vertices as $C_j$. For each variable $x_i$ in $\phi$, $i\in \{1, \dots, n\}$, we introduce a new edge denoted by $x_i$, all pairwise vertex-disjoint. For each variable edge $x_i$, we add a new copy of the graph $Y$ and identify one of the ends of $x_i$, called its ``upper'' vertex,  with the vertex $t$ of $Y$.
The other vertex of the variable edge will be called the ``lower'' vertex. 
Finally, for each variable $x_i$ and each clause $C_j$ that contains that variable in $\phi$, we introduce a new copy of the graph $Z$, identifying the vertex $\ell$ ($r$, respectively) of this copy of $Z$ with the ``lower'' vertex of $x_i$ (with $C_j$, respectively). See Figure~\ref{f-construction} for an illustration of $G$, and note that $G$ is clearly constructed in polynomial time.

We prove that Satisfier wins in $\phi$ if and only if Breaker wins the $H$-game in $G$.

First, we prove that if Satisfier wins in $\phi$, then Breaker wins the $H$-game in $G$. Assume that Satisfier wins in $\phi$. Then, Breaker can apply the winning strategy of Satisfier in $\phi$ on the variable edges of $G$, and play arbitrarily in the rest of the graph $G$. For a contradiction, assume that, at the end of the game, Maker claimed a copy of $H$. 

The graph $H$ contains a triangle, the graph $Y$ is triangle-free, and the middle part of $G$ (what remains of $G$ when all of the copies of $Y$ and $K_6$ are removed) is bipartite. Hence, the triangle in Maker's copy of $H$ needs to be within one of the copies of $K_6$ on one of the clause vertices, denote it by $\Tilde{C}$. But, in $H$, Maker also claimed six 5-cycles connected by Maker's paths of length~$4$ to that Maker's triangle. As the other $K_6$'s are too far away, the only remaining non-bipartite parts of the base graph that are close enough are the six copies of $Y$ glued to the edges corresponding to the variables in the clause~$\Tilde{C}$ in $\phi$. Since Breaker wins as Satisfier in $\phi$ on the variable edges in $G$, at least one of these six variable edges is claimed by Breaker, and so, Maker does not have six internally disjoint paths of length~$4$ from the clause vertex~$\Tilde{C}$ to six different 5-cycles, a contradiction.

Now, we prove that if Falsifier wins in $\phi$, then Maker wins the $H$-game in $G$. Assume that Falsifier wins in $\phi$. Maker first claims a variable edge in $G$ that corresponds to the variable she wants to set to false in $\phi$ according to her winning strategy as Falsifier in $\phi$. Maker then partitions the edge set of the base graph into several parts, and plays on each of them separately, responding in each of them as the second player. She plays arbitrarily in $G$ whenever she cannot follow her strategy, and then she resumes her strategy. Specifically, she uses Falsifier's winning strategy in $\phi$ on the variable edges of $G$. On each copy of $Y$, she claims a $C_5$ through the attachment vertex, as guaranteed by Claim~\ref{c:Y}. In each copy of $Z$, she claims a path connecting the end vertices, as guaranteed by Claim~\ref{c:Z}. Finally, in each copy of $K_6$, she claims a $C_3$ on the attachment vertex, as guaranteed by Claim~\ref{c:K6}.

Now, we can spot a clause for which all six of the variable edges are claimed by Maker, and locate a Maker's copy of $H$ on that clause's vertex. Indeed, Maker's graph will contain a triangle on that clause's vertex, six paths of length~$3$ that connect that clause's vertex with the six Maker's variable edges, and on the ``upper'' vertex of each of those edges, there will be a Maker's 5-cycle. 
\end{proof}

\section{Linear-time Algorithms}\label{sec:poly}

In this section, we give linear-time algorithms to solve the $P_4$-game in any graph, and the $K_{1,\ell}$-game in trees for any fixed integer $\ell\geq 1$. 

\subsection{Linear-time algorithm for the $P_4$-game}

The $P_4$ game can be solved in polynomial time by the algorithm of Galliot {\it et al.} for Maker-Breaker games with winning sets of size at most three~\cite{Galliot22}. However, this algorithm does not give any structural insight about the graphs in which Maker (Breaker, resp.) wins, and it runs in time $O(|V(\mathcal{H})|^5|E(\mathcal{H})|^2+|V(\mathcal{H})|^6\Delta(\mathcal{H}))$, where $\mathcal{H}$ is the hypergraph for the game. We give a necessary and sufficient structural condition for Breaker to win the $P_4$-game in any graph $G$. This leads to a linear-time algorithm to decide the outcome of the $P_4$-game.  

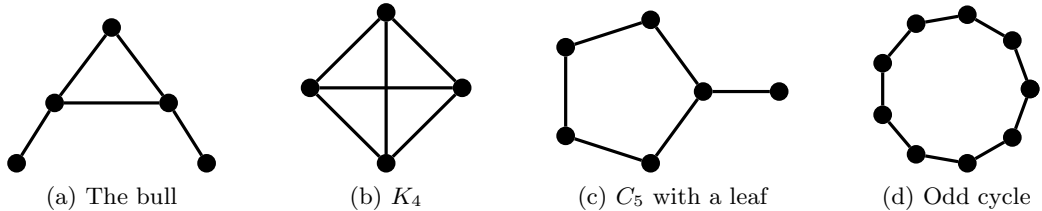
\begin{figure}
    \centering
    \subfloat[The bull]{\label{bull}
        \begin{tikzpicture}
        \node[v](0) at (0,0) {};
          \node[v](1) at (1.5,0) {};
            \node[v](2) at (0.75,1) {};
          \node[v](3) at (-0.5,-0.8) {};
             \node[v](4) at (2,-0.8) {};
          
          \draw[e] (0)--(1)--(2)--(0)--(3) (1)--(4);
        \end{tikzpicture}
 }
     \hfil
   \subfloat[$K_4$]{\label{diamond}
          \begin{tikzpicture}
        \node[v](0) at (0,0) {};
          \node[v](1) at (2,0) {};
            \node[v](2) at (1,-1) {};
          \node[v](3) at (1,1) {};
          \draw[e] (0)--(1)--(2)--(0)--(3)--(1);
          \draw[e] (3)--(2);
        \end{tikzpicture}
       }
     \hfil
    \subfloat[$C_5$ with a leaf]{\label{oddpath}
                \begin{tikzpicture}
        \foreach \I in {0,1,...,4}
    { \node[v](\I) at (\I*72:1) {};}
  
   \node[v](00) at (2,0) {};
    
          \draw[e] (0)--(1)--(2)--(3)--(4)--(0);
           \draw[e] (0)--(00);
        \end{tikzpicture}
       }
         \hfil
    \subfloat[Odd cycle]{\label{oddcycle}
                \begin{tikzpicture}
        \foreach \I in {0,1,...,9}
    { \node[v](\I) at (\I*40:1) {};}

          \draw[e] (0)--(1)--(2)--(3)--(4)--(5)--(6)--(7)--(8)--(0);
        
        \end{tikzpicture}
       }
    \caption{Maximal non-bipartite graphs for which Breaker wins the $P_4$-game.}
    \label{fig:exceptions}
\end{figure}

\begin{theorem}\label{thm:P4game}
For any connected graph $G$, Breaker wins the $P_4$-game in $G$ if and only if

\begin{enumerate}
\item $G$ is bipartite and all the vertices of degree at least 3 are in the same part; or

\item $G$ is an odd cycle; or

\item $G$ is a subgraph of the bull, $K_4$, or a $C_5$ with a leaf attached to one vertex (see Figure~\ref{fig:exceptions}).
\end{enumerate}
\end{theorem}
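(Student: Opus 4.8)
My strategy is to prove both directions of the characterization, with the bulk of the effort going into establishing that Maker wins in every graph \emph{not} on the list. For the ``easy'' direction, I would verify that Breaker wins in each of the three stated cases. Case~3 is a finite check: for the bull, $K_4$, and $C_5$-with-a-leaf (and their subgraphs), one either exhibits an explicit Breaker pairing/blocking strategy or invokes the Erd\H{o}s–Selfridge criterion (Theorem~\ref{ES-criterion}), since these graphs have few $P_4$'s. For Case~2 (odd cycle $C_{2k+1}$), the winning sets are the $2k+1$ paths of length~$3$, and I would give Breaker a pairing strategy along the cycle; the parity is what makes this work (in an even cycle Maker wins). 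For Case~1, I would argue that if $G$ is bipartite with all vertices of degree $\geq 3$ in one part $A$, then any $P_4$ in $G$ has its two internal vertices split between $A$ and $B$, and Breaker can pair the edges at each degree-$2$ vertex in $B$ so that Maker can never claim two consecutive edges through such a vertex, killing every potential $P_4$.

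**The main direction.**
The core of the proof is the converse: if $G$ is none of (1)--(3), then Maker wins. Here I would proceed by contraposition, assuming Breaker wins and deducing that $G$ must fall into one of the three classes. The natural first split is on whether $G$ is bipartite. If $G$ is \emph{non-bipartite}, then $G$ contains an odd cycle, and I would argue that unless $G$ is \emph{exactly} an odd cycle or one of the small exceptional graphs in Case~3, Maker can claim a $P_4$: an odd cycle with any additional edge or pendant structure gives Maker enough freedom (via Lemma~\ref{subgraph}, it suffices to find a Maker-win subgraph). This is where Figure~\ref{fig:exceptions} comes in — the listed graphs are precisely the \emph{maximal} non-bipartite Breaker-win graphs, so the work is to show any non-bipartite graph properly containing none of them as the whole graph already contains a Maker-win subconfiguration. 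If $G$ is \emph{bipartite} with parts $A,B$ but has vertices of degree $\geq 3$ in \emph{both} parts, I would show Maker can build a $P_4$ by first claiming an edge $ab$ with $a\in A$, $b\in B$ both of high degree, then extending on both ends; the degree condition guarantees enough unclaimed edges for Maker to win the local extension battle.

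**The main obstacle.**
The hardest part will be the non-bipartite analysis and pinning down \emph{exactly} the maximal exceptional graphs. The danger is missing a case or failing to show the list is complete; I expect to need a careful structural argument showing that any non-bipartite graph that is not a subgraph of the bull, $K_4$, or $C_5$-plus-leaf, and is not itself an odd cycle, must contain a small subgraph in which Maker (playing first) wins the $P_4$-game. The cleanest route is to enumerate how an odd cycle can be augmented — by a chord, by a pendant edge, by an extra vertex of degree $\geq 3$ — and show that each augmentation (beyond the few that stay within the exceptional graphs) yields a Maker win, typically by giving Maker an explicit first move creating a ``double threat'' of two edge-disjoint completions of a $P_4$ that Breaker cannot simultaneously block. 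Establishing that Maker's winning strategies in the generic bipartite-with-degree-3-in-both-parts case and the generic non-bipartite case run in \emph{linear time}, and that the structural conditions (1)--(3) can themselves be \emph{tested} in linear time (bipartiteness, locating the parts, checking the degree condition, or matching against the finite list), is the final ingredient that upgrades the characterization into the claimed linear-time algorithm.
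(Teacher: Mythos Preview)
Your plan for the ``if'' direction is essentially the same as the paper's: a pairing at each vertex of $B$ for Case~1, a (shifted) pairing along the odd cycle for Case~2, and a finite check for Case~3. One caution: the Erd\H{o}s--Selfridge criterion does \emph{not} give Case~3 directly (for $K_4$ there are $12$ copies of $P_4$, so $\sum 2^{-|F|}=12/8>1/2$); you need the explicit blocking strategies.

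For the ``only if'' direction your overall shape is close to the paper's, but your bipartite case has a genuine gap. You propose that Maker ``first claim an edge $ab$ with $a\in A$, $b\in B$ both of high degree, then extend on both ends.'' Such an edge need not exist: the two vertices of degree $\geq 3$ in opposite parts may be arbitrarily far apart (indeed the paper's Figure~\ref{fig:badtree} exhibits trees where Maker needs an unbounded number of moves and must play \emph{disconnectedly}). The paper handles this by taking a shortest \emph{odd} path $e_1,\dots,e_{2\ell+1}$ between two degree-$\geq 3$ vertices $u$ and $v$ and having Maker run a forcing sequence: she claims $e_2$, which compels Breaker to claim $e_1$; then $e_4$ forces $e_3$; and so on, until she claims $e_{2\ell+1}$ and pairs the two spare edges at $v$. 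This long-range forcing argument is the real content of the bipartite case, and your local ``double threat'' picture does not survive when $u$ and $v$ are far apart.

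Your non-bipartite analysis is closer in spirit to the paper, though organized differently: the paper stratifies by (diamond) / (triangle, no diamond) / (triangle-free with odd path between two degree-$\geq 3$ vertices) / (triangle-free, odd cycle with a unique degree-$\geq 3$ vertex), and in each case exhibits an explicit first move for Maker followed by a pairing. Your ``enumerate augmentations of an odd cycle'' would eventually arrive at the same place, but note that the odd-path case (which also covers the bipartite situation) again needs the forcing argument above, not just a one-move double threat.
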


\begin{proof}
We first prove the ``if'' part.
One can check with a small case analysis that Breaker wins in the first three graphs of Figure~\ref{fig:exceptions}, and thus, in all their subgraphs by Lemma~\ref{subgraph}.
For the odd cycle, Breaker wins with the following strategy. Let $(e_1,\dots,e_{2\ell+1})$ be the edges of the cycle in this order. Without loss of generality, let $e_1$ be the first edge claimed by Maker. Then, Breaker claims $e_{2\ell+1}$, and then follows a pairing strategy by pairing edges $e_{2i}$ with $e_{2i+1}$, for $1\leq i<\ell$ ($e_{2\ell}$ is not paired). 

Thus, we can assume that $G$ is bipartite with all the vertices of degree at least $3$ in the same part. Let $V(G)=A\cup B$ be a bipartition of the vertices of $G$ with the part $A$ containing all the vertices of degree at least 3, and so, the vertices of $B$ have degree at most 2.
Note that any path on four vertices must contain an inner vertex in $B$, and thus, two edges incident to the same vertex in $B$. Thus, Breaker can win by following a pairing strategy where, for each vertex $v$ of $B$, the (potentially) two edges incident to $v$ are paired together.

We now prove the ``only if'' part. Let $G$ be a graph that does not satisfy the three conditions of Theorem \ref{thm:P4game}. The proof is divided into four cases:

\begin{enumerate}
\item $G$ contains a diamond ($K_4$ minus one edge).

\item $G$ contains a triangle, but not a diamond.

\item $G$ contains an odd path between two vertices of degree at least 3, but not a triangle.

\item $G$ contains an odd cycle with a unique vertex of degree at least 3.  
\end{enumerate}

These cases will cover all the possible graphs where Maker wins. Indeed, if $G$ is bipartite, then it would be treated in Case 3. If $G$ is not bipartite and not an odd cycle, then it has a vertex of degree at least $3$ in an odd cycle $C$. If $C$ contains a unique vertex of degree at least~$3$, then we are in Case~4. Otherwise, $C$ contains two vertices of degree at least 3, and thus, there is an odd path between vertices of degree at least 3, and we are in Case 3 if $G$ is triangle-free. Otherwise, $G$ contains a triangle, but not a diamond (Case~2) or $G$ contains a diamond (Case~1).

\paragraph{Case 1. $G$ contains a diamond.} Since $G$ is not restricted to a subgraph of $K_4$, $G$ contains a subgraph that is a diamond with a leaf connected to it (either to a vertex of degree 3 or to a vertex of degree 2 in the diamond). One can easily check that Maker wins the $P_4$-game in these subgraphs, and thus, by Lemma~\ref{subgraph}, she wins in all the graphs containing them.

\paragraph{Case 2. $G$ contains a triangle, but not a diamond.} Let $uvw$ be this triangle. Since $G$ is not a triangle (it would be a subgraph of the bull), at least one vertex, say $u$, must be connected to another vertex $z$. Note that $z$ is not connected to any other vertex of the triangle, as otherwise $G$ would contain a diamond. Since $G$ is not restricted to the graph on the vertices $\{u,v,w,z\}$ (because otherwise it would be a subgraph of $K_4$), there is another vertex $x$ in the graph.

Assume first that $x$ is connected to $z$. Then, Maker claims $zu$. Then, she has a pairing strategy with pairs $(xz,vw)$ and $(uv,uw)$. Assume now that $x$ is connected to $u$. Then, Maker claims $vw$, and then follows a pairing strategy with pairs $(xu,zu)$ and $(uv,uw)$.
Finally, assume that $x$ is connected to the triangle, without loss of generality, by $v$. Since $G$ is not a bull and does not contain a diamond, there must either be an additional edge, and it can only be $xz$ (otherwise $G$ would contain a diamond) or there is another vertex connected to the graph. The first case returns to the case where $x$ was connected to $z$. For the second case, the only possibility not covered yet is that there is a vertex $t$ connected to $w$. Then, Maker claims $uw$. If Breaker then claims $vx$ or $vu$, then Maker claims $uz$, and pairs $tw$ with $wv$.
Otherwise, without loss of generality, Breaker then claims $uz$. Then, Maker claims $uv$ and pairs $tw$ with $vx$.

From now on, we can assume that $G$ is triangle-free. Thus, Maker just needs to claim any three consecutively incident edges and will be sure to obtain a $P_4$.

\paragraph{Case 3. $G$ contains an odd path between two vertices of degree at least 3, but not a triangle.} Let $u$ and $v$  be two vertices of degree at least 3 connected by an odd path $P$. We choose $u$ and $v$ such that $P$ has minimum length. Let $e_1,\dots,e_{2\ell+1}$ be the edges of $P$ with $e_1$ incident to $u$, and $e_{2\ell+1}$ incident to $v$. Let $e_u$ and $e'_u$ be the two other edges incident to $u$, and $e_v$ and $e'_v$ the two other edges incident to $v$.
 
If $\ell=0$, then Maker claims $e_1$ and follows a pairing strategy with pairs $(e_u,e'_u)$ and $(e_v,e'_v)$ ($e_u$ and $e'_u$ are vertex-disjoint from $e_v$ and $e'_v$ since $G$ is triangle-free). Assume now that $\ell>0$. Maker starts by claiming $e_2$. Assume first Breaker does not answer by claiming $e_1$. Then, Maker can claim $e_1$ as her second move. Then, either she pairs $e_u$ with $e'_u$ if Breaker did not claim any of these edges on his first move, or she pairs $e_u$ or $e'_u$ (the one that is unclaimed) with $e_3$. Therefore, we can assume that Breaker answers by claiming $e_1$.

Assume now, by induction on $1<i<\ell$, that before their $i^{th}$ moves, Maker has claimed all the even edges $e_{2j}$, and Breaker has claimed all the odd edges $e_{2j-1}$ for $j=1,\dots,i-1$ (we have shown this is true for $i=2$ above). Then, on her $i^{th}$ move, Maker claims $e_{2i}$. Breaker has to answer by claiming $e_{2i-1}$, since otherwise Maker can claim $e_{2i-1}$, creating a $P_4$ with $e_{2i}$ and $e_{2i-2}$. Then, the inductive hypothesis holds for $i+1$.
Repeating this argument, for her $\ell^{th}$ move, Maker claims $e_{2\ell}$, and Breaker has to answer by claiming $e_{2\ell-1}$. Then, Maker claims $e_{2\ell+1}$ and pairs $e_v$ with $e'_v$, which will create a $P_4$.

\paragraph{Case 4. $G$ contains an odd cycle with a unique vertex of degree at least 3.}
Let $u$ be the unique vertex of degree at least $3$ in an odd cycle in $G$. Let $e_1,\dots,e_{2\ell+1}$ be the edges of the cycle, with $u$ incident to $e_1$ and $e_{2\ell+1}$. Let $v$ be a vertex adjacent to $u$, but not in the cycle (it exists since $u$ has degree at least $3$).
Since $G$ is triangle-free, $\ell>1$.
Assume first that $\ell=2$. Since $G$ is not restricted to a $C_5$ with a leaf, and since any additional edge will create another vertex of degree at least 3 in the cycle, there must be another vertex $w$ in the graph.
If $w$ is adjacent to $u$, then Maker claims $e_2$. Then, as before for the odd path, Breaker should answer by claiming $e_1$. Then, Maker claims $e_4$, and Breaker should answer by claiming $e_3$. Finally, Maker claims $e_5$, and then pairs $uv$ with $uw$.
If $w$ is adjacent to $v$, then Maker claims $uv$. Breaker should answer by claiming $vw$, as otherwise Maker can make a $P_3$ with two free extremities. Then, Maker can force moves by claiming $e_2$ (Breaker then claims $e_1$), then $e_4$ (Breaker then claims $e_3$), and then win by claiming $e_5$.

Assume now that $\ell>2$, {\it i.e.}, the cycle has length at least 7. Maker claims $e_4$. Breaker should claim either $e_3$ or $e_5$ to avoid a $P_3$ with two free extremities. If, without loss of generality, Breaker claims $e_3$, Maker can again force moves by claiming the even edges $e_6,\dots,e_{2\ell}$. Breaker always has to answer by claiming the preceding odd edges $e_5,\dots,e_{2\ell-1}$. Then, Maker claims $e_{2\ell+1}$ and pairs $e_1$ with $uv$, making a $P_4$.
\end{proof}

The conditions given in Theorem \ref{thm:P4game} are checkable in linear time, which implies the following:

\begin{corollary}
It can be decided in linear time whether Maker wins the $P_4$-game in a given connected graph $G$.
\end{corollary}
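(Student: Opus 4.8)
The plan is to observe that, by Theorem~\ref{thm:P4game}, deciding the winner of the $P_4$-game in a connected graph reduces to testing the three listed conditions, and then to argue that each of these conditions can be checked in time linear in $|V(G)|+|E(G)|$. As a preprocessing step, I would compute the degree of every vertex in a single pass over the adjacency lists; this takes linear time and, in particular, lets me identify the set $D$ of vertices of degree at least~$3$.

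To test Condition~1, I would run a standard BFS (or DFS) that simultaneously checks bipartiteness and produces a proper $2$-colouring. Since $G$ is connected, this $2$-colouring, and hence the bipartition $(A,B)$, is unique up to exchanging the two classes; thus the requirement ``all vertices of degree at least~$3$ lie in the same part'' is well defined. If $G$ turns out to be bipartite, I would then scan $D$ and verify that all its vertices received the same colour. Every part of this runs in linear time.

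To test Condition~2, I would check whether every vertex has degree exactly~$2$; as $G$ is connected, this happens precisely when $G$ is a single cycle, after which I would check that $|V(G)|$ is odd. For Condition~3, I would use the fact that the bull, $K_4$, and the $C_5$ with a pendant leaf each have at most~$6$ vertices, so any subgraph of them has at most~$6$ vertices as well. Hence if $|V(G)|>6$ the condition fails immediately, and otherwise $G$ has constant size, so a brute-force subgraph-isomorphism test of $G$ against these three fixed graphs runs in $O(1)$ time. Reporting that Breaker wins exactly when at least one of the three conditions holds then yields a correct linear-time algorithm by Theorem~\ref{thm:P4game}.

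There is no genuine obstacle in this argument; the only point deserving care is the uniqueness of the bipartition of a connected bipartite graph, which is what makes the phrase ``in the same part'' in Condition~1 unambiguous and directly testable. For completeness, one could also handle disconnected inputs by applying the characterisation component by component and combining the outcomes with Lemma~\ref{lem:union}, but the statement only concerns connected graphs, so this is not needed here.
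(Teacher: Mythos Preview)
Your proposal is correct and follows essentially the same approach as the paper, which simply asserts that the conditions of Theorem~\ref{thm:P4game} are checkable in linear time without spelling out the details. You have merely made explicit how each of the three conditions can be tested in linear time, and all your arguments (degree computation, BFS for bipartiteness with the uniqueness of the bipartition, the degree-$2$ characterisation of cycles, and the constant-size brute force for Condition~3) are sound.
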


Solving the $P_{\ell}$-game for $\ell>4$ seems difficult, even for trees, since Maker's winning strategy can be highly non-trivial. Indeed, already for $\ell=4$, there are trees for which Maker needs an unbounded number of moves to win and needs to play in a disconnected way. An example is given in Figure~\ref{fig:badtree}, where there can be an arbitrary even number of vertices of degree 2 in the middle path. Moreover, this can be generalized for the $P_{\ell}$-game for any $\ell\geq 4$.

\begin{figure}
    \centering
    \begin{tikzpicture}
   \node[v](a) at (-0.5,-0.5) {};
   \node[v](b) at (-0.5,0.5) {};
   \foreach \I in {0,...,7}
   {\node[v](\I) at (\I,0) {};}
     \node[v](c) at (7.5,-0.5) {};
   \node[v](d) at (7.5,0.5) {};
   \draw (a)--(0)--(1)--(2)--(3)--(4)--(5)--(6)--(7)--(c);
   \draw (b)--(0) (d)--(7);
   
    \end{tikzpicture}
    \caption{A tree where Maker wins the $P_4$-game, but must play disconnected. By adding an even number of degree-$2$ vertices in the middle path, this gives a family of trees where the number of moves to win for Maker is unbounded.}
    \label{fig:badtree}
\end{figure}
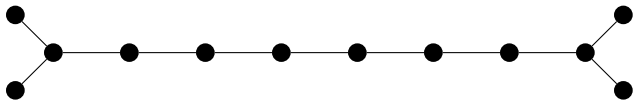

However, if one extremity of the path is fixed, the game becomes easy.
Consider the rooted path game in a tree $T$ rooted in $r$: Maker wants to make the longest possible path that starts from $r$. Let $\ell_M(T,r)$ ($\ell_B(T,r)$, respectively) be the longest path starting from $r$ that Maker can construct when she plays first (when Breaker plays first, resp.).

\begin{lemma}
Let $T$ be a tree rooted in $r$, $x_1,\dots,x_s$ the children of $r$, and $T_1,\dots,T_s$ the subtrees rooted in $x_1,\dots,x_s$. Without loss of generality, assume that $\ell_B(T_1,x_1)\geq \ell_B(T_2,x_2) \cdots \geq \ell_B(T_s,x_s)$.
Then, $\ell_M(T,r)=\ell_B(T,r)=0$ if $r$ is the single vertex in $T$. Otherwise:
$$\ell_M(T,r)=1+\ell_B(T_1,x_1)$$
$$\ell_B(T,r)=1+\ell_B(T_2,x_2).$$

In particular, they can be computed in linear time.
\end{lemma}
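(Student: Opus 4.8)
The plan is to prove the recurrence by analyzing the first move in the rooted path game, where the key insight is that a path starting from $r$ must immediately descend into exactly one of the subtrees $T_i$, and from that point on it becomes a rooted path game in $T_i$ starting from $x_i$. The base case is immediate: if $T$ is a single vertex, no edge exists, so $\ell_M(T,r)=\ell_B(T,r)=0$. For the recursive cases, I would carefully track whose turn it is and which edge $rx_i$ gets claimed, reducing the game in $T$ to a collection of independent subgames in the $T_i$'s joined at $r$.

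First I would establish the formula for $\ell_M(T,r)$, where Maker plays first. Maker's optimal opening move is to claim the edge $rx_1$, committing to the subtree $T_1$ with the largest value of $\ell_B(T_1,x_1)$. After this, Breaker moves next within $T_1$, so Maker can guarantee a path of length $\ell_B(T_1,x_1)$ continuing from $x_1$, giving a total of $1+\ell_B(T_1,x_1)$. For optimality, I would argue that Maker cannot do better: whatever edge $rx_i$ she claims first, the continuation is a rooted path game in $T_i$ with Breaker to move, yielding $1+\ell_B(T_i,x_i)\leq 1+\ell_B(T_1,x_1)$ by the ordering assumption. The subtlety is that Maker's first move might be in some $T_i$ \emph{without} claiming $rx_i$; but such a move cannot help her build a path from $r$ unless $rx_i$ is eventually hers, so I would argue these moves are never better than opening with $rx_1$ directly.

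Next I would handle $\ell_B(T,r)$, where Breaker moves first. Breaker's optimal strategy is to claim $rx_1$, destroying the most valuable subtree for Maker. Then Maker moves, and her best reply is to claim $rx_2$, after which Breaker moves within $T_2$, so Maker secures $1+\ell_B(T_2,x_2)$. The heart of the argument is showing Breaker cannot force a smaller value: after Breaker claims $rx_1$, whatever Maker does, she can always guarantee $1+\ell_B(T_2,x_2)$ by committing to $T_2$; and Breaker cannot do better than targeting $T_1$ first, since claiming any edge inside a subtree (rather than the connecting edge $rx_i$) wastes a move that Maker can exploit to seize a connecting edge herself. I would formalize this by a strategy-stealing/pairing style argument: the edges $rx_1,\dots,rx_s$ form the only gateways, and whoever first commits determines the subtree, with the tempo (who moves inside the chosen $T_i$) determining whether the $\ell_M$ or $\ell_B$ value of that subtree applies.

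The main obstacle I anticipate is rigorously justifying that moves played strictly inside a subtree $T_i$, before the gateway edge $rx_i$ is claimed, are never advantageous and can be ignored in the optimal-play analysis. A clean way to handle this is to observe that the games in distinct subtrees are edge-disjoint and only interact through $r$, so one can invoke the decomposition into independent subgames and apply an exchange argument: any premature move inside $T_i$ can be deferred without loss, reducing the analysis to the case where players only contest the gateway edges until one is claimed. Once this reduction is in place, the recurrence follows directly from the definitions of $\ell_M$ and $\ell_B$ applied to the chosen subtree, and the linear-time computability is immediate since each edge is processed once in a bottom-up traversal, sorting children at each node by their $\ell_B$ values (which, via counting arguments or a single pass to find the top two maxima, costs only linear time overall).
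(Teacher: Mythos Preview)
Your plan is correct and follows essentially the same route as the paper: analyze the first move, observe that any path from $r$ must pass through exactly one gateway edge $rx_i$, and reduce to the rooted subgame in the chosen $T_i$ with the appropriate player to move. The one minor difference is in how you propose to handle a first move that is \emph{not} a gateway edge (e.g., Breaker opening inside some $T_j$): you sketch an exchange/deferral argument, whereas the paper dispatches this more directly via monotonicity---if Breaker's first move is an edge $e$ inside $T_1$, the game becomes the rooted path game on the tree $T''$ obtained by deleting everything below $e$, and since $T''$ contains $T'=T\setminus T_1$ as a subtree, Maker does at least as well as in $T'$, i.e., at least $1+\ell_B(T_2,x_2)$. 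This subgraph-containment observation is cleaner than a general exchange argument and you may want to adopt it.
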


\begin{proof}
Consider first $\ell_M(T,r)$. Maker can construct a path of length $1+\ell_B(T_1,x_1)$ by claiming the edge $rx_1$, and then following her strategy as the second player in $T_1$ rooted in $x_1$.
If Maker claims $rx_i$ as her first edge, then Breaker can assure her path has length at most $1+\ell_B(T_i,x_i)\leq 1+\ell_B(T_1,x_1)$ by following his strategy in $T_i$ rooted in $x_i$.

Consider now Breaker as the first player. By claiming $rx_1$, the game is then equivalent to the game in the tree $T'$ where $T_1$ has been removed and Maker starts. As before, Maker will claim a path of length $1+\ell_B(T_2,x_2)$.
Breaker cannot do better: if he claims another edge that is not in $T_1$, then Maker claims $rx_1$ and can construct a path of length $1+\ell_B(T_1,x_1)$, which is worse for Breaker. If he claims an edge $e=uv$ in $T_1$, the game is equivalent to the game played in $T''$, the tree obtained by cutting the tree on the edge $e$ and keeping the part containing $r$. Since $T''$ contains $T'$, Maker will be able to construct a path of length at least $1+\ell_B(T_2,x_2)$.
\end{proof}

\subsection{$K_{1,\ell}$-game in trees}

In this section, we consider the $K_{1,\ell}$-game in trees. Otherwise said, Maker needs to claim $\ell$ edges adjacent to the same vertex. We prove that this game is tractable in trees.

\begin{theorem}\label{thm:startree}
For any tree $T$ and any fixed integer $\ell \geq 1$, it can be decided in linear time whether Maker wins the $K_{1,\ell}$-game in $T$.
\end{theorem}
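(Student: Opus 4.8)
The plan is to give a dynamic-programming algorithm that works bottom-up on the rooted tree, analogous in spirit to the rooted-path lemma that precedes this statement. First I would root the tree $T$ at an arbitrary vertex $r$. For the $K_{1,\ell}$-game, the crucial local quantity at a vertex $v$ is how many edges incident to $v$ Maker can guarantee to claim, where the relevant edges are the (at most one) parent edge and the edges going down to the children of $v$. The subtle point is that an edge $vx_i$ (to child $x_i$) is only genuinely ``available'' to help Maker build a star at $v$ if Maker can actually secure it; and whether she secures it depends on what is happening inside the subtree $T_i$ as well. So the natural state to compute for each subtree $T_v$ rooted at $v$ is a small bounded amount of information: whether Maker can already win (complete a $K_{1,\ell}$) strictly inside $T_v$, and, if not, how many edges at $v$ Maker can guarantee to own toward a star centered at $v$, distinguishing the cases of Maker moving first versus second in the relevant subgame.

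The key steps, in order, would be as follows. Step one: establish that because $T$ is a tree, the only winning sets are stars centered at the various vertices, and these stars interact only locally — a star at $v$ uses the parent edge of $v$ and some child edges of $v$, which lets the global game decompose along the tree structure. Step two: define, for each vertex $v$, the parameter $g(v)$ equal to the maximum number of edges incident to $v$ (counting the edge to its parent as contestable in the subgame that includes the parent, or excluding it when analysing $T_v$ in isolation) that Maker can guarantee, together with a Boolean flag recording whether Maker can outright win inside $T_v$. Step three: give the recurrence expressing $g(v)$ and the win-flag in terms of those of the children $x_1,\dots,x_s$. Here I would use the observation that at a vertex with children contributing various ``threat levels'', the game of grabbing incident edges is essentially a pairing/counting game: Breaker will answer each Maker move at $v$, so Maker can secure roughly half of the edges at $v$ that she commits to, unless some child subtree forces Breaker to respond there instead, freeing up an edge at $v$. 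This is where Lemma~\ref{super lemma} and a pairing argument let me turn the local combinatorics into a clean arithmetic rule.

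The main obstacle I anticipate is getting the interaction between the vertical threat (a child subtree threatening its own star, forcing Breaker to defend there) and the horizontal threat (Maker amassing $\ell$ edges at $v$) exactly right, and in particular handling the first-player advantage correctly through the recursion. Concretely, when Maker plays an edge $vx_i$, she both advances the star at $v$ and may hand herself a tempo inside $T_i$; symmetrically, a subtree that is ``one move from winning'' for Maker acts like a free edge at its root because Breaker dare not ignore it. Capturing this with a finite, bounded state (so that the DP table per vertex has size depending only on $\ell$, not on $n$) is the crux. I would formalize it by proving that only the truncated value $\min(g(v),\ell)$ and the win-flag matter, so the state space is $O(\ell)$ per vertex, and each recurrence step costs $O(\ell + \deg(v))$; summing over all vertices then yields the claimed linear running time for fixed $\ell$.

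The plan is therefore: (1) reduce to analysing stars and argue locality; (2) define the bounded per-subtree state; (3) prove the recurrence via pairing strategies and an exchange/tempo argument, citing Lemma~\ref{super lemma} to justify that ``similar'' pending edges can be assumed split one-each between the players; and (4) conclude that evaluating the recurrence bottom-up decides the root game in linear time. I expect step (3), pinning down the recurrence, to be the hard part, with correctness of both directions (Maker's strategy realizing the computed value, and Breaker's strategy preventing more) requiring a careful but ultimately routine induction on the tree.
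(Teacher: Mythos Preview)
Your plan takes a genuinely different route from the paper. The paper does \emph{not} compute game values by dynamic programming; instead it first proves a purely structural characterization (Theorem~\ref{structurestar}): Maker wins the $K_{1,\ell}$-game in $T$ if and only if $T$ contains a subtree $T'$ in which every vertex $x$ satisfies $d_T(x)\ge 2\ell-1-d_{T'}(x)$. This rests on three lemmas: a vertex of degree $\ge 2\ell-1$ is an immediate Maker win (Lemma~\ref{lem:highdegree}); if $\Delta(T)\le 2\ell-2$ and at most one vertex attains $2\ell-2$, a simple pairing gives Breaker the win (Lemma~\ref{lem:lowdegree}); and a \emph{cut lemma} (Lemma~\ref{lem:decomposition}) showing that at any vertex $u$ of degree exactly $2\ell-2$ one may sever an incident edge $uw$, add one pendent at $u$ and two at $w$, and the outcome is unchanged. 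The linear algorithm then just checks the characterization: root $T$, label each vertex by its degree, process vertices from the leaves up, and whenever a vertex's label equals $2\ell-2$ increment its parent's label; Maker wins iff some label reaches $2\ell-1$.

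Your DP idea is not wrong in spirit, but the part you flag as ``the hard part''---the actual recurrence and its correctness---is precisely what the paper's cut lemma resolves, and you have not supplied it. In particular, the state you propose (a win-flag plus $g(v)=$ ``edges at $v$ Maker can guarantee'') is underspecified: what Maker can guarantee at $v$ depends on interactions with the parent edge and with threats in sibling subtrees, so $g(v)$ is not obviously a function of $T_v$ alone. The paper's insight is that all of this tempo reasoning collapses to a single degree threshold ($2\ell-2$) and an additive ``one extra virtual edge to the parent'' rule, which is exactly what Lemma~\ref{lem:decomposition} proves. If you try to push your DP through, you will almost certainly need to prove something equivalent to that lemma to justify the recurrence, at which point the structural characterization drops out and the elaborate per-vertex game-value bookkeeping becomes unnecessary.
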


To prove this theorem, we need three structural lemmas. The first one is true for the $K_{1,\ell}$-game in any graph.

\begin{lemma}\label{lem:highdegree}
For any graph $G$ and any fixed integer $\ell \geq 1$, if $G$ contains a vertex of degree at least $2\ell-1$, then Maker wins the $K_{1,\ell}$-game in $G$.
\end{lemma}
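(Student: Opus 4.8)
The plan is to give Maker an explicit greedy strategy concentrated on the high-degree vertex. Let $v$ be a vertex with $\deg_G(v) = d \geq 2\ell - 1$, and let $S$ denote the set of $d$ edges incident to $v$. Maker's strategy is simply to always claim an unclaimed edge of $S$ on each of her turns, ignoring the rest of the graph entirely. Since claiming $\ell$ edges of $S$ means claiming $\ell$ edges sharing the common endpoint $v$, this is exactly a copy of $K_{1,\ell}$ centered at $v$, so it suffices to show that Maker can keep making such moves for $\ell$ rounds.

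The core of the argument is a counting bound on how many edges of $S$ can be claimed before a given Maker move. Recalling that Maker plays first, consider Maker's $k$-th move for $1 \le k \le \ell$. At that point, Maker has completed $k-1$ moves and Breaker has completed at most $k-1$ moves, so at most $2(k-1)$ edges of $S$ have been claimed by either player (this is the worst case, in which Breaker also spends every move inside $S$ trying to block; if Breaker plays outside $S$, only fewer edges of $S$ are gone, which helps Maker). I would then observe that for every $k \le \ell$ we have
$$2(k-1) \le 2(\ell - 1) = 2\ell - 2 < 2\ell - 1 \le d = |S|,$$
so strictly fewer than $|S|$ edges of $S$ are claimed, guaranteeing that an unclaimed edge of $S$ remains for Maker to take. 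Hence Maker can indeed execute her strategy through her $\ell$-th move.

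After her $\ell$-th move, Maker owns $\ell$ edges of $S$, which form a $K_{1,\ell}$ at $v$, so she wins. I would close by noting that this completes the proof and that the bound $2\ell-1$ is exactly the threshold forced by this pairing-type counting.

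I do not anticipate a genuine obstacle here: the statement is a clean degree/counting fact, and the only point requiring care is getting the off-by-one bookkeeping right, namely that the binding constraint comes from Maker's final ($\ell$-th) move and that Maker moving first is what makes $2\ell-1$ (rather than $2\ell$) suffice. The argument is essentially a one-sided pairing/greedy strategy and needs no appeal to the earlier folklore lemmas, though it is consistent with Lemma~\ref{subgraph} in that adding more edges at $v$ only helps Maker.
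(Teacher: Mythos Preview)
Your proof is correct and follows exactly the same greedy approach as the paper, which simply states that Maker claims any $\ell$ edges incident to the high-degree vertex in her first $\ell$ moves. You have merely fleshed out the counting that justifies this one-line argument.
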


\begin{proof}
Let $u$ be a vertex of degree at least $2\ell-1$. Maker claims any $\ell$ edges incident to $u$ in her first $\ell$ moves.
\end{proof}

\begin{lemma}\label{lem:lowdegree}    
For any fixed integer $\ell \geq 1$, if $T$ is a tree with maximum degree at most $2\ell-2$ and at most one vertex of degree $2\ell-2$, then Breaker wins the $K_{1,\ell}$-game in $T$.
\end{lemma}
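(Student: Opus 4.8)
The plan is to exhibit an explicit \emph{pairing strategy} for Breaker. Recall that Maker completes a $K_{1,\ell}$ exactly when she claims $\ell$ edges incident to a common vertex, so it suffices for Breaker to guarantee that, at \emph{every} vertex $v$, he claims at least $d(v)-\ell+1$ of the edges incident to $v$ (equivalently, that Maker ends up with at most $\ell-1$ of them). Since a pairing strategy guarantees Breaker at least one edge of each pair, my goal is to choose a set of disjoint pairs of edges so that, at each vertex $v$, enough of the edges incident to $v$ are paired \emph{locally} (both edges of the pair sharing $v$) to force Maker's count at $v$ below $\ell$.

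First I would fix a root. If $T$ has a (necessarily unique) vertex of degree $2\ell-2$, I root $T$ there; otherwise I root $T$ at an arbitrary vertex, so that every vertex has degree at most $2\ell-3$. I then orient each edge toward the root, giving every non-root vertex one parent edge and $d(v)-1$ child edges (the root has $d(v)$ child edges and no parent edge). The pairing is defined vertex-by-vertex: at each vertex $v$, pair up its child edges among themselves into $\lfloor (\text{number of children})/2\rfloor$ pairs, leaving one child edge unpaired if the number of children is odd. Each edge is a child edge of exactly one vertex, so these pairs are pairwise disjoint and the pairing is well defined. Breaker then plays the associated pairing strategy: whenever Maker claims an edge whose partner is still free, Breaker claims the partner, and otherwise he claims an arbitrary free edge; this ensures Breaker owns at least one edge of every pair.

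It remains to bound Maker's haul at each vertex $v$, which I would do by a short parity case analysis. Every pair local to $v$ yields Breaker at least one edge, so Maker gets at most one edge from each local pair; in addition she may get at most one leftover (unpaired) child edge and the parent edge of $v$ (which is paired at $v$'s parent and so is not controlled at $v$). For a non-root vertex $v$ of degree $d$: if $d$ is odd then all $d-1$ child edges are locally paired and Maker gets at most $(d-1)/2+1=(d+1)/2\le \ell-1$ (using $d\le 2\ell-3$); if $d$ is even then Maker gets at most $(d-2)/2+1+1=d/2+1\le \ell-1$ (using $d\le 2\ell-4$, since even and at most $2\ell-3$). For the root $r$ of degree $2\ell-2$, all its edges are child edges and split into $\ell-1$ local pairs, so Maker gets at most $\ell-1$; when there is no degree-$(2\ell-2)$ vertex, the root has degree at most $2\ell-3$ and Maker gets at most $\lceil (2\ell-3)/2\rceil=\ell-1$. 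In every case Maker obtains at most $\ell-1$ edges at $v$, so she never completes a $K_{1,\ell}$, and Breaker wins.

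The main obstacle is precisely the even-degree vertices: for such a vertex the leftover child edge together with the uncontrolled parent edge costs Maker two ``free'' edges, which makes the naive count off by one exactly when $d=2\ell-2$. Rooting at the unique degree-$(2\ell-2)$ vertex is the key idea that removes this slack, since there the parent edge disappears and all $2\ell-2$ edges can be paired locally, while the hypothesis that no \emph{other} vertex attains degree $2\ell-2$ guarantees the tighter bound $d\le 2\ell-3$ needed everywhere else. Finally, the rooting, orientation, and pairing are all computable in linear time, which is consistent with the linear-time conclusion of Theorem~\ref{thm:startree}.
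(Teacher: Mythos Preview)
Your proof is correct and follows essentially the same approach as the paper: root $T$ at a vertex of maximum degree, pair the child edges at each vertex, and count. Your case analysis is in fact a bit more explicit than the paper's, but the idea (and the key observation that rooting at the unique degree-$(2\ell-2)$ vertex eliminates the problematic parent edge there) is identical.
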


\begin{proof}
Let $r$ be a vertex of maximum degree (possibly $2\ell-2$) and root $T$ in $r$.
We define a pairing strategy for Breaker as follows.
For any vertex $u\in V(T)$, let $v_1,\dots,v_{t_u}$ be its children, and pair together the edges $\{uv_{2i-1},uv_{2i}\}$ for $i=1$ to $i=\lfloor t_u/2 \rfloor$. Breaker, playing this pairing strategy, will claim at least half of the edges incident to $r$ (so Maker will claim at most $\ell-1$ edges incident to $r$) and, for any other vertex $w\in V(T)$, Breaker will claim at least $\lfloor \frac{t_w-1}{2}\rfloor$ edges incident to $w$. Since $t_w<2\ell-2$ as $r$ is the only vertex that could have degree $2\ell-2$, Maker will claim at most $\ell-1$ edges incident to $u$.
\end{proof}

The next lemma will enable us to cut a tree into several components when there is a vertex of degree $2\ell-2$. We first describe the cut operation we are using. Let $T$ be a tree and let $uw\in E(T)$. Let $T_u$ and $T_w$ be the two trees composing the forest $T\setminus \{uw\}$, with $u\in V(T_u)$ and $w\in V(T_w)$. Let $T_1$ be the tree obtained from $T_u$ by adding one pendent edge incident to $u$, and let $T_2$ be the tree obtained from $T_w$ by adding two pendent edges incident to $w$. We call the forest $T_1\cup T_2$ a $(T,u,uw)$-{\em cut}. See Figure \ref{fig:cut} for an illustration.

\begin{figure}
    \centering
    \begin{tikzpicture}
    \node[v](u) at (0,0) {};
    \node[v](w) at (1,0) {};
    \node[below=5pt] at (u) {$u$};
    \node[below=5pt] at (w) {$w$};
    \draw[e] (u)--(w);
    \draw[draw] (u) -- (-1.2,0.8) -- (-1.2,-0.8) -- (u);
    \node at (-0.8,0) {$T_u$};
     \draw[draw] (w) -- (2.2,0.8) -- (2.2,-0.8) -- (w);
    \node at (1.8,0) {$T_w$};
    \node at (0.5,-1) {$T$};

    \draw[->] (3.5,0) to node[above] {$(T,u,uw)$-cut} (5,0);
    
    \begin{scope}[shift={(8,0)}]
        \node[v](u) at (0,0) {};
    \node[v](w1) at (1,0) {};
      \node[v](u1) at (2,0.5) {};
    \node[v](u2) at (2,-0.5) {};
     \node[v](w) at (3,0) {};
    \node[below=5pt] at (u) {$u$};
    \node[below=5pt] at (w) {$w$};

\draw[e] (u)--(w1);
\draw[e] (u1)--(w)--(u2);
    \draw[draw] (u) -- (-1.2,0.8) -- (-1.2,-0.8) -- (u);
    \node at (-0.6,0) {$T_u$};
     \draw[draw] (w) -- (4.2,0.8) -- (4.2,-0.8) -- (w);
    \node at (3.6,0) {$T_w$};

    \node at (0.2,-1) {$T_1$};
    \node at (3.2,-1) {$T_2$};
    \end{scope}
    \end{tikzpicture}
    \caption{Illustration of a $(T,u,uw)$-cut. If $u$ has degree $2\ell-2$, then the $K_{1,\ell}$-game in $T$ or in the cut has the same outcome.}
    \label{fig:cut}
\end{figure}
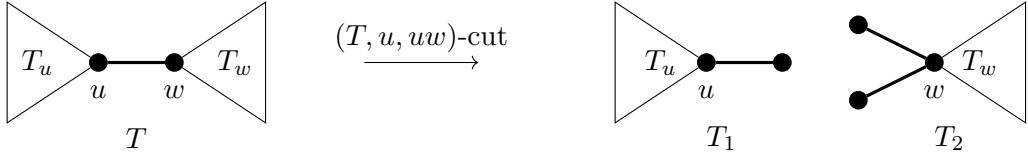

\begin{lemma}\label{lem:decomposition}
For any fixed integer $\ell \geq 1$, let $T$ be a tree with a vertex $u$ of degree $2\ell-2$, $w$ any neighbor of $u$, and $T_1\cup T_2$ the $(T,u,uw)$-cut. Maker wins the $K_{1,\ell}$-game in $T$ if and only if she wins the $K_{1,\ell}$-game in $T_1\cup T_2$.
\end{lemma}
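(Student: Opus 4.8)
The plan is to prove the equivalent statement that Maker wins the $K_{1,\ell}$-game in $T$ if and only if she wins it in $T_1$ or in $T_2$; since $T_1$ and $T_2$ are vertex- (hence edge-) disjoint, this is equivalent to the claim by Lemma~\ref{lem:union}. Two reductions would set the stage. First, with $T_u,T_w$ the two sides of $uw$, the graph $R:=T[V(T_u)\cup\{w\}]$ (that is, $T_u$ together with the single edge $uw$, so that $w$ becomes a leaf) is isomorphic to $T_1$ by identifying the added pendant vertex with $w$; crucially $\deg_R(u)=\deg_{T_u}(u)+1=2\ell-2=\deg_T(u)$, so $R$ has exactly the same stars centered at $u$ (and at the other vertices of $T_u$) as $T$ does. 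This is precisely why a single pendant edge is added at $u$. Second, the two pendant edges $q_1,q_2$ added at $w$ in $T_2$ are \emph{twins} in the sense of Lemma~\ref{super lemma} (claiming either leaves Maker needing the same family of $(\ell-1)$-subsets of the remaining edges at $w$), so the outcome of $T_2$ equals that of the position in which Maker owns $q_1$ and Breaker owns $q_2$. I will call this the \emph{bonus game on $T_w$}: it is played on $E(T_w)$, Maker already holds one edge at $w$, so a star at $w$ needs only $\ell-1$ further edges at $w$. Thus Maker wins $T_2$ iff she wins the bonus game on $T_w$ playing first, and likewise for Breaker.

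The easy half of the equivalence will be that Maker winning $T_1$ implies she wins $T$: since $T_1\cong R$ and $R$ is a subgraph of $T$ whose stars are stars of $T$, this is immediate from Lemma~\ref{subgraph}. For the converse half, that Maker winning $T_2$ implies she wins $T$, the idea is to have Maker open with the edge $uw$. This single move serves simultaneously as the bonus edge at $u$ and at $w$: afterwards Maker is the \emph{second} player in the edge-disjoint union of the bonus game on $T_u$ and the bonus game on $T_w$. The bonus game on $T_w$ is a first-player win by the previous paragraph (using that Maker wins $T_2$), and the bonus game on $T_u$ is \emph{always} a first-player win — going first, Maker simply grabs edges at $u$ and secures $\lceil(2\ell-3)/2\rceil=\ell-1$ of the $2\ell-3$ remaining edges at $u$, which together with the bonus completes a star; here the parity of $2\ell-3$, forced by $\deg_T(u)=2\ell-2$, is exactly what makes ``half'' enough. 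It then remains to observe that if Maker wins each of two edge-disjoint games as first player, she wins their union as second player: reading off Breaker's first move, she commits to whichever of the two games Breaker did \emph{not} open, runs her first-player strategy there, and treats every Breaker move in the other game as a pass. Applying this to the two bonus games would finish this direction.

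For the last direction I would prove the contrapositive: if Breaker wins both $T_1$ and $T_2$, he wins $T$. He plays a product of two strategies, always responding in whichever region Maker just played. On the region $R=E(T_u)\cup\{uw\}$ he runs his winning strategy for $T_1\cong R$; on the region $E(T_w)$ he runs his winning strategy for the bonus game on $T_w$, pretending throughout that Maker already owns the bonus edge at $w$ (the worst case for him). Since Breaker always answers in Maker's region, each sub-game is played out as a legitimate Maker-first game and both strategies stay valid. Any star Maker could build is centered either in $T_u\cup\{u\}$, where all incident edges lie in $R$ and $\deg_R(u)=2\ell-2$ makes it a star of the $R$-game, hence blocked; or at $w$ or inside $T_w$, where Breaker's bonus strategy blocks every star of the $T_w$-game, in particular capping Maker at $\ell-2$ edges at $w$ among $E(T_w)$, so that even using $uw$ she holds at most $\ell-1<\ell$ edges at $w$.

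The main obstacle, and the part needing the most care, is the bookkeeping around the junction edge $uw$, which must play a double role (an edge at $u$ \emph{and} an edge at $w$) while being claimable by a single player. The two reductions are what disentangle this: the lone pendant at $u$ keeps $\deg u=2\ell-2$ so the $u$-side is captured faithfully by $R$, while the twin pendants at $w$ let Lemma~\ref{super lemma} convert the contested edge $uw$ into a ``bonus'' that can be accounted for independently on the $w$-side. The most delicate point is the tempo argument in the Maker direction — justifying that spending one move on $uw$ to open both bonus games at once still yields a win, via the observation that first-player wins of two edge-disjoint games combine into a second-player win of their union; verifying this observation and the exact asymmetry it forces (one pendant at $u$, two at $w$) is where the real work lies.
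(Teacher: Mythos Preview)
Your proof is correct and uses the same key ingredients as the paper: Lemma~\ref{lem:union} to split into $T_1$ and $T_2$, Lemma~\ref{super lemma} to replace the twin pendants $q_1,q_2$ by a ``bonus'' edge at $w$, Lemma~\ref{subgraph} for the $T_1$-case, and the opening move $uw$ for the $T_2$-case. The one substantive presentational difference is in the direction ``Maker wins in $T$ $\Rightarrow$ Maker wins in $T_1\cup T_2$'': the paper proves this directly by giving Maker a strategy in $T_1\cup T_2$ (she pairs $q_1$ with $q_2$ and otherwise mirrors her $T$-strategy under the identification $e\leftrightarrow uw$), whereas you prove the contrapositive by exhibiting Breaker's product strategy in $T$. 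Your $T_2$-case is also packaged a bit more abstractly --- as the general principle that two edge-disjoint first-player Maker wins yield a second-player Maker win in their union --- while the paper argues the specific instance (after $uw$, Breaker must answer at $u$, then Maker runs her $(T_2,\{q_1\},\{q_2\})$ strategy on $T_w$); both arguments rest on the same ``extra moves never hurt Maker'' monotonicity, which your phrase ``treats every Breaker move in the other game as a pass'' invokes correctly.
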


\begin{proof}
Let $e$ be the edge added to $u$ in $T_1$, and $e_1,e_2$ the two edges added to $w$ in $T_2$.
Assume that Maker wins in $T$. In $T_1\cup T_2$, Maker plays as follows. She pairs the edges  $e_1$ and $e_2$, and associates all the other edges with their corresponding edges in $T$ ($e$ is associated with $uw$). Then, she plays as in $T$. If Breaker claims $e_1$ or $e_2$, then she claims the other edge. At the end of the game in $T$, there is a vertex $x$ in $T$ that has $\ell$ incident edges claimed by Maker. If $x\neq w$, it has the same $\ell$ incident edges at the end of the game in $T_1\cup T_2$ (with possibly the edge $uw$ replaced by $e$ if $x=u$). If $x=w$, and Maker claimed the edge $uw$ that is not present in $T_2$, in the game in $T$, then Maker claimed the same $\ell$ edges in the game in $T_1\cup T_2$ except that $uw$ has been replaced by one of the edges $e_1$ or $e_2$.

For the other direction, assume that Maker has a winning strategy in $T_1\cup T_2$. By Lemma~\ref{lem:union}, she wins either in $T_1$ or in $T_2$. 
Assume she wins in $T_1$. Then, she can follow the same strategy in $T$ without taking care of Breaker's moves in the rest of $T$ (and considering that $e=uw$).
Assume now she wins in $T_2$. By Lemma \ref{super lemma}, since $e_1$ and $e_2$ are interchangeable in any winning set that contains one of them, Maker also wins in $(T_2,\{e_1\},\{e_2\})$, ({\it i.e.}, $T_2$ where $e_1$ is claimed by Maker and $e_2$ by Breaker).

To win in $T$, Maker first claims $uw$. Then, Breaker should answer by claiming an edge incident to $u$ since $u$ has degree $2\ell-2$ (otherwise, Maker can claim $\ell$ edges incident to $u$). Then, Maker follows her winning strategy in $(T_2,\{e_1\},\{e_2\})$ in $T_w$. The unclaimed edges are in a one-to-one correspondence, and the vertices have the same number of edges claimed by Maker (that is, one for $w$, and $0$ for the other vertices). Thus, Maker will win in $T$.
\end{proof}

The next theorem gives a necessary and sufficient structural condition for Maker to win the $K_{1,\ell}$-game in trees. This will imply Theorem~\ref{thm:startree} since it is easy to check if a tree has this structure (see the proof of Theorem~\ref{thm:startree} below).

\begin{theorem}\label{structurestar}
Let $T$ be a tree and $\ell\geq 1$ a fixed integer.
Maker wins the $K_{1,\ell}$-game in $T$ if and only if there is a subtree $T'$ of $T$ such that every vertex $x$ of $T'$ has degree at least $2\ell-1-d_{T'}(x)$ in $T$, where $d_{T'}(x)$ denotes the degree of $x$ in $T'$. 
\end{theorem}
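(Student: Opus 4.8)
The plan is to prove both directions by leveraging the three structural lemmas, with the characterizing subtree $T'$ serving as a certificate for Maker. The key insight is that the condition ``every vertex $x$ of $T'$ has degree at least $2\ell-1-d_{T'}(x)$ in $T$'' encodes exactly when Maker can simultaneously threaten a star at some vertex: a vertex $x$ with $d_{T'}(x)$ edges inside $T'$ that Maker will fight for, plus its remaining $\deg_T(x) - d_{T'}(x) \geq 2\ell-1-2d_{T'}(x)$ ``free'' edges outside $T'$, should give Maker enough incident edges to build $K_{1,\ell}$.

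\paragraph{Sufficiency (the subtree $T'$ exists $\Rightarrow$ Maker wins).} I would first handle the case where $T'$ contains a single vertex $x$: then the condition reads $\deg_T(x) \geq 2\ell-1$, and Maker wins directly by Lemma~\ref{lem:highdegree}. For the general case, the plan is to give Maker a strategy built from a pairing on the edges of $T'$ combined with claiming free edges. Concretely, orient or decompose $T'$ and have Maker use the threat structure: whenever Breaker does not interfere, Maker greedily claims edges of $T'$; the degree condition guarantees that at the vertex where Maker is blocked inside $T'$, she has claimed $d_{T'}(x)$ edges and can pivot to the $\geq 2\ell-1-2d_{T'}(x)$ edges of $T$ outside $T'$ at $x$, using a Maker-wins-the-majority argument (she claims at least half of those, namely $\geq \ell - d_{T'}(x)$, which together with the $d_{T'}(x)$ star-edges inside $T'$ gives $\ell$). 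The cleanest route may be induction on $|E(T')|$ using Lemma~\ref{lem:decomposition} to cut at a vertex of degree exactly $2\ell-2$, reducing to smaller instances whose certifying subtrees are inherited from $T'$.

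\paragraph{Necessity (Maker wins $\Rightarrow$ the subtree exists).} I would prove the contrapositive: if no such subtree $T'$ exists, then Breaker wins. The natural approach is a reduction procedure. Repeatedly apply Lemma~\ref{lem:decomposition} at vertices of degree exactly $2\ell-2$ to cut $T$ into components; this preserves the game outcome. The goal is to show that, in the absence of a certifying subtree, this reduction terminates in a forest where every component is a tree of maximum degree at most $2\ell-2$ with at most one vertex of degree exactly $2\ell-2$, so that Lemma~\ref{lem:lowdegree} gives Breaker a win in each component, and hence (by Lemma~\ref{lem:union}, since the components are edge-disjoint) in the whole forest, and thus in $T$. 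The crux is arguing that the non-existence of $T'$ is exactly what guarantees no vertex survives the reduction with degree $\geq 2\ell-1$: I would show that a vertex of degree $\geq 2\ell-1$ at any stage, or more generally an ``iterated'' high-degree configuration, can be back-translated into a certifying subtree $T'$ of the original $T$, where the $d_{T'}(x)$ term accounts precisely for the cut-edges that were added/removed during the decomposition.

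\paragraph{Main obstacle.} The hard part will be making the necessity direction's bookkeeping rigorous: tracking how the pendant edges added by successive $(T,u,uw)$-cuts correspond to degrees in the original tree, and proving the equivalence ``reduction fails to trivialize $\Leftrightarrow$ certifying subtree exists.'' In particular, I expect the delicate point to be defining $T'$ from a successful Maker strategy (or from a surviving high-degree vertex after cuts) so that the local inequality $\deg_T(x) \geq 2\ell-1-d_{T'}(x)$ holds simultaneously at every vertex of $T'$; the $d_{T'}(x)$ ``credit'' must be distributed consistently along the tree, which suggests setting up the subtree greedily from a root where Maker's winning threats accumulate and verifying the inequality propagates. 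Lemma~\ref{super lemma} will be essential for justifying that the two pendant edges $e_1,e_2$ in each cut are interchangeable, which is what makes the decomposition outcome-preserving.
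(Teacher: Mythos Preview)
Your inductive approach via Lemma~\ref{lem:decomposition} is essentially the paper's proof: the paper runs a single induction on the number of inner vertices of $T$, in each step picking two vertices $u,v$ of degree exactly $2\ell-2$ at maximum distance, cutting along the edge $uw$ toward $v$, observing that $T_1$ is a Breaker win by Lemma~\ref{lem:lowdegree}, and then translating a certifying subtree $T'_2$ of $T_2$ back to a certifying $T'$ of $T$ (and conversely). Your separation into two directions and your suggestion to induct on $|E(T')|$ for sufficiency work equally well and amount to the same cut-and-translate bookkeeping.

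However, your \emph{direct} sufficiency strategy is not sound as written. The sentence ``at the vertex where Maker is blocked inside $T'$, she has claimed $d_{T'}(x)$ edges'' is unjustified: being blocked means Breaker has taken some of those $d_{T'}(x)$ edges, so Maker has fewer than $d_{T'}(x)$ of them, and the arithmetic $\ell - d_{T'}(x) + d_{T'}(x) = \ell$ collapses. Moreover, the count ``$\geq 2\ell-1-2d_{T'}(x)$ edges outside $T'$'' can be negative at internal vertices of $T'$ (where the hypothesis $\deg_T(x)\geq 2\ell-1-d_{T'}(x)$ is weak), so there is no single vertex at which a greedy pivot is guaranteed to succeed. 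This is precisely why the cut lemma is needed: the strategy at a leaf of $T'$ is not ``claim all of $T'$ first,'' but rather ``claim the edge $uw$ toward the rest of $T'$, force Breaker to respond at $u$, and then recurse in the smaller instance $T_2$ where $w$ has gained a degree.'' Drop the direct argument and keep the inductive one.

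One small bookkeeping point for necessity: make sure your repeated-cut process terminates. Each cut strictly reduces the number of inner vertices (or edges) of the component being cut, so induction on that quantity, as the paper does, handles it cleanly; a bare ``repeat until done'' needs this measure stated explicitly. Your back-translation idea (a degree-$\geq 2\ell-1$ vertex appearing after cuts yields a certifying $T'$ in the original tree by re-attaching along the cut edges) is exactly right and matches the paper's computation that $d_T(w)=d_{T_2}(w)-1\geq 2\ell-1-d_{T'}(w)$.
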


Note that if $T'$ is reduced to a single vertex, it must have degree $2\ell-1$ in $T$, which corresponds to Lemma \ref{lem:highdegree}. Otherwise, all the leaves of $T'$ must be of degree $2\ell-2$ in $T$. The theorem says that these leaves must be connected by vertices of sufficiently large degree.

\begin{proof}
We prove the equivalence by induction on the number of inner vertices of $T$ ({\it i.e.}, vertices of degree at least 2).
Assume that $T$ has only one inner vertex $u$. The only possibility for $T'$ is to be the single vertex $u$, and Maker wins if and only if $u$ has degree $2\ell-1$. Thus, the equivalence is true.

Assume now, by induction, that the result is true for any tree with at most $i-1\geq 1$ inner vertices. Let $T$ be a tree with $i$ inner vertices. 
If $T$ has at least one vertex $u$ of degree at least $2\ell-1$, then take $T'$ to be the single vertex $u$, and Maker wins by Lemma~\ref{lem:highdegree}.
If $T$ has maximum degree at most $2\ell-2$, and $T$ contains at most one vertex of degree $2\ell-2$, it is not possible to have a subtree $T'$ satisfying the property. Indeed, since $T'$ cannot be a single vertex, it must have at least two leaves of degree $2\ell-2$ in $T$. By Lemma~\ref{lem:lowdegree}, Breaker wins, and so, the equivalence is true.

Thus, we can assume that $T$ has at least two vertices of degree exactly $2\ell-2$.
Let $u$ and $v$ be two vertices of degree $2\ell-2$ at maximum distance from each other. Let $w$ be the neighbor of $u$ on the path between $u$ and $v$. Let $T_1\cup T_2$ be the $(T,u,uw)$-cut.
Note that, by the maximality of the distance between $u$ and $v$, the vertex $u$ is the only vertex of $T_1$ of degree $2\ell-2$. In particular, Breaker wins in $T_1$ by Lemma~\ref{lem:lowdegree}.
By Lemma~\ref{lem:decomposition}, Maker wins in $T$ if and only if she wins in $T_1\cup T_2$. Then, since Breaker wins in $T_1$, by Lemma~\ref{lem:union}, Maker wins in $T$ if and only if she wins in $T_2$.
The tree $T_2$ has strictly less than $i$ inner vertices, and thus, by induction, Maker wins in $T_2$ if and only if there is a tree $T'_2$ such that every vertex of degree $t$ in $T'_2$ has degree
$2\ell-1-t$ in $T_2$.

Assume Maker wins in $T$. This means that such a $T'_2$ exists. If $w \notin V(T'_2)$, then it is a valid tree $T'$ for $T$. If $w \in V(T'_2)$, then $T'_2$ does not contain the two pendent edges incident to $w$ (since no leaf of $T_2$ can belong to $T'_2$). Then, let $T'$ be the subtree of $T$ obtained from $T'_2$ by adding $u$. Note that $T'_2$ cannot have leaves of $T$ as vertices, and thus, all its vertices are inner vertices of $T$. Since $u$ is a leaf of $T'$ and has degree $2\ell-2$ in $T$, it satisfies the degree condition. This is also true for all the other vertices of $T'$ except $w$. Let $t_2$ be the degree of $w$ in $T'_2$. By hypothesis, its degree in $T_2$, $d_{T_2}(w)$, is at least $2\ell-1-t_2$. The vertex $w$ has degree $t_2+1$ in $T'$ and $d_{T_2}(w)-1$ in $T$. Thus, $d_T(w)=d_{T_2}(w)-1\geq 2\ell-1-t_2-1=2\ell-1-d_{T'}(w)$, and $T'$ satisfies the property.

For the other direction, assume that there is a subtree $T'$ valid in $T$. Let $uw'$ be the pendent edge in $T_1$ ({\it i.e.}, $w'\notin V(T)$). If $T'$ is a subtree of $T_1 \setminus \{w'\}$, then, by induction, Maker wins in $T_1$, and, by Lemmas~\ref{lem:union}~and~\ref{lem:decomposition}, she wins in $T$. Otherwise, let $V(T'_2)=V(T')\cap V(T_2)$. As before, one can prove that the degree condition is correct for each vertex in $T'_2$ in $T_2$ (indeed, if $u$ was in $T'$, then $w$ has one less neighbor in $T'_2$ than it did in $T'$, but this is compensated by the fact that $w$ has one more neighbor in $T_2$). Thus, by induction, Maker wins in $T_2$, and so, by Lemmas~\ref{lem:union}~and~\ref{lem:decomposition}, Maker wins in $T$.

Therefore, the equivalence is true for $T$, and, by induction, is true for all $T$.
\end{proof}

This characterization implies a linear-time algorithm for this game.

\begin{proof}[Proof of Theorem \ref{thm:startree}]
To find such a $T'$, one can do a breadth-first search starting from any vertex of $T$. Label each vertex with its degree in $T$. Then, consider each vertex $x$, starting from the deepest level. If $x$ has label $2\ell-2$, then increment the label of the parent of $x$ by 1, and otherwise, do nothing.
If at some point a vertex is labeled with $2\ell-1$, then Maker wins, and otherwise, Breaker wins. Note that when we increment the label of a vertex, it corresponds to the cut operation. Precisely, if a vertex $v$ receives a label $2\ell-1$, let $T_v$ be the subtree rooted in $v$. Then, a subtree $T'$ satisfying the requirement is the inclusion-minimal subtree of $T_v$ containing $v$ and whose leaves ($\neq v$) have degree $2\ell-2$ in $T$.
\end{proof}

Whether the $K_{1,\ell}$-game is polynomial-time solvable for general graphs is still open, but we prove in the next section that it is \FPT\ parameterized by the length of the game.

\section{\FPT\ Algorithms}\label{sec:FPT}

In this section, we consider the $H$-game parameterized by the length of the game $k$. That is, given a fixed integer $k \in \mathbb{N}$, a fixed graph $H$, and a graph $G$ as input, the problem consists in deciding whether Maker wins the $H$-game in at most $k$ moves. 

%Note that, using a similar argument to Bonnet et al.~\cite{Bonnet2017}, we can express the winning conditions in first-order logic formulas having a number of quantifiers equal to the number of moves in the game. Therefore, thanks to meta-theorems~\cite{Bonnet2020,Courcelle1990, Frick2001}, it is possible to compute the winner in \FPT\ time parameterized by $k$ in any graph class in which first-order model checking can be done in \FPT\ time parameterized by the number of variables. In particular, determining whether Maker can build a copy of $H$ in $k$ moves is \FPT\ in graphs of bounded twin-width or bounded degree for example. However, these meta-theorems do not provide an optimal strategy for the winning player. We provide here a proof that does not use these theorems to obtain an \FPT\ algorithm that computes the winner in trees for any graph $H$, or when $H$ is a star in any graph $G$.

In particular, we prove that determining the outcome for the $K_{1,\ell}$-game in any graph and the $H$-game in any tree are both \FPT\ parameterized by $k$.\footnote{Note that it may be possible to obtain these results via meta-theorems~\cite{Bonnet2020,Courcelle1990, Frick2001} combined with adapting the arguments of Bonnet et al.~\cite{Bonnet2017} to express the winning conditions in first-order logic formulas with $k$~quantifiers. However, compared to our results, these meta-theorems would not provide an optimal strategy for the winning player and they can entail much larger running times.} The next theorem is crucial for these results, and is interesting on its own since it allows to bound the diameter of the graph after Maker's first move, which could lead to other positive results for the $H$-game, and the proof technique could be generalized to other games on graphs.

For any graph $G$, $e=uv \in E(G)$, and $r \in \mathbb{N}$, let $B_G(e,r)=\{ww' \in E(G) \mid \max \{dist_G(u,w),$ $dist_G(v,w),dist_G(u,w'),dist_G(v,w')\} \leq r\}$. Given a graph $G$ and $X,Y \subseteq E(G)$ (with $|X|=|Y|+1$), Breaker wins the $H$-game in position $(G,X,Y)$ in $i\geq 1$ moves if it is Breaker's turn and Maker cannot create $H$ in at most $i-1$ moves after Breaker's next move.

\begin{theorem}\label{thm:balls}
Let $H$ be a connected graph, $G$ any graph, and $k$ a positive integer. Maker wins the $H$-game in $G$ in at most $k$ moves if and only if there exists $e \in E(G)$ such that Maker wins the $H$-game in $G[B_G(e,3^k)]$ in at most $k$ moves.
\end{theorem}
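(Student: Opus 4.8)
\textbf{Proof plan for Theorem~\ref{thm:balls}.}
The plan is to prove both directions, where the ``if'' direction is essentially immediate and the ``only if'' direction carries the real content. For the ``if'' direction, suppose Maker wins the $H$-game in $G[B_G(e,3^k)]$ in at most $k$ moves for some $e \in E(G)$. Since $G[B_G(e,3^k)]$ is a subgraph of $G$ and every copy of $H$ in the subgraph is also a copy of $H$ in $G$, Lemma~\ref{subgraph} immediately gives that Maker wins in $G$; the move bound is preserved since Maker simply follows the same strategy. The heart of the theorem is therefore the ``only if'' direction: if Maker can win in $G$ in at most $k$ moves, then there is some edge $e$ (to be taken as Maker's first move in a winning fast strategy) such that the entire game can be confined to the ball $B_G(e,3^k)$.

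The key idea for the ``only if'' direction is that a winning fast strategy for Maker need never claim edges far from her first move, and symmetrically Breaker need never defend far away either, so the whole game can be ``localized'' to a bounded ball. Concretely, I would fix a winning strategy for Maker that wins in at most $k$ moves, let $e$ be her first edge, and argue by induction on the number of remaining moves that both players can restrict attention to $B_G(e, 3^k)$. The natural approach is to bound how far any relevant edge can be from $e$: since $H$ is connected and has at most $k$ edges (as Maker wins in $k$ moves, the copy of $H$ she builds has at most $k$ edges and hence bounded diameter), every edge of the final copy of $H$ lies within distance roughly $k$ of $e$. The exponential radius $3^k$ (rather than a linear bound) is needed to absorb the interaction with Breaker: at each of the $k$ rounds, Maker may need to respond to Breaker's threats, and a threat at round $i$ can force Maker to consider edges a factor of~$3$ farther out, compounding multiplicatively over the rounds. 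I would make this precise by a round-by-round argument showing that after $j$ rounds, all ``active'' edges (those Maker has claimed or might need to claim, together with edges Breaker could use to threaten) lie in a ball of radius at most $3^{k-j}$ around the current partial structure, which stays inside $B_G(e,3^k)$.

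The crucial technical step, and the main obstacle, is showing that Breaker gains nothing by playing \emph{outside} the ball. The subtlety is that in the $H$-game the winning sets are copies of $H$ which can overlap, so Maker may be simultaneously threatening several copies of $H$, and Breaker must block all of them; one must rule out the possibility that a far-away Breaker move is somehow useful for blocking. The plan is to argue that any Breaker move outside the relevant ball can be ``ignored'' by Maker: since the copy of $H$ Maker ultimately completes is connected and of bounded size, it is contained in a bounded ball, and an edge of $G$ far from this ball is incident to no edge of any copy of $H$ that Maker is close to completing. Formally, I would show that if Maker has a winning strategy in $G$, she has one in which she only ever claims edges within $B_G(e,3^k)$, by taking her fast winning strategy and replacing any prescribed far-away move (which can only arise as a response to an irrelevant Breaker move) with an arbitrary safe move, using Lemma~\ref{super lemma}-style interchangeability or a direct strategy-stealing/pruning argument to justify that this does not hurt her. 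Once Maker's winning play is confined to the ball, restricting the board to $G[B_G(e,3^k)]$ cannot help Breaker (removing edges only helps Breaker by Lemma~\ref{subgraph}, yet Maker still wins there), which yields the desired conclusion. I expect the careful bookkeeping of the radius growth and the precise justification that Breaker's distant moves are irrelevant to be where most of the work lies.
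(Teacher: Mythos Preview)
Your ``if'' direction is correct and matches the paper. For the ``only if'' direction, however, you attempt the direct implication (start from Maker's fast win in $G$ and confine it to a ball), whereas the paper proves the contrapositive: assuming Breaker wins in $G[B_G(e,3^k)]$ in at most $k$ moves for \emph{every} $e\in E(G)$, it builds a global winning strategy for Breaker in $G$.

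Your direct approach has a real gap. The plan hinges on the parenthetical claim that a far-away Maker move ``can only arise as a response to an irrelevant Breaker move'', but this is not justified and need not hold: Maker may \emph{proactively} open a second front far from her first edge $e$ to create simultaneous threats, and the copy of $H$ she eventually completes can lie near that second front rather than near $e$. In that scenario $e$ is not in the final $H$, yet you cannot simply drop $e$ and conclude Maker would win one move faster---the timing against Breaker's replies matters. Consequently neither the choice of centre $e$ (Maker's first move) nor the localizability of Maker's play follows from what you wrote, and your round-by-round ``shrinking active region'' claim has no mechanism from Maker's side to make it true. The final sentence is also slightly circular: Lemma~\ref{subgraph} says removing edges helps \emph{Breaker}, so it does not by itself let you pass from $G$ to the ball on Maker's behalf.

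The paper's contrapositive argument avoids all of this by localizing the \emph{reactive} player. Breaker maintains a growing family of pairwise edge-disjoint local boards $G_1,\dots,G_s$, each contained in some $3^k$-ball, and plays his assumed local winning strategy independently on each. The invariant after round $i$ is that every Maker edge lies in one board together with its entire ball of radius $3^{k-i+1}$. When Maker's $i^{th}$ move has its $3^{k-i}$-ball inside an existing board, Breaker replies there; when it is far from all boards, Breaker opens a fresh board on that ball; and when it partially overlaps some $G_j$, Breaker first trims from $G_j$ every edge whose $2\cdot 3^{k-i}$-ball is not contained in $G_j$ (by the invariant, such edges are disjoint from the $3^{k-i+1}$-balls around earlier Maker edges, so trimming them preserves Breaker's winning position via Lemma~\ref{subgraph}) and then opens the fresh board. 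The inequality $3^{k-i+1} > 2\cdot 3^{k-i}$ is exactly why the factor $3$ per round---and hence the radius $3^k$---appears. Localizing Breaker is what makes the bookkeeping work; localizing Maker, as you propose, would require a substantially different argument that your plan does not supply.
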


\begin{proof}
First, if there exists $e \in E(G)$ such that Maker wins the $H$-game in $G[B_G(e,3^k)]$ in at most $k$ moves, then Maker wins the $H$-game in $G$ in at most $k$ moves by Lemma~\ref{subgraph}. Now, assume that Breaker wins the $H$-game in $G[B_G(e,3^k)]$ in at most $k$ moves for each $e \in E(G)$. We describe a winning strategy for Breaker in the $H$-game in $G$ that takes at most $k$ moves.

Let $e_1$ be the first edge claimed by Maker, and let $G_1=G[B_G(e_1,3^k)]$. Note that  Breaker wins the $H$ game in $(G_1,\{e_1\},\emptyset)$ in at most $k$ moves by the initial assumption. In particular, if Maker always claims an edge in $G_1$, then Breaker wins. 

First, Breaker answers to Maker claiming $e_1$ by following his winning strategy in $G_1$. Let $2 \leq i \leq k$ be the $i^{th}$ round of the game, before the $i^{th}$ move of Maker, and let $M_i=\{e_1,\dots,e_{i-1}\}$ be the edges claimed by Maker, and $B_i=\{f_1,\dots,f_{i-1}\}$ the edges claimed by Breaker. Assume, by induction on $i$, that there exist edge-disjoint subgraphs $G_1,\dots,G_{s_i}$ such that:

\begin{itemize}
\item for every $1 \leq j \leq s_i$, there exists $e_j \in E(G)$ such that $E(G_j) \subseteq B_G(e_j,3^k)$;

\item for every $e \in M_i$, there exists a unique $1\leq j^i_e \leq s_i$ such that $e \in E(G_{j^i_e})$ and, moreover, $B_G(e,3^{k-i+1}) \subseteq E(G_{j^i_e})$;

\item for every $1 \leq j \leq s_i$, Breaker wins the $H$-game in $(G_j,E(G_j) \cap M_i,E(G_j) \cap B_i)$ in at most $k-i+1$ moves.
\end{itemize}

The inductive hypothesis holds if $i=2$ by remarks above (in particular, $s_2=1$). Assume that the inductive hypothesis holds for $i\geq 2$. Let $e_i$ be the $i^{th}$ edge claimed by Maker.

\begin{itemize}
\item If there exists $1 \leq j \leq s_i$ such that $B_G(e_i,3^{k-i}) \subseteq E(G_{j})$, then Breaker answers by following his winning strategy in $G_j$. Note that, in this case, $j$ is unique since $G_1,\dots,G_{s_i}$ are edge-disjoint subgraphs by the inductive hypothesis for $i$. Then, the inductive hypothesis holds for $i+1$ with the same subgraphs $G_1,\dots,G_{s_i}$ (in particular, $s_i=s_{i+1}$).

\item If $B_G(e_i,3^{k-i}) \cap E(G_j) = \emptyset$ for all $1 \leq j \leq s_i$, then let $s_{i+1}=s_i+1$ and $G_{s_{i+1}}=G[B_G(e_i,3^{k-i})]$. Then, Breaker answers by following his winning strategy in $G_{s_{i+1}}$, which exists by the assumption that Breaker wins the $H$-game in $G[B_G(e,3^k)]$ in at most $k$ moves for each $e\in E(G)$, and since $G_1,\dots,G_{s_{i+1}}$ are edge-disjoint subgraphs by the inductive hypothesis for $i$ and the case we are in. Indeed, by Lemma~\ref{subgraph}, Breaker has a winning strategy in $G[B_G(e_i,3^{k-i})]$ in $k-i$ moves since he has one in $G[B_G(e_i,3^k)]$ in $k$ moves (if Maker cannot create $H$ in $k$ moves, then she clearly cannot do it in $k-i$ moves in a subgraph). Then, the inductive hypothesis holds for $i+1$.

\item Lastly, if there exists $\emptyset \neq J \subseteq \{1,\dots,s_i\}$ such that, for all $j \in J$, $B_G(e_i,3^{k-i}) \cap E(G_j) \neq  \emptyset$ and $B_G(e_i,3^{k-i}) \setminus E(G_j) \neq  \emptyset$, then let $s_{i+1}=s_i+1$ and $G_{s_{i+1}}=G[B_G(e_i,3^{k-i})]$. Now, for every $j \in J$, let $E_j=\{f \in E(G_j) \mid B(f,2\cdot3^{k-i}) \nsubseteq E(G_j)\}$. Note that, for every $f' \in M_i$ and $j\in J$, $E_j \cap B(f',3^{k-i}) = \emptyset$ by the second assumption of the inductive hypothesis for $i$. For all $j\in J$, let $G_j = G[E(G_j) \setminus E_j]$ (intuitively, the edges of $G_j$ that are ``too close'' to $G_{s_{i+1}}$ are removed from $G_j$), and note that $G_1,\dots,G_{s_{i+1}}$ are now edge-disjoint subgraphs since $G_1,\dots,G_{s_i}$ were edge-disjoint subgraphs by the inductive hypothesis for $i$. Now, Breaker plays his next move according to his winning strategy in $G_{s_{i+1}}$, which, as in the previous case, exists by the assumption that Breaker wins the $H$-game in $G[B_G(e,3^k)]$ in at most $k$ moves for each $e\in E(G)$, and since $G_1,\dots,G_{s_{i+1}}$ are edge-disjoint subgraphs. Then, the inductive hypothesis holds for $i+1$.
\end{itemize}

The inductive hypothesis and the strategy described above guarantee that Breaker wins, {\it i.e.}, Maker cannot win in $G$ in at most $k$ moves.
\end{proof}

With Theorem~\ref{thm:balls} in hand, we now have one of the main tools to prove our \FPT\ results, which rely on the fact that we only need to consider the ball (of edges) of bounded diameter in the length of the game centered at the first edge claimed by Maker.

\begin{corollary}\label{FPT-star-game}
For any graph $G$ and any fixed integer $\ell\geq 1$, deciding whether Maker wins the $K_{1,\ell}$-game in $G$ is \FPT\ parameterized by the length of the game.
\end{corollary}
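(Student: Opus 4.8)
The plan is to combine Theorem~\ref{thm:balls} with Lemma~\ref{lem:highdegree} to reduce the problem to a finite search over bounded-size graphs. The key observation is that if Maker is to win the $K_{1,\ell}$-game in at most $k$ moves, then necessarily $k \geq \ell$, since Maker must claim $\ell$ edges incident to a common vertex. Thus the parameter $k$ controls the length of the game directly.

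First I would invoke Theorem~\ref{thm:balls}: Maker wins the $K_{1,\ell}$-game in $G$ in at most $k$ moves if and only if there exists an edge $e \in E(G)$ such that Maker wins in the edge-induced subgraph $G[B_G(e,3^k)]$ in at most $k$ moves. This is valid since $K_{1,\ell}$ is connected. The crucial gain is that we need only analyze the game inside balls $B_G(e,3^k)$, one per edge $e$. However, such a ball need not itself be of bounded size, because a vertex could have arbitrarily high degree. Here is where Lemma~\ref{lem:highdegree} enters: if $G[B_G(e,3^k)]$ contains a vertex of degree at least $2\ell-1$, then Maker already wins the $K_{1,\ell}$-game there in exactly $\ell \leq k$ moves, so we may immediately declare a Maker win. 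Hence the only balls we must genuinely analyze are those with maximum degree at most $2\ell-2$.

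The main step is then to bound the size of $G[B_G(e,3^k)]$ under the assumption that its maximum degree is at most $2\ell-2$. Since every vertex of the ball is at edge-distance at most $3^k$ from $e$, and the number of vertices reachable within distance $r$ in a graph of maximum degree $\Delta$ is at most $\Delta^{r+1}$-ish (a standard bounded-degree ball-growth estimate), the number of vertices and edges of $G[B_G(e,3^k)]$ is bounded by a function $f(k,\ell)$ depending only on $k$ and the fixed $\ell$. For each such bounded-size graph, deciding the outcome of the $K_{1,\ell}$-game in at most $k$ moves can be done by brute-force game-tree search in time depending only on $f(k,\ell)$, hence only on $k$ and $\ell$. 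Iterating over all $|E(G)|$ choices of $e$ yields total running time $|E(G)| \cdot g(k,\ell)$ for some computable $g$, which is \FPT\ parameterized by $k$ (with $\ell$ fixed).

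The main obstacle I expect is handling the high-degree case cleanly: the ball $G[B_G(e,3^k)]$ is not of bounded size in general, so one cannot naively brute-force it. The fix via Lemma~\ref{lem:highdegree} is clean because a single high-degree vertex instantly gives Maker a win, letting us assume bounded degree everywhere in the remaining balls. The second, more routine obstacle is verifying the bounded-degree ball-size estimate and confirming that the brute-force game-tree search over a graph of size $f(k,\ell)$ runs in time bounded purely by $k$ and $\ell$; this is a standard argument, since the game lasts at most $2k$ moves and at each move there are at most $|E(G[B_G(e,3^k)])| \leq f(k,\ell)$ choices, giving a search tree of size at most $f(k,\ell)^{2k}$.
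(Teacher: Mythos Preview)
Your proposal is correct and follows essentially the same approach as the paper: invoke Theorem~\ref{thm:balls} to reduce to balls of radius $3^k$, use Lemma~\ref{lem:highdegree} to dispose of the high-degree case, and then brute-force over the resulting bounded-degree (hence bounded-size) balls. The only cosmetic difference is the order of the two reductions: the paper first checks $\Delta(G)\geq 2\ell-1$ globally (declaring a Maker win immediately) and only then passes to balls, whereas you pass to balls first and check the degree condition inside each ball; both are valid and yield the same running-time shape $|E(G)|\cdot g(k,\ell)$.
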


\begin{proof}
Consider the $K_{1,\ell}$-game, for a positive constant $\ell$ (recall that $H$ is a fixed graph in the $H$-game), in any graph $G$. Let $k$ be the length of the game. If there is a vertex of degree at least $2\ell-1$, then Maker wins in $\ell$ moves by Lemma~\ref{lem:highdegree}. Hence, we can assume that the maximum degree is at most $2\ell-2$.
By Theorem \ref{thm:balls}, Maker wins in $G$ in $k$ moves if and only if she wins in $k$ moves in one of the balls $B(e,3^k)$ for some edge $e\in E(G)$.
Since $\Delta(G)\leq 2\ell-2$, for any edge $f\in E(G)$, the ball $B(f,3^k)$ has size at most $(2(\Delta(G)-1))^{3^k}=(4\ell-6)^{3^k}$, {\it i.e.}, a function of $k$ since $\ell$ is a constant. Therefore, one can check if Maker wins by first checking the maximum degree, and then checking the outcomes of all possible games in the $|E(G)|$ balls (of edges) of diameter $3^k$ which have size bounded by a function $f(k)$. Indeed, this leads to an \FPT\ algorithm since, in any graph of size bounded by a function $f(k)$, the output of the $H$-game in at most $k$ moves can be determined by an exhaustive search in time $f'(k)$ for some computable function $f'$ (the length of the game is $k$, and the number of possible moves at each step is at most the number of edges which is at most $f(k)$).
\end{proof}

Theorem~\ref{thm:balls} combined with the particular structure of trees leads to the following result.

\begin{theorem}
For any connected graph $H$ and any tree $T$, deciding whether Maker wins the $H$-game in $T$ is \FPT\ parameterized by the length of the game.
\end{theorem}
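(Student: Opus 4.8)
The plan is to combine Theorem~\ref{thm:balls} with the fact that, in a tree, the ball of edges $B_T(e,3^k)$ around any edge $e$ has a structure controlled only by its maximum degree. By Theorem~\ref{thm:balls}, Maker wins the $H$-game in $T$ in at most $k$ moves if and only if there exists an edge $e\in E(T)$ such that Maker wins the $H$-game in $T[B_T(e,3^k)]$ in at most $k$ moves. So it suffices to give an \FPT\ procedure that, for a single edge $e$, decides whether Maker wins in $T[B_T(e,3^k)]$, and then iterate over all $|E(T)|$ edges $e$. The overall running time will then be $|E(T)|$ times the cost of resolving one ball, so the whole task reduces to bounding the size of a single ball by a function of $k$ (after possibly dealing with high-degree vertices separately, as in Corollary~\ref{FPT-star-game}).

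First I would handle vertices of large degree. Unlike the $K_{1,\ell}$-game, a general connected $H$ need not be winnable just because some vertex has high degree, so the clean degree bound of Lemma~\ref{lem:highdegree} is not directly available. The key observation is that $H$ is \emph{fixed}, so it has a fixed number of vertices $h=|V(H)|$ and a fixed maximum degree $\Delta(H)$. Since $H$ is connected and Maker makes at most $k$ moves, any copy of $H$ Maker could build uses at most $k$ edges, and hence $h\le k+1$; in particular $h$ is bounded by $k$. Within a tree, any copy of $H$ that Maker claims must itself be a tree (a subgraph of a tree is a forest, and a connected copy of $H$ forces $H$ to be a tree on at most $k+1$ vertices). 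So the first reduction is: if $H$ is not a tree, Breaker trivially wins, and otherwise we may assume $H$ is a tree with at most $k+1$ vertices.

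Next I would bound the relevant part of the tree. The natural obstacle is that a ball $B_T(e,3^k)$ in a tree of unbounded maximum degree can be arbitrarily large. To control this, I would argue that high-degree vertices can be ``pruned'' without changing the outcome: around any vertex $v$, Maker can use at most $\Delta(H)\le k$ of the edges at $v$ in a single copy of $H$, and Breaker, playing a pairing strategy on the surplus pendant-like branches (in the spirit of Lemma~\ref{lem:lowdegree} and the cut operation of Lemma~\ref{lem:decomposition}), can neutralize all but a bounded number of them. More precisely, the plan is to show that one may assume the maximum degree of the relevant subtree is bounded by a function of $k$ (roughly $2\Delta(H)$ suffices, since excess sibling-edges at any vertex can be paired off by Breaker and never contribute to a copy of $H$ that is realizable in $k$ moves). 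Once the maximum degree is bounded by some $g(k)$, the ball $B_T(e,3^k)$ contains at most $(2(g(k)-1))^{3^k}$ edges, i.e.\ a function of $k$ alone, exactly as in the proof of Corollary~\ref{FPT-star-game}.

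With the ball size bounded by a computable function $f(k)$, the final step is routine: in a graph of size $f(k)$, the outcome of the $H$-game in at most $k$ moves is decided by exhaustive game-tree search in time $f'(k)$ for some computable $f'$, since the game lasts $k$ rounds and each move has at most $f(k)$ choices. Iterating this over all $e\in E(T)$ and invoking Theorem~\ref{thm:balls} yields a total running time of $|E(T)|\cdot f'(k)$, which is \FPT\ in $k$. I expect the main obstacle to be the pruning argument that bounds the maximum degree of the relevant subtree: one must verify carefully that Breaker's pairing on the surplus branches at each high-degree vertex is globally consistent and genuinely prevents Maker from ever completing a copy of $H$ through those pruned edges, so that deleting them preserves the game's outcome. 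The tree structure is what makes this tractable, since the branches hanging off a vertex are edge-disjoint subtrees on which Breaker can respond independently, echoing the decomposition used in Lemma~\ref{lem:decomposition}.
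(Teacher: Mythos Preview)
Your high-level plan coincides with the paper's: invoke Theorem~\ref{thm:balls} to localize to a ball $T_e=T[B_T(e,3^k)]$, shrink each $T_e$ to size bounded by a computable function of $k$, then brute-force the game. The gap is in your shrinking step.

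You propose to bound the maximum degree of $T_e$ by ``roughly $2\Delta(H)$'' via a Breaker pairing on surplus sibling edges at a high-degree vertex~$v$. This does not work. Pairing the edges $vc_i$ only limits how many edges \emph{incident to $v$} Maker can claim; but Maker needs at most $\Delta(H)$ such edges anyway, so the pairing is no constraint on her. What matters is the content of the subtrees hanging off $v$: if the subtrees $S_1,\dots,S_N$ are pairwise non-isomorphic, deleting any one of them can change the outcome (it may be the unique branch carrying the structure Maker needs), and no pairing of the edges $vc_i$ tells you which branches are safe to delete. Concretely, take $H=P_5$ and let $v$ have a thousand leaves together with a single pendant path of length ten: pruning $v$ down to degree~$4$ by your rule may well discard the one branch that matters. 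The analogy with Lemma~\ref{lem:highdegree} fails because, for general $H$, a high-degree vertex does not by itself give Maker a win, so there is no dichotomy ``either Maker wins trivially, or the degree is small''.

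The paper's fix is exactly the missing idea: prune only \emph{duplicate} subtrees. Rooting $T_e$ at depth $\le 3^k$, one processes levels from the leaves up; at each vertex, among children whose rooted subtrees are isomorphic, keep at most $2k$ copies of each isomorphism class and delete the rest. Since at most $2k$ edges are claimed in the whole game, at most $2k$ of these isomorphic copies can ever be touched, so the surplus copies are genuinely irrelevant to the outcome. The resulting degree bound at level $i+1$ is $2k\cdot g_i(k)$, where $g_i(k)$ counts rooted trees of size at most $n_i(k)$; this is far larger than $2\Delta(H)$ but is still a function of~$k$, which is all you need. Your framework is sound, but the degree-pruning step has to be replaced by this isomorphism-type kernelization.
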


\begin{proof}
First, note that if $H$ is not a tree, then Breaker trivially wins the $H$-game in $T$, and so, we can assume that $H$ is a tree. We prove that the $H$-game parameterized by the length of the game $k$ admits a kernel in $T$. That is, from $T$, we build, in polynomial time, a forest $F$ of size at most a function of $k$ (precised below) such that Maker wins in $T$ in at most $k$ moves if and only if there exists a connected component of $F$ in which Maker wins in at most $k$ moves.

First, for every edge $e \in E(T)$, let $T_e$ be the subtree of $T$ induced by the edges at distance at most $3^k$ from $e$, {\it i.e.}, $T_e$ is the subtree induced by $B(e,3^k)$. Let $F$ be the forest that consists of the disjoint union of the $T_e$'s, $e \in E(T)$. By Theorem~\ref{thm:balls}, Maker wins in $T$ in at most $k$ moves if and only if there exists $e \in E(T)$ such that Maker wins in $T_e$ in at most $k$ moves.

The {\it depth} of a rooted tree is the maximum distance from its root to a leaf. For every $e \in E(T)$, let us root $T_e$ in such a way that it has depth $d_e \leq 3^k$ (this is possible by the definition of $T_e$). A vertex $v \in V(T_e)$ has {\it level} $i\geq 0$ if the subtree of $T_e$ rooted in $v$ has depth $i$. Let us iteratively, for $i=1$ to $d_e$, replace $T^{i-1}_e$ ($T_e=T^0_e$) by a tree $T^i_e$ such that: Maker wins in $T^{i-1}_e$ in at most $k$ moves if and only if Maker wins in at most $k$ moves in $T^i_e$; and, for every vertex $v \in V(T^i_e$) at level $i$, the subtree of $T^i_e$ rooted in $v$ has size at most $n_i(k)$ (a function of $k$ whose recursive definition is given below). 

First, for $i=1$, for every vertex $v$ at level $1$ in $T_e$, ({\it i.e.}, all children of $v$ are leaves), if $v$ has more than $2k$ children, then remove all but $2k$ of its children. Let $T^1_e$ be the obtained tree. By construction, every vertex $v \in V(T^1_e)$ at level $1$ is the root of a subtree of size at most $n_1(k)=2k+1$. Moreover, since, for every vertex $v$ at level $1$ in $T_e$, at most $2k$ edges between $v$ and leaves can be claimed (as the length of the game is $k$), then the output of the $H$-game is the same in $T_e$ and $T^1_e$. 

Now, by induction on $i\geq 1$, let us assume that we have built a tree $T^i_e$ such that Maker wins in $T^i_e$ in at most $k$ moves if and only if Maker wins in at most $k$ moves in $T_e$; and, for every vertex $v \in V(T^i_e$) at level $i$, the subtree of $T^i_e$ rooted in $v$ has size at most $n_i(k)$. Let $g_i(k)$ be the number of rooted trees of depth at most $i$ and of size at most $n_i(k)$. 
For every $v \in V(T^i_e)$ at level $i+1$, let $S_1,\dots,S_r$ be the subtrees rooted in the children of $v$ (note that each of these subtrees has depth at most $i$ and size at most $n_i(k)$). For every possible rooted subtree $S$ of depth $i$ and size at most $n_i(k)$, if there are more than $2k$ copies of $S$ in the multiset of trees $\{S_1,\dots,S_r\}$, then remove all but $2k$ copies of $S$. Let $T^{i+1}_e$ be the resulting tree (after having done the above process for every vertex at level $i+1$ of $T^i_e$). In $T^{i+1}_e$, every vertex at level $i+1$ has at most $2kg_i(k)$ children and those children are the roots of subtrees of size at most $n_i(k)$, and hence, every vertex at level $i+1$ is the root of a subtree of size at most $n_{i+1}(k)=2kg_i(k)n_i(k)$. Moreover, for every vertex $v$ at level $i+1$ in $T^i_e$, at most $2k$ edges in the subtree rooted at $v$ can be claimed. Therefore, the output of the $H$-game is the same in $T^i_e$ and $T^{i+1}_e$. 

After the above process has been done for $i=d_e \leq 3k$ for each subtree $T_e$, $e \in E(T)$, $F$ consists of the disjoint union of the trees $T^{3k}_e$, $e \in E(T)$, each of size at most $n_{3k}(k)$, and Maker wins in $T$ in at most $k$ moves if and only if she wins in at most $k$ moves in some connected component of $F$. While two of these subtrees are isomorphic, let us remove one of the two isomorphic subtrees. This clearly preserves the fact that Maker wins the $H$-game in $T$ in at most $k$ moves if and only if she wins in at most $k$ moves in some connected component of $F$. Moreover, eventually, $F$ has size at most $g_{3k}(k)n_{3k}(k)$, {\it i.e.}, this is the desired kernel. 

To conclude, this leads to an \FPT\ algorithm since, in any graph of size bounded by a function $f(k)$, the output of the $H$-game in at most $k$ moves can be determined by an exhaustive search in time $f'(k)$ for some computable function $f'$ (the length of the game is $k$, and the number of possible moves at each step is at most the number of edges which is at most $f(k)$).
\end{proof}

\section{Further Work}\label{sec:conclusion}

Since the $H$-game is \PSPACE-complete when $H$ is a tree, we investigated the complexity of the $H$-game when $H$ is a member of different subclasses of trees, leading to positive results. Another natural direction is to consider the case when $H$ is a cycle. While we were unable to resolve this case, we prove that the related arboricity-$k$ game is polynomial-time solvable. For any graph $G$, the arboricity of $G$, denoted by $\text{ar}(G)$, is the minimum number of forests into which its edges can be partitioned. In the arboricity-$k$ game played on the edge set of a graph, where $k\geq 2$, Maker wins if her graph at the end of the game has arboricity at least $k$. It is particularly interesting when $k=2$ since this is the cycle game (see, {\it e.g.},~\cite{bednarska2008odd}), the game in which Maker's goal is to claim a cycle, {\it i.e.}, $\mathcal{F}$ is the set of all the cycles in the graph.

\begin{theorem} \label{prop:arboricity}
    For any graph $G$, Maker wins the arboricity-$k$ game in $G$ if and only if $k\leq \left\lceil\text{ar}(G)/2 \right\rceil.$ Moreover, this can be decided in polynomial time.
\end{theorem}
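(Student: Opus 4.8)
The plan is to reformulate everything through Nash--Williams' formula for arboricity, namely $\text{ar}(G')=\max_{S}\lceil e_{G'}(S)/(|S|-1)\rceil$, where $e_{G'}(S)$ is the number of edges of $G'$ inside $S$ and the maximum is over vertex subsets $S$ with $|S|\ge 2$. Since at the end of the game all edges are claimed, the statement is equivalent to showing that the largest arboricity Maker can force for her own subgraph $M$ is exactly $\lceil \text{ar}(G)/2\rceil$. Writing $a=\text{ar}(G)$, I would split the proof into a Maker lower bound ($\text{ar}(M)\ge\lceil a/2\rceil$ is forceable) and a Breaker upper bound ($\text{ar}(M)\le\lceil a/2\rceil$ is forceable); together these give that Maker wins precisely when $k\le\lceil a/2\rceil$.

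For the lower bound, by Nash--Williams there is a set $S_0$ with $e_G(S_0)\ge (a-1)(|S_0|-1)+1$. Maker's strategy is simply to claim an unclaimed edge inside $G[S_0]$ whenever one exists. A standard injection argument then shows she ends with at least $\lceil e_G(S_0)/2\rceil$ of these edges: mapping each $G[S_0]$-edge claimed by Breaker to the $G[S_0]$-edge Maker claimed on the immediately preceding (Maker) move is well defined and injective, so Breaker takes at most as many as Maker. A short computation, splitting into the cases $a$ even and $a$ odd, shows $\lceil e_G(S_0)/2\rceil\ge(\lceil a/2\rceil-1)(|S_0|-1)+1$, and hence, applying Nash--Williams to $M$ with the witness $S_0$, Maker's subgraph has arboricity at least $\lceil a/2\rceil$.

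For the upper bound, set $t=\lceil a/2\rceil$ and let $N$ be the $t$-fold union of the graphic matroid of $G$, whose independent sets are exactly the subgraphs of arboricity at most $t$. Decomposing $G$ into $a$ forests and grouping them into a batch of $\lceil a/2\rceil$ and a batch of $\lfloor a/2\rfloor$ forests yields a partition $E(G)=E_1\sqcup E_2$ with $\text{ar}(G[E_i])\le t$, i.e.\ two disjoint $N$-independent sets whose union is $E(G)$. The key observation is that Breaker keeping Maker's set independent in $N$ is equivalent, by matroid duality, to Breaker claiming a spanning set of the dual matroid $N^\ast$; this is the Shannon switching game in which Breaker plays the ``Short'' role as the second player. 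By Lehman's theorem~\cite{lehman1964solution}, the second player wins this game if and only if $N^\ast$ has two disjoint bases, which is equivalent to $E(G)$ being a union of two bases of $N$; extending $E_1,E_2$ to bases of $N$ shows this condition holds. Hence Breaker can force $\text{ar}(M)\le t=\lceil a/2\rceil$. This upper bound is the main obstacle, and the crux is exactly packaging it as a Shannon switching game on the dual so that Lehman's theorem applies; the rest is the forest-decomposition bookkeeping.

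Finally, for the complexity claim, $\text{ar}(G)$ is computable in polynomial time (via matroid partition / network-flow methods), after which deciding the winner reduces to testing whether $k\le\lceil\text{ar}(G)/2\rceil$, which is immediate.
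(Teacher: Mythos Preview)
Your proof is correct and self-contained, but it differs from the paper's in both directions.

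For the Maker lower bound, the paper's primary argument is strategy stealing: if Breaker could force $\text{ar}(M)<\ell/2$, Maker could steal that strategy to force $\text{ar}(B)<\ell/2$, contradicting the subadditivity $\text{ar}(M)+\text{ar}(B)\ge \text{ar}(G)$. Only afterwards, in a remark, does the paper note the explicit Nash--Williams strategy you use. Your route has the advantage of being constructive from the start, and your case split on the parity of $a$ nails down the ceiling without appealing to integrality after the fact; the paper's route is slicker but non-explicit.

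For the Breaker upper bound, the paper simply cites~\cite[Proposition~17]{nenadov2016threshold}. Your argument --- viewing ``keep Maker independent in the $t$-fold graphic-matroid union $N$'' as the Short side of the Shannon switching game on $N^\ast$ and then invoking Lehman's matroid theorem with the witness $E(G)=E_1\cup E_2$ coming from a forest decomposition --- is a genuinely different and fully self-contained proof. It buys you an explicit (and efficiently computable) Breaker strategy via the matroid framework, at the cost of importing a bit more matroid machinery than the paper needs. Both the duality step ($M$ independent in $N$ iff its complement spans $N^\ast$) and the translation ``two disjoint bases of $N^\ast$ iff $E(G)$ is a union of two bases of $N$'' are correct as you state them.

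The polynomial-time claim is handled identically in both.
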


\begin{proof}
Let $\ell:=\text{ar}(G)$. It was shown in the proof of~\cite[Proposition 17]{nenadov2016threshold} that Breaker can ensure that Maker's graph at the end of the game has arboricity at most $\left\lceil\ell/2 \right\rceil$. On the other hand, if we denote Maker's and Breaker's graphs at the end of the game by $M$ and $B$, respectively, then this defines a 2-partition of the edge set of $G$, and so, $\text{ar}(M)+\text{ar}(B)\geq \ell$.

We can apply strategy stealing to ensure the existence of a strategy for Maker that ensures her graph has high arboricity. Indeed, suppose that Breaker has a strategy to keep Maker's arboricity strictly below $\ell/2$. Then, by strategy stealing, Maker as the first player can apply the same strategy to keep Breaker's arboricity strictly below $\ell/2$. But, these two strategies put against each other would violate $\text{ar}(M)+\text{ar}(B)\geq \ell$, a contradiction. Hence, Maker can ensure that $\text{ar}(M)\geq \ell/2$ at the end of the game. Since $\text{ar}(M)$ is an integer, we are done.

Hence, to determine the outcome of the arboricity-$k$ game, it is enough to determine $\text{ar}(G)$, which can be done in polynomial time~\cite{picard1982network}.
\end{proof}

\begin{remark} In the proof of Theorem~\ref{prop:arboricity}, we use strategy stealing to prove the existence of a strategy for Maker, but this does not provide us with an explicit strategy. However, we can use the Nash-Williams arboricity theorem which gives $\text{ar}(G) = \left\lceil\max_{G' \subseteq G} \frac{|E(G')|}{|V(G')| - 1} \right\rceil$. Therefore, there exists $G'\subseteq G$ with $\frac{|E(G')|}{|V(G')|-1} > \text{ar}(G) - 1$. If Maker restricts her playing to $G'$ for as long as there are unclaimed edges in $G'$, she will claim at least $|E(G')|/2$ edges in $G'$. This ensures that $ar(M)\geq \frac{\ell}{2}$, which makes this explicit strategy optimal.
\end{remark}

Still in regards to the $H$-game, it would be interesting to know the order and/or size of the smallest graph $H$ for which the $H$-game is \PSPACE-complete. Further, we only have positive results for the $H$-game played in trees. We wonder for which graphs $H$, the $H$-game in trees can be solved in polynomial time. Also, as the $H$-game is \PSPACE-complete in graphs of diameter at most~$6$, and Maker-Breaker games, in general, are \W[1]-hard parameterized by the length of the game~$k$~\cite{Bonnet2017}, 
is the $H$-game \W[1]-hard parameterized by $k$? 

In general, it would be intriguing to know the complexity of other classic Maker-Breaker games played on edge sets of general graphs, such as the Hamiltonicity game.
Lastly, what about these games for other conventions like Avoider-Enforcer?

\bibliographystyle{plain}
\bibliography{references}

\end{document}